\documentclass[article]{imsart}
\usepackage{grffile}

\usepackage{amsmath,amsthm,amsfonts,amssymb,amscd}
\RequirePackage[numbers]{natbib}
\RequirePackage[colorlinks,citecolor=blue,urlcolor=blue]{hyperref}
\RequirePackage{graphicx}
\usepackage{subfig}
\usepackage{comment}
\usepackage{natbib}
\usepackage[margin=1in]{geometry}
\usepackage{mathtools}
\newcommand{\floor}[1]{\lfloor #1 \rfloor}
\usepackage{enumerate}
\usepackage{graphicx}
\usepackage{listings}
\usepackage{float}
\allowdisplaybreaks
\usepackage{natbib}
\usepackage{subfig}

\usepackage{lipsum}
\usepackage{epsfig}
\usepackage{bm}
\usepackage{multicol}
\usepackage{url}
\usepackage[utf8]{inputenc}
\usepackage{mathtools}
\usepackage{extarrows}

\usepackage{color}
\definecolor{dkgreen}{rgb}{0,0.6,0}
\definecolor{gray}{rgb}{0.5,0.5,0.5}
\definecolor{mauve}{rgb}{0.58,0,0.82}

\usepackage{booktabs}
\usepackage{longtable}
\usepackage{array}
\usepackage{multirow}
\usepackage{wrapfig}
\usepackage{colortbl}
\usepackage{pdflscape}
\usepackage{tabu}
\usepackage{threeparttable}
\usepackage{threeparttablex}
\usepackage[normalem]{ulem}
\usepackage{makecell}
\usepackage{xcolor}

\usepackage[section]{placeins}

\newcommand{\blue}[1]{{\leavevmode\color{black}{#1}}}

\newcommand{\wzi}{\widehat{Z_i}}

\newcommand{\abs}[1]{\left\lvert#1\right\rvert}
\newcommand{\norm}[1]{\left\lVert#1\right\rVert}

\newcommand{\calB}{\mathcal B}
\newcommand{\calX}{\mathcal X}
\newcommand{\calZ}{\mathcal Z}

\newcommand{\calS}{\mathcal S}

\newcommand{\calF}{\mathcal F}

\newcommand{\bI}{ {\boldsymbol I} }

\newcommand{\bk}{ {\boldsymbol k} }
\newcommand{\bK}{ {\boldsymbol K} }

\newcommand{\bM}{ {\boldsymbol M} }

\newcommand{\bs}{ {\boldsymbol s} }

\newcommand{\bt}{ {\boldsymbol t} }

\newcommand{\bx}{ {\boldsymbol x} }
\newcommand{\bX}{ {\boldsymbol X} }

\newcommand{\bY}{ {\boldsymbol Y} }

\newcommand{\eps}{\epsilon}

\newcommand{\given}{\,|\,}

\newcommand{\bgamma}{ {\boldsymbol \gamma} }

\newcommand{\bmu}{ {\boldsymbol \mu} }

\newcommand{\bSigma}{ {\boldsymbol \Sigma} }

\newcommand{\iid}{\overset{\mbox{i.i.d.}} \sim}
\newcommand{\ind}{\overset{\mbox{ind}} \sim}

\newcommand{\logit}{\text{logit}}

\newtheorem{theorem}{Theorem}
\newtheorem{definition}{Definition}
\newtheorem{lemma}{Lemma}

\usepackage{xr}

\makeatletter
\newcommand*{\addFileDependency}[1]{% argument=file name and extension
	\typeout{(#1)}% latexmk will find this if $recorder=0
	% however, in that case, it will ignore #1 if it is a .aux or 
	% .pdf file etc and it exists! If it doesn't exist, it will appear 
	% in the list of dependents regardless)
	%
	% Write the following if you want it to appear in \listfiles 
	% --- although not really necessary and latexmk doesn't use this
	%
	\@addtofilelist{#1}
	%
	% latexmk will find this message if #1 doesn't exist (yet)
	\IfFileExists{#1}{}{\typeout{No file #1.}}
}\makeatother

% settings
\pubyear{2023}
\volume{0}
\issue{0}
\firstpage{0}
\lastpage{0}
%\arxiv{2010.00000}

\startlocaldefs
\numberwithin{equation}{section}
\theoremstyle{plain}

\endlocaldefs

\makeatletter
\newcommand\footnoteref[1]{\protected@xdef\@thefnmark{\ref{#1}}\@footnotemark}
\makeatother

\begin{document}

\begin{frontmatter}
\title{Direct Bayesian \blue{Linear} Regression for Distribution-valued Covariates%\support{Support information of the article.}
}
\runtitle{Bayesian Distribution Regression}

\begin{aug}
\author{\fnms{Bohao} \snm{Tang}%\thanksref{t1,t2}
\ead[label=e1]{bhtang127@gmail.com}}\footnote{\label{note1} These authors contributed equally to the manuscript.}

\address{Department of Biostatistics\\ Johns Hopkins University\\
\printead{e1}}

\author{\fnms{Sandipan} \snm{Pramanik}%\thanksref{t1,t2}
\ead[label=e2]{spraman4@jhmi.edu}}\blue{$^*$}%$^\footnotemark[\ref{note1}]$

\address{Department of Biostatistics\\ Johns Hopkins University\\
\printead{e2}}

\author{\fnms{Yi} \snm{Zhao}%\thanksref{t3}
\ead[label=e3]{zhaoyi1026@gmail.com}}

\address{Department of  Biostatistics \& Health Data Science\\ Indiana University\\
\printead{e3}}

\author{\fnms{Brian} \snm{Caffo}
\ead[label=e4]{bcaffo@gmail.com}}

\address{Department of Biostatistics\\ Johns Hopkins University\\
\printead{e4}}

\author{\fnms{Abhirup} \snm{Datta}
\ead[label=e5]{abhidatta@jhu.edu}}

\address{Department of Biostatistics\\ Johns Hopkins University\\
\printead{e5}}

%\thankstext{t1}{Some comment}
%\thankstext{t2}{First supporter of the project}
%\thankstext{t3}{Second supporter of the project}
\runauthor{B. Tang and S. Pramanik et al.}

\end{aug}

\begin{abstract}
 In this manuscript, we study scalar-on-distribution regression; that is, instances where  subject-specific distributions or densities are the covariates related to a scalar outcome via a regression model. In practice, only repeated measures are observed from those covariate distributions and common approaches first use these to estimate subject-specific density functions, which are then used as covariates in standard scalar-on-function regression. We propose a simple and direct method for \blue{linear} scalar-on-distribution regression that circumvents the intermediate step of estimating subject-specific covariate densities. We show that one can directly use the observed repeated measures as covariates and endow the regression function with a Gaussian process prior to obtain closed form or conjugate Bayesian inference. Our method subsumes the standard Bayesian non-parametric regression using Gaussian processes as a special case, \blue{corresponding to covariates being Dirac-distributions}.
 The model is also invariant to any transformation or ordering of the repeated measures. Theoretically, we show that, despite only using the observed repeated measures from the true density-valued covariate that generated the data, the method can achieve an optimal estimation error bound of the regression function. \blue{The theory extends beyond i.i.d. settings %repeated measures
 to accommodate certain forms of within-subject dependence %(geometrically absolutely-regular mixing) 
 among the repeated measures.}
 To our knowledge, this is the first theoretical study on Bayesian  regression using distribution-valued covariates. \blue{We propose numerous extensions including a scalable implementation using low-rank Gaussian processes and a generalization to non-linear scalar-on-distribution regression.} Through simulation studies, we demonstrate that our method performs \blue{substantially} better than approaches that require an intermediate density estimation step \blue{especially with a small number of repeated measures per subject}. \blue{We apply our method to study association of % analysis of \blue{age and} 
 age with activity counts.} %dataset demonstrates that distributional regression performs better than functional regression  implying that time-invariant summaries (distributions) of activity can be more associated with age than temporal patterns of activity.}
\end{abstract}

%\begin{keyword}[class=MSC]
%\kwd[Primary ]{60K35}
%\kwd{60K35}
%\kwd[; secondary ]{60K35}
%\end{keyword}

\begin{keyword}
\kwd{Bayesian}
\kwd{Gaussian process}
\kwd{distribution regression}
\kwd{minimax}
\end{keyword}
\tableofcontents
\end{frontmatter}

\section{Introduction}
\label{sec:intro}
%Outline: 
%Motivation: Increased abundance of replicates from sensors, examples, why distributional regression is important (as opposed to summary measures). 
%Current approaches. Limitations. BDR, Deep sets, histograms, fully non-para dist reg, KDE. 
%Motivation of Bayesian approach. Bayesian non-p regression is very popular with reasonable rates. This is a generalization. Reduces to standard GP regression. 
%Direct. 
%Method.
%Exchangeability reduces to dist reg. 
%Formulation. 
%Separating DGP from analysis model.
%Transformation invariance
%Use for functional data
%Non-linear extension

Regression with distribution- or density-valued objects as outcomes, covariates, or both, is called ``{\it distribution regression}''~\citep{petersen2016functional,szabo2016learning, oliva2014fast, fang2020optimal, law2018bayesian,chen2021wasserstein}. % with the associated terms: ``scalar on distribution'', ``distribution on distribution'' and ``distribution on scalar'' used for the specific outcome and predictor cases. 
This manuscript considers regression with a scalar outcome and a distribution-valued covariate. Applications of such {\it scalar-on-distribution} regression include studying association of a subject's health endpoints with the entire distribution of air pollution concentrations that the subject is exposed to, or using posterior distributions of variables  from a Bayesian analysis as covariates in a secondary analysis.

In practice, no collected data is truly density-valued, and one only observes a finite set of repeated \blue{measures or} samples from the covariate distribution (e.g., repeated measurements of air pollution concentrations from sensors, or a collection of MCMC samples as estimates of true posterior distributions). %, which we denote as ${Z}$. 
That is, the observed covariate of subject $i$ represents a conceptual two-stage sampling procedure, first of a subject-specific distribution \blue{$Z_i$} from a distribution on distributions, and secondly of samples (repeated measures) $\{\bx_{ij}\}$ from the resulting subject-specific distribution $Z_i$.  
%However, the actual sampling mechanism might not occur in strictly delineated stages. 
Even though the outcome $y_i$ depends on the entire distribution $Z_i$, only the samples $\{\bx_{ij}\}$ are observed. 
 A common practice for the analysis of distribution-valued covariates is thus to estimate %the %distribution \blue{$\widehat Z_i$ (via 
 \blue{a density ${\widehat{dZ_i}}$ of the distribution $Z_i$ from the samples $\{\bx_{ij}\}$ using kernel density estimation (KDE)} and to utilize the classic scalar-on-functional regression model \cite{ramsilver} \blue{with the density functions $\widehat{dZ_i}$ as covariates, i.e.,}  
 $\mathbb E_{\varepsilon_i} y_i = \int f(\bx) \blue{\widehat{dZ_i}(\bx)}d\bx$ to estimate the unknown non-linear regression function $f$ \citep{talska2021compositional}.
%Hence, classic methods for distribution regression first require an intermediate step of estimating $Z_i$ via density functions through kernel density estimators and then utilize functional regression techniques with the estimated density functions.  
%proved the consistency of such method but no finite-sample error bounds were provided. 
However, kernel methods often do not perform well in density estimation due to boundary effects~\citep{petersen2016functional}. \blue{We will also see they perform poorly when the number of repeated measures per subject is small or if the true covariate distributions are highly discrete.} \citet{augustin2017modelling} adopted a similar approach for accelerometer readings by summarizing the repeated measures into a histogram, i.e., a vector of relative frequencies for each bin, which is then treated as the covariates in a linear regression model. Theoretical properties of the method were not explored. Additionally, histograms introduce unnecessary discretization and  %They do not take the underlying data generation process into consideration either and thus no theoretical guarantee exists.
any density or histogram based approach to distribution regression involve additional tuning (bandwidth) parameter selection steps. 
\blue{These approaches can be thought of as {\em linear} scalar-on-distribution regression as the analysis model used is simply approximating a true data generation process where the mean of the response $\mathbb E_{\varepsilon_i} y_i = \int f(\bx) Z_i(d\bx)$ is a linear functional of the distribution $Z_i$, although the regression coefficient $f$ specifying the functional can be non-linear.} 
\citet{poczos2013distribution,oliva2014fast} developed \blue{non-linear scalar-on-distribution regression where the kernel density estimates are related to the scalar outcome by a second kernel with density-valued inputs. These methods, still relying on a first-stage density estimation,  inherits the aforementioned issues of KDE.} 

Another direction of performing distribution regression, \blue{that circumvents density estimation} is to embed the distributions into a reproducing kernel Hilbert space (RKHS) via a kernel mean embedding and then apply an RKHS regression. %For example, a multivariate probability distribution, $P$, on $\mathbb{R}^d$ would be embedded as:
%\begin{equation} \label{eq:kme}
%    \mu_p(t) = \int_{\mathbb{R}^d} K(t, s) P(ds),
%\end{equation}
%where $K$ is a Mercer kernel (i.e. symmetric and positive semidefinite). Changing $P$ to empirical distribution $\hat{P}$ yields the empirical kernel mean embedding $\mu_{\hat{p}}$ that can be evaluated with only observed data. 
The method can \blue{also} be regarded as a fully non-linear distribution regression, deploying two kernels -- one mapping distributions to mean embeddings and the second mapping mean embeddings to the response space. %Under typical assumptions, 
\citet{szabo2016learning} showed that such methods can achieve minimax error bounds.  %with $\hat{P}$ for excess risk, both for scalar outcome and distributional outcome. 
However, the theory is primarily concerned with prediction error bounds and
relies on a restrictive assumption of bounded outcomes. 

In this paper, we propose a simple and direct \blue{Bayesian} scalar-on-distribution regression approach that circumvents the intermediate step of estimating covariate densities. 
We \blue{primarily focus on the linear scalar-on-distribution regression setting and} adopt a Bayesian non-parametric approach to  \blue{modeling the unknown regression function $f$}, using a Gaussian process (GP) prior. We propose modeling the mean $\mathbb E_{\varepsilon_i} y_i$  simply as a Lebesgue integral of the GP $f(x)$ over the empirical distribution \blue{$\wzi$} of the observed samples $\{\bx_{ij}\}$. The resulting model is also a GP regression, with an {\em `average kernel'}, averaging over the observed samples. We thus \blue{proffer a Bayesian GP regression} for distribution-valued covariates that subsumes the standard \blue{non-parametric} GP regression with scalar- or vector-valued covariates as a special case \blue{corresponding to} Dirac (point-mass) distributions.  %This yields a generalization of the standard Bayesian non-parametric GP regression for scalar- or vector-valued covariates to distribution-valued covariates. 
We show that \blue{our proposed} GP distribution regression %can be extended to accommodate distribution-valued covariates while 
retains all advantages of Gaussian process, including %similar desirable theoretical properties, 
flexibility of use in hierarchical Bayesian models, closed-form or conjugate posterior inference, %offering superior empirical performance,
and amenability to scalable implementation using GP approximations.
Our approach is invariant \blue{to ordering} and %of the observed samples and} to arbitrary 
transformations of the observed samples.

A central contribution of the manuscript is studying properties of the proposed \blue{Bayesian} Gaussian process regression for distribution-valued covariates. For scalar- or vector-valued covariates, asymptotics for GP regression have been well established \citep{van2008rates, van2011information, sniekers2015adaptive, choi2007alternative}. %For example, \cite{van2011information} shows the posterior risk for the $L_2$ norm:
%\begin{equation}
%    \mathbb{E}_{f_0} \int \norm{f - f_0}^2_2 d \Pi(f | \bx_{1:n}, Y_{1:n})
%\end{equation}
%re bounded by an optimal rate for $f_0$ in \textit{H\"older space}, where $f_0$ is the true function and the expectations are under the true data generating process. 
Notably, one can bound not only the prediction error but also the estimation error, $\|\hat{f} - f_0\|$, where $\hat{f}$ is the estimate of the true regression function $f_{0}$. We establish minimax information rates of our  GP regression for distribution-valued covariates. Unlike the standard theory of GP regression where the analysis model is correctly specified, our theory throughout accounts for the fact that our \blue{analysis model} is misspecified as it only uses the observed samples $\{\bx_{ij}\}$ instead of the true distributions $Z_i$ generating the data. Our theory does not rely on any boundedness assumption of the outcome. We focus more on the estimation of the regression function, $f_0$, and quantify uncertainties in Bayesian modeling via bounding the posterior risk. Under practically acceptable assumptions on the distributional process of $Z$, we obtain optimal estimation error bounds of the regression function $f_0$. \blue{We also extend the theory beyond the setting of i.i.d. repeated measures per subject to accommodate certain forms of within-subject dependence (geometrically absolutely-regular mixing) among the repeated measures, as in practice, repeated measures will often exhibit correlation based on time of measurements.}

To our knowledge, this is the first theoretical work on information rates of Bayesian distribution regression. In fact, \citet{law2018bayesian} is the only Bayesian distribution regression method we observed in the literature. That approach did not directly use the observed samples, but rather used a GP prior on the kernel mean embeddings to estimate the subject-specific distributions. It also used landmark points rather than all the samples %are used and priors are put on the kernel %mean embeddings of those landmarks points. Also through landmark points they only 
and considered a restrictive set of regression functions. More importantly, no theoretical properties were established. We instead consider a very flexible class of regression functions and model them in a non-parametric fashion using GP. Our approach is more parsimonious by directly using the observed samples and circumventing estimation of the subject-specific distributions (via their densities or mean embeddings). 

\blue{We propose numerous extensions of the approach including a scalable implementation that exploits low-rank approximations of GP, an extension to fully non-linear scalar-on-distribution regression, accommodating additional (vector-valued) covariates or clustered/grouped data.} %We also offer a discussion on comparing and combining distribution regression with functional regression when the repeated measures are indexed in time.}
Via numerical studies, we show that our approach outperforms alternatives that use this intermediate density estimation step, %\blue{Compared to existing approaches that rely on the intermediate density estimation step, our approach offers
\blue{In particular, we see substantial improvement in accuracy from using our method (often with $100$-fold reduction in estimation error) when the number of repeated measures per subject is small. An analysis of \blue{age and} activity count dataset demonstrates that distribution regression performs better than functional regression implying that time-invariant summaries (distributions) of activity are more more associated with age than temporal patterns of activity.}

%and extends GP regression from scalar- or vector-valued covariates to distribution-valued covariates.

\section{Method}\label{sec:methods}

\subsection{\blue{Review of Gaussian process regression}}\label{sec:gp}

We first briefly review the standard Bayesian non-parametric regression for scalar- or vector-valued covariates using Gaussian processes. \blue{Note that throughout the manuscript we will clearly distinguish between the true {\em generative model} which generated the data and an {\em analysis model} used to estimate the regression function. While this distinction is not critical for the setting of GP regression with scalar- or vector-valued covariates, it has implications when eventually considering distribution-valued covariates as the generative model will involve the unobserved true distribution $Z_i$ while any analysis model can only work with the observed repeated measures $\bx_{ij}.$}

\blue{Consider a generative} non-linear regression model given by
\begin{equation}\label{eq:gpreg}
y_i = f_0(\bx_i) + \varepsilon_i, \varepsilon_i \iid N(0,\sigma^2)
\end{equation}
where $i$ is the subject index, $\bx_i \blue{\in \mathbb R^d}$, for some \blue{positive integer} $d$, is the scalar- or vector-valued covariate,  $f_0$ is the unknown smooth regression function linking $\bx_i$ to the outcome $y_i$, and $\epsilon_i$ are random errors. 

Gaussian process regression is a Bayesian non-parametric method, widely employed in functional or spatial settings. \blue{For a generative model (\ref{eq:gpreg}), GP regression considers an analysis model 
\begin{equation}\label{eq:gpan}
y_i = f(\bx_i) + \varepsilon_i,\, f \sim GP(0,K(\cdot,\cdot)), \, \varepsilon_i \iid N(0,\sigma^2), 
\end{equation}
i.e.,} it assigns a zero-mean GP prior to the function $f$  %$\bm x$, for scalar $y$ and vector, $x$. 
%It is  settings. 
%Formally, this is denoted by $f \sim GP(0,K(\cdot,\cdot))$ where 
where $K$ is the covariance kernel operator \citep{williams2006gaussian}. This implies a multivariate Gaussian prior for any finite collection of $\bx_i$ such that $Cov(f(\bx_i),f(\bx_j))=K(\bx_i,\bx_j)$.
Gaussian process has a close connection to the RKHS regression in that the posterior mean of GP regression is the same as the minimizer of an RKHS regression under the same kernel and suitable regularization parameters \citep{kanagawa}. Given data from $n$ subjects, with covariates $\bx_1,\ldots,\bx_n$ and assuming i.i.d. Gaussian errors $\varepsilon_i$ with variance $\sigma^2$ and the kernel parameters to be known, \blue{for any $\bs \in \mathbb R^d$,} the posterior distribution of $f$ is available in closed form as 
\begin{align}\label{eq:krig}
	\begin{split}
		\mathbb{E}[f(\bs)\ |\ X, Y] &= \bm{k}^T(s)(K + \sigma^2)^{-1} Y, \\
		\text{Cov}[f(\bs)\ |\ X, Y] &= K(\bs, \bs) - \bm{k}^T(\bs) (\bK + \sigma^2)^{-1} \bm{k}(s) %\label{post_var},
	\end{split}
\end{align}
where \blue{$Y=(y_1,\ldots,y_n)$, $X=(\bx_1,\ldots,\bx_n)$}, $\bk^\top(\bs)=(K(\bs,\bx_1),\ldots,K(\bs,\bx_n))$ and $\bK=(K(\bx_i,\bx_j))$. \blue{Note the expectation and covariance in (\ref{eq:krig}) is with respect to the posterior probability distribution, conditional on the data, implied by the Bayesian analysis model (\ref{eq:gpan}), not with respect to multiple draws of the generative model (\ref{eq:gpreg}).} This ability to yield conjugate inference on an unknown function is a central reason for the popularity of GP in spatial, spatio-temporal or functional applications. In practice, as the variance and kernel parameters are unknown, inference is obtained using Markov chain Monte Carlo (MCMC) techniques with the steps in (\ref{eq:krig}) used as conditional updates within the sampler.

\subsection{Gaussian process regression \blue{with distribution-valued covariates}}\label{sec:dgp}

%We first briefly review the standard Bayesian non-parametric regression for scalar- or vector-valued covariates using Gaussian processes. 
%Asymptotics for such methods have been well established \citep{van2008rates, van2011information, sniekers2015adaptive, choi2007alternative}. %For example, \cite{van2011information} shows the posterior risk for the $L_2$ norm:
%\begin{equation}
%    \mathbb{E}_{f_0} \int \norm{f - f_0}^2_2 d \Pi(f | \bx_{1:n}, Y_{1:n})
%\end{equation}
%re bounded by an optimal rate for $f_0$ in \textit{H\"older space}, where $f_0$ is the true function and the expectations are under the true data generating process. 
%Notably, one can bound not only the prediction error but also the estimation error, $\|\hat{f} - f_0\|_2$, where $\hat{f}$ is the estimate of the true function $f_{0}$. % and the expectations are under the true data generating process.

%We show that GP regression can be extended to accommodate distribution-valued covariates while retaining all advantages of Gaussian process, including similar desirable theoretical properties, flexibility of use in hierarchical Bayesian models yielding closed-form or conjugate posterior inference, offering superior empirical performance, and amenability to scalable implementation using low-rank GP approximation. including order and  transformation invariance and offer a .

We now propose a simple extension of GP regression to distribution-valued covariates. \blue{We first consider the case where the true $Z_i$ are observed instead of the repeated measures $\bx_{ij}$. We extend the generative model (\ref{eq:gpreg}) for this distributional setting. In (\ref{eq:gpreg}), $y_i$ is associated with a single covariate value $\bx_i$, with the effect quantified as $f_0(\bx_i)$. In a distributional setting, the response is not associated with a single value of the covariate but the entire distribution of covariate values. So the effect size can be expressed as the weighted average of $f_0(\bx)$ with weights proportional to the frequency of the covariate taking value $\bx$. Formally, if $\bx_i$ denote a random draw from $Z_i$, then the effect will be $\int f_0(\bx_i) Z_i(d\bx_i)$, yielding the following generative model for distributional regression:}
%$\bx_i$ was not observed. Rather, we observe} $Z_i$, the subject-specific probability distribution with support on $\mathbb R^d$ \blue{from which $\bx_i$ would have been a random draw.} %To frame this into the standard GP regression~\eqref{eq:gpreg}, we denote $\bx_i$ to be a latent variable in $\mathbb R^d$ with distribution $Z_i$. %, that generates $\bx_i$, is known. 
%This yields the \blue{following} hierarchical model \blue{as the DGP:
%\begin{align}\label{eq:hierdgp}
%    y_i = f_0(\bx_i) + \varepsilon_i, 
%    \bx_i \sim Z_i, \varepsilon_i \iid N(0,\sigma^2).
%\end{align}}
%Marginalizing over the latent $\bm \bx_i$ in (\ref{eq:hierdgp}), we have \blue{the DGP}
%The generative model analog of~\eqref{eq:samples} is given by 
\begin{equation}\label{eq:dgp}
\begin{split}
y_i &= \mathbb{E}_{Z_i} f_0 + \varepsilon_i, , \varepsilon_i \iid N(0,\sigma^2), \mbox{ where }\\
\blue{ \mathbb{E}_{Z_i} f_0}&\blue{:=\mathbb{E}_{\bx_i \sim Z_i} f_0(\bx_i) = \int f_0(\bx_i) Z_i(d\bx_i).}
\end{split}
\end{equation}
\blue{The generative model in (\ref{eq:dgp}) is the typical linear scalar-on-distribution regression considered in the literature \cite{poczos2013distribution,oliva2014fast}.} We have thus showed that distribution-regression can be represented as a Gaussian process regression with  latent random inputs $\bm \bx_i$ (which can be viewed as random effects), endowed with the subject-specific distributions $Z_i$. %\blue{This connection will help formulating the DGP when $Z_i$'s are also unobserved.
\blue{For data generated as (\ref{eq:dgp}), if the $Z_i$ were known, then a reasonable analysis model would be the following extension of the GP regression (\ref{eq:gpan}):}
\begin{align}\label{eq:gpan2}
\begin{aligned}
y_i &= \mathbb{E}_{Z_i} f + \varepsilon_i, \, \blue{f \sim GP(0,K(\cdot,\cdot)), \, \varepsilon_i \iid N(0,\sigma^2).} %\mbox { where }\\
%E_{Z_i} f &= %\mathbb{E}_{\bx_i \sim Z_i}(\mathbb{E}(y_i \given \bm \bx_i , Z_i)) = 
%\mathbb{E}_{\bm \bx_i \sim Z_i}(f(\bm \bx_i)) = \int f(\bx_i) dZ_i(d\bx_i), \\
 %&= \mathbb{E}_{\bx_i \sim Z_i}(\mathbb{E}(y_i \given \bm \bx_i , Z_i)) = \mathbb{E}_{\bm \bx_i \sim Z_i}(f(\bm \bx_i)),\\
%\varepsilon_i &\overset{iid}{\sim} \mathcal{N}(0, \sigma^2).\\
\end{aligned}% = g(Z_i).
\end{align}
%\begin{align}\label{eq:hier}
%\begin{aligned}
%    y_i &= f(\bx_i) + \varepsilon_i, \\
%    \bx_i &\sim Z_i, \\
%    f &\sim GP(0,K(\cdot,\cdot)),\\
%    \varepsilon_i &\iid N(0,\sigma^2).
%\end{aligned}
%\end{align}
%Thus distribution-regression can be represented as a random effects model, with the latent variables $\bm \bx_i$, endowed with the subject specific distributions $Z_i$, serving as the random effects. 
%From here on, we will often use $Z_i$ to denote $Z_i$, e.g., $\mathbb E_{Z_i}(f) := \mathbb E_{\bx_i \sim Z_i}(f(\bx_i))$.
%Marginalizing over the latent $\bm \bx_i$ in (\ref{eq:hier}), we have the data generation process as
%The generative model analog of~\eqref{eq:samples} is given by 
%\begin{align}\label{eq:dgp}
%\begin{aligned}
%y_i &= \mathbb{E}_{Z_i} f + \varepsilon_i, \mbox { where }\\
%E_{Z_i} f &= %\mathbb{E}_{\bx_i \sim Z_i}(\mathbb{E}(y_i \given \bm \bx_i , Z_i)) = 
%\mathbb{E}_{\bm \bx_i \sim Z_i}(f(\bm \bx_i)) = \int f(\bx_i) dZ_i(d\bx_i), \\
 %&= \mathbb{E}_{\bx_i \sim Z_i}(\mathbb{E}(y_i \given \bm \bx_i , Z_i)) = \mathbb{E}_{\bm \bx_i \sim Z_i}(f(\bm \bx_i)),\\
%\varepsilon_i &\overset{iid}{\sim} \mathcal{N}(0, \sigma^2).\\
%\end{aligned}% = g(Z_i).
%\end{align}

\blue{Let $\calS$ denote a collection of distributions from which the subject-specific distributions $Z_i$ are drawn and $\mathcal X$ denote a compact subset of $\mathbb R^d$ containing the supports of each $Z_i$.} When $f$ is assigned a Gaussian process prior on \blue{$\mathcal X$}, 
under regularity conditions ensuring that the integrals $\blue{\mathbb{E}_{Z_i} f=}\int f(\bx_i)(\omega) Z_i(d\bx_i)$ are well-defined on almost all sample paths $f(\cdot)(\omega)$,  the resulting marginal mean from (\ref{eq:gpan2}) is a Gaussian process \blue{on $\calS$.} Formally, if $f \sim GP(0,K(\cdot,\cdot))$ \blue{on $\mathcal X$}  then 
\begin{equation}\label{eq:meta}
\begin{array}{c}
     F(Z) = \mathbb E_{Z}(f) \sim GP(0,\mathbb K(\cdot,\cdot)), 
     \mbox{ where }\\
     \mathbb K(Z,Z') = \int_{\bx} \int_{\bx'} K(\bx,\bx') dZ(\bx) dZ'(\bx'),\, \forall Z,Z' \in \blue{\mathcal S.}
\end{array}
\end{equation}

It is now evident, that %the hierarchical formulation (\ref{eq:hier}) of distribution regression and yields a marginal model 
(\ref{eq:gpan2}) is a direct generalization of the standard Gaussian process regression to distribution-valued covariates. The induced kernel $\mathbb K$ on the distribution space is simply the double expectation of the original kernel $K$ over the pair of distributions. \blue{We refer to (\ref{eq:gpan2}) as the 
{\em oracle} GP regression model as it relies on the knowledge of the true distributions $Z_i$.}

If $Z_i$ is a Dirac distribution with all mass at $\bx_i$, %or equivalently if the associated covariate $Z$ is a Dirac mass, 
then $\mathbb{E}_{\bx_i \sim Z_i}(f) = f(\bm \bx_i)$ and \blue{the generative model (\ref{eq:dgp}) reduces exactly to the generative model (\ref{eq:gpreg}) and the analysis model} (\ref{eq:gpan2}) reduces exactly to \eqref{eq:gpan}. Hence, the distributional formulation is truly a generalization of the \blue{non-linear scalar-on-vector regression.}  %likelihood of $$
%For a distribution-valued covariate $Z$, we consider a as follows: 
%$y_i = \mathbb{E}_{Z_i} f + \varepsilon_i$, 
%where $\mathbb{E}_{Z_i}$ represents expectation under the subject-specific distribution-valued covariate $Z_i$,  i.e., 
%$\mathbb{E}_{Z_i}f = \int_{-\infty}^{\infty} f(x) Z_i(dx)$. Here, 

%The function $f_0$ has dual interpretation. In the latent variable formulation (\ref{eq:hierdgp}), $f_0$  retains the interpretation as being the regression function of interest analogous to the standard non-parametric regression (\ref{eq:gpreg}), with the difference being that the $\bx_i$'s are latent here. In the marginal model (\ref{eq:dgp}), $f$ can also be interpreted as the coefficient function in a scalar-on-function regression, linking the distribution- or density-valued function $Z_i$ to the scalar outcome $y_i$.

If the distributions $Z_i$ or the corresponding density functions were observed, one can first obtain closed-form updates for the posterior distribution of the Gaussian process $F(Z)$ for any distribution $Z \in \blue{\mathcal S}$, akin to (\ref{eq:krig}) but with the kernel $\mathbb K$. Subsequently, the posterior distribution of the function $f(\bs)$ for any  $s \in \mathbb R^d$ can be obtained by setting $Z = \delta_\bs$ the Dirac distribution at $\bs$, as $f(\bs)=F(\delta_\bs)$.

\subsection{Direct approach \blue{with observed samples}}\label{sec:an}
In practice, \blue{the true distributions} $Z_i$ or the \blue{corresponding} densities are never directly observed as no data is truly density-valued. Instead, we have $m_i$ repeated measures of the covariate, i.e., for each subject, we have $m_i$ i.i.d samples from $Z_i$ denoted by  $\{\bx_{ij}\}_{j=1}^{m_i}$, with $\bx_{ij} \in \mathbb{R}^d$. \blue{Note that none of the %The response $y_i$ does not correspond to any of the 
$\bx_{ij}$ individually correspond to $y_i$, they simply represent draws from the distribution $Z_i$ that is related to $y_i$ as in (\ref{eq:dgp}).} So, the \blue{full} generative model can be summarized as 
\begin{equation}\label{eq:model}
y_i = \mathbb{E}_{Z_i} f_0 + \varepsilon_i,\, \varepsilon_i \iid N(0,\sigma^2),\, \{\bx_{ij} \}_j \iid Z_i, \, Z_i \iid  \mathcal{Z}.
\end{equation}
\blue{Here $\mathcal{Z}$ denotes a distributional process (distribution on the space of distributions) with support $\calS$ from which each $Z_i$ is drawn.}

We now present a direct approach \blue{for estimation of $f_0$ for data generated from (\ref{eq:model})} that circumvents intermediate \blue{density estimation} for the distributions $Z_i$. % and allows use of %We show that the resulting model can also be expressed as a Bayesian GP regression and implemented using off-the-shelf software.
\blue{Let $\wzi$ denote the empirical distribution of the samples. % $\{\bx_{ij}\}_j$, i.e., $P_{\bx \sim \wzi}(\bx = \bx_{ij}) = 1/m_i$ for all $j=1,\ldots,m_i$.
We propose the following analysis model by simply replacing $Z_i$ in (\ref{eq:gpan2}) by $\wzi$, i.e.,}
\begin{equation} \label{gp_dr}
\hspace{-0.5em} y_i = \mathbb{E}_{\wzi} f + \varepsilon_i, \blue{ = \frac 1{m_i} \sum_{j=1}^{m_i} f(\bx_{ij}) + \varepsilon_i,} \, \blue{f \sim GP(0,K(\cdot,\cdot)), \, \varepsilon_i \iid N(0,\sigma^2).} %\mbox { where }\\
%E_{Z_i} f &= %\mathbb{E}_{\bx_i \sim Z_i}(\mathbb{E}(y_i \given \bm \bx_i , Z_i)) = 
%\mathbb{E}_{\bm \bx_i \sim Z_i}(f(\bm \bx_i)) = \int f(\bx_i) dZ_i(d\bx_i), \\
 %&= \mathbb{E}_{\bx_i \sim Z_i}(\mathbb{E}(y_i \given \bm \bx_i , Z_i)) = \mathbb{E}_{\bm \bx_i \sim Z_i}(f(\bm \bx_i)),\\
%\varepsilon_i &\overset{iid}{\sim} \mathcal{N}(0, \sigma^2).\\% = g(Z_i).
\end{equation}

%\begin{align}\label{eq:samples}
%y_i = \frac 1{m_i} \sum_{j=1}^{m_i} f(\bx_{ij}) + \varepsilon_i.
%\end{align}
% When considering a global regression model for all the subjects $i$, we have $\phi_i=\phi/m_i$ and $\rho_i=\rho$. 

We emphasize that the model (\ref{gp_dr}), \blue{is an analysis model} and not a generative model. The repeated measures, $\{\bx_{ij}\}_j$, are random samples from the true subject-specific distributions, and different sets of random samples for the same subject would then change the data generating model for the same observed outcome. It is more likely that the outcome is generated based on the underlying distribution from which the covariate samples are generated, and the generative model should be agnostic to the choice of samples used for the analysis. The generative model \blue{specified by (\ref{eq:model}) ensures this.} To elucidate with an example, a subject's cardio-vascular or pulmonary health end-point  will be more fundamentally related to their true personal exposure distribution, than some repeated measurements of their exposure. %  taken over some time points. 
Hence, we only consider~\eqref{gp_dr} as the analysis model and it \blue{will always be misspecified %as the true $Z_i$ will never be observed and hence the analysis model is always \blue{misspecified 
as $\mathbb{E}_{\wzi} f_0$ will not equal $\mathbb{E}_{Z_i} f_0$ for all $f_0$ unless $Z_i$ are exactly observed.} %Only in the degenerate case, where the true distribution $Z_i$ was indeed the discrete empirical distribution of the observed samples $\{\bx_{ij}\}_j$, then \eqref{eq:dgp} reduces to~\eqref{gp_dr}, i.e., and the \blue{analysis model class for the response  contains the generative model.} 

Since $f$ follows a Gaussian process prior, $\sum_j f(\bx_{ij})/m_{i}$ are jointly Gaussian distributed. Hence, from an implementation perspective, this becomes a standard Bayesian Gaussian process regression model that can be implemented using off-the-shelf software. We can also exploit the conjugacy, and obtain the closed-form posterior $f$. Specifically, stacking up the outcomes $\bY=(y_1,\ldots,y_n)^\top$ and $\bX=\{ \{\bx_{ij}\}_j : i \}$, for given $\sigma^2$ \blue{and kernel parameters}, we have:
\begin{align}\label{post_mean}
	\begin{split}
		\mathbb{E}[f(\bs)\ |\ \bX, \bY] &= \bm{l}^T(s)(\bM + \sigma^2\bI)^{-1} Y, \\
		\text{Cov}[f(\bs)\ |\ \bX, \bY] &= K(\bs, \bs) - \bm{l}^T(\bs) (\bM + \sigma^2\bI)^{-1} \bm{l}(\bs) %\label{post_var},
	\end{split}
\end{align}
where $\bm{l}_i(\bs) = \sum_j K(\bx_{ij}, \bs)/m_{i}$ and $M_{ij} = \sum_{uv} K(\bx_{iu}, \bx_{jv})/(m_{i}m_{j})$. \blue{If a full Bayesian implementation is pursued,} adding %an inverse-Gamma 
priors for $\sigma^2$ \blue{and the kernel parameters, (\ref{post_mean}) gives the mean and variance of a normal distribution corresponding to the full conditional distribution for $f$} in a Gibbs sampler. 

We \blue{refer to our method as Gaussian process distribution regression (GPDR)} and note some properties of this model. \blue{First, the GP distribution regression (\ref{eq:gpan2}) linear in the sense that the relationship between the response $y_i$ and the distribution $Z_i$ is via a linear functional $\mathbb E_{Z_i} f$ of the distribution, even though we allow the regression coefficient $f$ to be an arbitrary smooth non-linear function by endowing it with a GP prior. Hence, it cannot model a non-linear relationship between $Z_i$ and $y_i$. In Section \ref{sec:nonlin} we propose a non-linear extension.} By construction, the model is invariant to any ordering of the observed samples for a subject. 
Our model is also invariant to any invertible transformation of observed $\bx_{ij}$. If only $w_{ij} = h(\bx_{ij})$ are observed, where $h$ is some unknown invertible transformation. Then we can implement model \eqref{gp_dr} with $\{w_{ij}\}$ by rewriting $y_i \sim \mathcal{N}\left(\sum_j \tilde f(w_{ij})/m, \sigma^2\right)$ where $\tilde f = f\circ h^{-1}$ and putting a GP prior on $\tilde f$. Therefore, our model is fairly robust to fixed systematical bias or scaling in measurements of covariates. 

\blue{One alternative to distribution model (\ref{gp_dr}) would be to have a non-linear regression model connecting each sample $\bx_{ij}$ to $y_i$, i.e., $y_i = f(\bx_{ij}) + \varepsilon_{ij}$ for $j=1,2,\ldots,m_i$, with $f$ modeled as a GP. This approach can thus be implemented simply as a Bayesian GP regression with $nm$ inputs $\{\bx_{ij}\}_{i,j}$. However, there are several reasons to prefer our distributional approach (\ref{gp_dr}) over this. First, as mentioned before, the premise of distribution regression is that the outcome does not relate to any of the measurements $\bx_{ij}$ individually but rather with the distribution $Z_i$ from which the $\bx_{ij}$ are drawn. Also, for subject $i$ this will lead to $m_i$ different generative models for $y_i$, one for each sample $\bx_{ij}$. So the model does not correspond to a valid probability distribution and Bayesian inference (especially uncertainty estimates) will not be valid. %Also, if the number of repeated measures $m_i$  vary across subjects, this will give more weight to subjects with larger $m_i$ (unless a properly downweighted version of the likelihood is used). This is undesirable as more covariate samples does not necessarily carry more information (e.g., one sample from a Dirac distribution $\delta_\bx$ will carry more information about the regression function $f$ at $\bx$ than $100$ samples from a uniform distribution). 
Finally, from a computational perspective, GP regression with $nm$ inputs will involve $O(n^3m^3)$ computation rendering this infeasible even for moderately small $m$ and $n$ which leads to a large $mn$. For example, with $m=100$ and $n=100$, one of the moderate sample sizes considered in our numerical experiment, this approach will require storing and inverting $10000 \times 10000$ matrices which is infeasible in typical personal computing environments. Our approach is more principled, linking $y_i$ to the distributions, having a single generative model for $y_i$. It is also computationally more tractable, only involving $n \times n$ matrices.}

% \pink{AD: Maybe use a different notation than $\phi$ for this as it has been use in the previous paragraph in a different context.}
%\note{[Compare BDR paper: We have a very detailed comparison in the introduction section. Should we put in some more technical/mathematical comparison in this section?]}

%We use a GP prior for the regression function $f$, akin to the standard GP regression. We show that the resulting model can also be expressed as a Bayesian GP regression and implemented using off-the-shelf software. We demonstrate properties of the model, including order and  transformation invariance and offer a scalable implementation using low-rank GP approximation.

\section{Theory}
Gaussian process regression has been provably successful in the setting of scalar- or vector-valued covariates \eqref{eq:gpreg}. \citet{van2011information} shows the posterior risk for the $L_2$ norm (estimation error for the regression function):
\begin{equation}\label{eq:risk}
\mathbb{E}_{f_0} \int \norm{f - f_0}^2_2 d \Pi(f | \bx_{1:n}, Y_{1:n})
\end{equation}
are bounded by an optimal rate, if the true regression function $f_0$ is \textit{H\"older smooth}. In this section, we show that the theoretical guarantees extend to our proposed Gaussian process regression for distribution-valued covariates. For the distributional setting, throughout our theory needs to accommodate the fact that our analysis model (\ref{gp_dr}) is misspecified with respect to the data generation process (\ref{eq:model}) on account of only observing the samples $\{\bx_{ij}\}$ instead of the distributions $Z_i$.

Two ideas are central to our theoretical results. First, as derived in Section \ref{sec:methods}, we can rewrite the  generative model from~\eqref{eq:model} for the outcome as $y_i = F_0(Z_i) + \eps_i$, where $F_0(Z_i)=\mathbb E_{Z_i}f_0$. Consequently, modeling $f$ as a GP on the support \blue{$\mathcal X$} of the covariates induces a GP $F(Z)=E_Z f$ on the space $\mathcal S$ of distributions in the support of $\mathcal Z$. This GP has a meta-kernel $\mathbb K$ derived in (\ref{eq:meta}) from the original kernel $K$. Thus, our distribution regression simply becomes a GP regression on the space of distributions, and we can invoke concentration results for GP regression on arbitrary linear spaces to control the posterior risk~\citep{van2008rates,van2011information}. % in a GP regression can be controlled by the so called concentration function :
%\begin{equation} %\label{eq:conc}
%    \phi_{F_0}(\varepsilon) = \inf_{h \in \mathcal{H}: \norm{h - f_0}_\infty < \varepsilon} \norm{h}^2_\mathcal{H} - \log \mathbb{P}\left(\norm{f}_\infty \le \varepsilon\right)
%\end{equation}
%where the probability $\mathbb{P}$ is relative to the Gaussian process prior of $f$. The associated quantity:
%\begin{equation} %\label{eq:conc2}
%    \Psi_{f_0}(\varepsilon) = \frac{\phi_{f_0}(\varepsilon)}{\varepsilon^2}
%\end{equation}
%whose order of magnitude has been well studied in \cite{van2011information}. 
However, these results on risk bounds do not account for the fact that $Z_i$ are unobserved, and the analysis model can only use the observed samples $\{\bx_{ij}\}$. Hence, the second key step is to decompose the actual risk for the analysis model into the risk when using the correctly specified model if $Z_i$ are observed (this risk is controlled via the aforementioned GP concentration results), and the excess risk arising from the model misspecification on account of using only the samples. We show that this excess risk can be controlled as the number of repeated measures $m_i$ grows. 

\subsection{Notations and Assumptions} \label{note_and_assump}
Suppose we have $n$ subjects \blue{with data generated from (\ref{eq:model}) where we only observed the outcomes $y_i$ and the $m_i$ covariate samples $\{\bx_{ij}\}$.} 
%consider the tuples $(Z_i, y_i)$ where $Z_i$ is the random subject-specific covariate distribution, which is i.i.d. following some distributional process $\mathcal{Z}$ (distribution on the space of distributions), and $y_i$ is the scalar outcome. %Let $Z_i$ denote a random variable such that ${Z_i} \sim Z_i$ and for simplicity, we will use $Z_i$ and $Z_i$ interchangeably.
%In reality, we can not directly observe $Z_i$, instead, we have $m_i$ repeated measures of the covariate, i.e., for each subject, we have $m_i$ i.i.d samples from $Z_i$ denoted by  $\{\bx_{ij}\}_{j=1}^{m_i}$, with $\bx_{ij} \in \mathbb{R}^d$. Finally, given the distributions $Z_i$, the outcomes are generated using the distribution regression model~\eqref{eq:dgp}. %So, the data generation process can be summarized as 
%\begin{equation}\label{eq:model}
%Z_i \iid  \mathcal{Z},\, \{\bx_{ij} \}_j \given Z_i \iid Z_i,\, y_i | Z_i \ind N(\mathbb{E}_{Z_i} f_0, \sigma_0^2)
%\end{equation}
%We assume $P_{Z_i}$ be $n$ random distribution that i.i.d. follow a distributional process $\mathcal{Z}$. And $y_i$ follows model:
%\begin{equation} \label{eq:model}
%    y_i = \mathbb{E}_{Z_i} f_0 + \varepsilon_i
%\end{equation}
%where $\varepsilon_i \sim \mathcal{N}(0, \sigma^2)$ independent with each other and all $Z_i$. 
We would like to fit $f_0$ using the GP regression analysis model~\eqref{gp_dr}. For simplicity, we will assume $m_i = m$ for all $i$. But it is easy to see that all results will not change if we relax to $m_i \ge m$ for all $i$. We will denote $\mathbb{D}_n = \{\{\bx_{ij}\}_j, y_i\}_{i=1}^n$ to be the observed data and $\mathbb{Z}_n = \{Z_i, y_i\}_{i=1}^n$. % following model:
%\begin{align}
%    & f \sim GP(0, K) \\
%    & y_i\ \overset{i.n.d.}{\scalebox{1.5}{$\sim$}}\ \mathcal{N}\left(\frac{1}{m} \sum_{j=1}^m f(\bx_{ij}), \sigma^2\right)
%\end{align}
%The assumed data generation model differs from the analysis model in that the former is based on the true subject-specific distributions $Z_i$ while the latter relies on the observed repeated measures $\bx_{ij}$. The theory needs to account for this model misspecification. 
We will show that %when the data is generated from~\eqref{eq:dgp}, 
the posterior risk $R_n$ based only on the observed data $\mathbb{D}_n$, contracts at an optimal rate. Here 
\begin{equation} \label{eq:postrisk}
R_n = \mathbb{E}_{f_0} \int ||f - f_0||^2 d \Pi_n(f|\mathbb{D}_n),   
\end{equation}
where $\Pi_n$ is the posterior distribution, $\mathbb{E}_{f_0}$ is relative to the distribution of $\mathbb{D}_n$, and $\norm{\cdot}$ can be the empirical norm $\norm{\cdot}_n$ or the $L_2$ norm $\norm{\cdot}_2$. The empirical norm is naturally defined as 
\begin{equation} \label{eq:enorm}
\norm{f}_n^2 = \frac{1}{n} \sum_{i=1}^n \left(\mathbb{E}_{Z_i} f\right)^2
\end{equation}
Note that the expectation in this norm is respective to the underlying distribution $Z_i$. % which we cannot directly observe and only have samples for. 
If each $Z_i$ is a Dirac distribution with mass at $\bx_i$, (\ref{eq:enorm}) becomes the standard empirical norm $\frac 1n \sum_i f(\bx_i)^2$.

In the RKHS approach of solving functional linear regression, a method equivalent to the posterior mean estimation of a Gaussian process regression model, accessing the $L_2$ estimation error bound requires a restrictive assumption that the reproducing kernel commutes with the covariance kernel of the density process~\citep{yuan2010reproducing}, which is hard to be interpreted empirically. To derive theoretical guarantees of our approach, it is possible to isolate the assumptions on the reproducing kernel and the  distributional process. We present the following set of assumptions that have empirical interpretation.

To achieve the estimation error bound, we need the true function $f_0$ to be somewhat regular,  that is $f_0 \in \mathcal{F}$ for some functional space $\mathcal{F}$, 
with desirable smoothness properties. %In this manuscript, we will mainly study the \textit{H\"{o}lder space} $C^{\alpha}[0, 1]^d$ for $\alpha > 0$. When writing $\alpha = k + \eta$, the space $C^\alpha[0,1]^d$ is the space of all functions supported in $[0, 1]^d$, whose partial derivatives of orders $(l_1,\cdots,l_d)$ exist for all nonnegative integers $l_1,\cdots,l_d$ such that $l_1 + \cdots + l_d \le k$ and for which the highest order partial derivatives are H\"{o}lder continuous with order $\eta$ ($f$ being H\"{o}lder continuous with order $\eta$ if $|f(x) - f(y)| \le C\norm{x-y}^\eta$ for all $x, y$ and some constant $C$) .
%Another functional space we will study is the \textit{Sobolev space} $H^\alpha[0,1]^d$ which contains all $[0, 1]^d \rightarrow \mathbb{R}$ functions $f$ such that
%\begin{equation*}
%    \int_{\mathbb{R}^d} \left(1 + \norm{\lambda}^2\right)^\alpha \left|\hat{f}(\lambda)\right|^2 d\lambda < \infty
%\end{equation*}
%where $\hat{f}$ is the Fourier transformation of $f$: $\hat{f}(\lambda) = (2\pi)^{-d}\int \exp(i\lambda^T t) f(t) d t$. 
A function $f$ is $\alpha$-regular in $[0,1]^d$ if $f \in C^{\alpha}[0,1]^d \cap H^\alpha[0,1]^d$ for $\alpha > 0$ where $C^{\alpha}[0, 1]^d$ is the \textit{H\"{o}lder space} and $H^\alpha[0,1]^d$ is the \textit{Sobolev space} (see Section \ref{sec:def} of Supplementary Materials for definitions). $\alpha$-regular class will be the main functional class considered in our results.

For the GP prior, we consider the Mat\'ern family for the covariance kernel $K$ (defined in Section \ref{sec:def}), which is widely used in spatial statistics and non-parametric regression. The GP endowed with this kernel is called the Mat\'ern process. %An order $\alpha$ Mat\'ern kernel for $d$ dimensional process has the form:
%\begin{equation}
%    K(s, t) = \int_{\mathbb{R}^d} \frac{e^{i\lambda^T (s-t)}}{(1 + \norm{\lambda}^2)^{\alpha + d / 2}} d \lambda.
%\end{equation}
Sample paths of order $\alpha$ Mat\'ern process is $\alpha$-regular in the sense that it belongs to $C^\beta[0,1]^d \cap H^\alpha[0,1]^d$ for any $\beta < \alpha$ \citep{van2011information}.
We use $\mathcal{H}_K$ to denote the RKHS with kernel $K$ and when no confusion raises, $\mathcal{H}$ represents the RKHS of the GP prior, i.e., RKHS of the GP covariance kernel.  The corresponding RKHS norm is denoted by $\norm{\cdot}_\mathcal{H}$. Note that $K(s, t) \le \kappa$ for all $s, t$ for some fixed $\kappa \geq 1$ for all Mat\'ern kernel.

%As explained in the introduction part, 

We also need assumptions on the distributional process $\mathcal{Z}$,  which generates the subject-specific distributions $Z_i$. %Denote $\mathcal{S}$ as the support of $\mathcal{Z}$. We define   %\pink{wlog?} \note{This is a restrictive assumption inherit from the typical GP regression theorem.} 
Denote the mean measure as $\mu: \mu(A) = \mathbb{E}_{Z \sim \mathcal{Z}}[Z(A)]$. We assume $\mu$ has a density $\mu(x)$ that is bounded away from $0$. %Without loss of generality, we choose the support of $\mu$ to be $[0,1]^d$. 
To interpret this assumption, consider the limiting case of $Z_i$ being degenerate at $\bx_i$, in which case the distribution regression simply reduces to the non-parametric regression~\eqref{eq:gpreg}. This assumption then translates to the underlying distribution generating the covariates $\bx_i$ having a density bounded away from $0$ \blue{on $\mathcal X=[0,1]^d$.} This is commonly assumed for studying GP regression~\citep{van2011information}. %, and we use the generalization of the assumption for distribution-valued covariates. 
% \pink{AD: Explain that the expectation in the definition of $\mu$ is over $Z_i$ in $\mathcal Z$.} 
%As our model is invariant to any invertible transformation of $Z_i$, we also map the support of $\mu$ to $[0,1]^d$.   
%\pink{AD: Don't you need to assume $\mathcal S$ to be compact for the previous sentence to hold?} \note{(continous 1-1 map $R^d$ to $(0, 1)^d$)} 
Under this assumption, without loss of generality, we can regard $\mu(x) = 1$ on $[0,1]^d$. This is because our model is invariant to any invertible transformation of $Z_i$. So we can always map to the %support of $\mu$ 
uniform distribution on $[0,1]^d$ using the probability integral transform. %\pink{) AD: Bohao I tried to clarify the density bounded away from $0$ assumption here. Please check if this is correct.} \note{Seems good, added one citation. Maybe we can save some words since this is a standard assumption.}

Finally, in order to identify the regression function $f_0$, we will need properties that make $\mathcal{Z}$ be able to separate regular functions:
\begin{definition}
	A distributional process $\mathcal{Z}$ \textbf{weakly separates} a functional vector space $\mathcal{F}$ if and only if $\forall f_1, f_2 \in \mathcal{F}:\hspace{0.25em} \mathbb{P}_{Z\sim \mathcal{Z}}[\mathbb{E}_Z f_1 = \mathbb{E}_Z f_2] = 1 \Leftrightarrow f_1 = f_2$. And we call $\mathcal{Z}$ \textbf{strongly separates} $\mathcal{F}$ if and only if there exists a constant $C$ such that $\mathbb{E}_\mu f^2 \le C\mathbb{E}_{Z\sim\mathcal{Z}}[(\mathbb{E}_Z f)^2]$ for all $f \in \mathcal{F}$. Here $\mu$ is the expectation of $\mathcal{Z}$.
\end{definition}

In other words, weak separability asserts that if the distribution regression $\mathbb E_Z f_1$ is identical to $\mathbb E_Z f_2$ for two different regular functions and for almost all distributions $Z$ in the support of $\mathcal Z$, then $f_1=f_2$. This is reasonable, as without this it will not be possible to identify functions, given that they only enter the outcome model in the form of the expectation $\mathbb E_Z f$. Thus, weak separability is essentially the distributional analog of separability or full-rank assumptions of standard regression. \blue{Weak separability is satisfied by most distributional processes $\mathcal Z$ with a sufficiently rich support $\calS$ which can be set of all Dirac distributions on $\mathcal X$ or many families of continuous distributions with support on $\mathcal X$. See Supplementary Section \ref{sec:weak} for some examples.}

Strong separability is less intuitive and more technical. However, one can easily check that strong separability contains weak separability. This is because if $\mathbb{P}_{Z\sim \mathcal{Z}}[\mathbb{E}_Z f_1 = \mathbb{E}_Z f_2] = 1$, by strong separability, $\mathbb{E}_\mu (f_1 - f_2) ^2 \leq C \mathbb{E}_{Z\sim\mathcal{Z}}[(\mathbb{E}_Z(f_1 - f_2))^2] = 0$. Thus, $\mathbb{E}_\mu (f_1 - f_2) ^2 =0$ which implies $f_1 = f_2$.  %\pink{AD: Is there any way to provide some intuitive meaning of strong separability?}\note{ 
%Strong separability essentially means the distribution samples are concentrated enough so that the covariance kernel of the density or pmf has eigenvalues bounded away from 0. If eigenvalues are close to $0$, then the bounds we get must also involve the properties of such covariance kernel. Strong separability or similar ideas are never possible in the functional linear regression context, because covariance kernel of smooth functions must have eigenvalues going $0$. But for distribution regression we show feasibility of strong separability below. 
For the special case when $Z$ equals Dirac  distributions almost surely and our model reduces to the usual GP regression, it is easy to see that strong separability is satisfied with $C=1$. 
The following result presents a non-degenerate case where strong separability holds. 
\begin{lemma} \label{lem:dp}
	The Dirichlet process DP$(\blue{\upsilon}, \alpha)$, where $\alpha > 0$ is the concentration parameter,  strongly separates the space of bounded functions on $[0,1]^d$ and any measure $\blue{\upsilon}$ supported within it with $C = 1 + \alpha$ .
\end{lemma}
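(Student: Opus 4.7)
The plan is to compute $\mathbb{E}_{Z\sim DP(\upsilon,\alpha)}[(\mathbb{E}_Z f)^2]$ in closed form using the known moment structure of the Dirichlet process, and observe that the strong-separation inequality $\mathbb{E}_\mu f^2 \le (1+\alpha)\,\mathbb{E}_{Z}[(\mathbb{E}_Z f)^2]$ then reduces to a purely algebraic identity with a manifestly non-negative remainder.

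First I would recall the second-moment structure of the Dirichlet process. Taking $\upsilon$ to be a probability measure on $[0,1]^d$ (so that the mean measure $\mu$ equals $\upsilon$), the finite-dimensional Dirichlet marginals that define the process yield, for any measurable $A, B \subseteq [0,1]^d$,
\[
\mathbb{E}[Z(A)] = \upsilon(A), \qquad
\mathbb{E}[Z(A)Z(B)] \;=\; \frac{\upsilon(A\cap B)}{1+\alpha} + \frac{\alpha\,\upsilon(A)\upsilon(B)}{1+\alpha}.
\]
For any bounded measurable $f$, linearity extends this first to simple functions $f = \sum_i c_i \mathbf{1}_{A_i}$ via expansion of $(\sum_i c_i Z(A_i))^2$, giving
\[
\mathbb{E}\bigl[(\mathbb{E}_Z f)^2\bigr] \;=\; \frac{1}{1+\alpha}\int f^2\, d\upsilon \;+\; \frac{\alpha}{1+\alpha}\Bigl(\int f\, d\upsilon\Bigr)^2,
\]
and then, using boundedness and dominated convergence (or a monotone class argument), to all bounded measurable $f$.

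The rest is algebra. Multiplying through by $1+\alpha$,
\[
(1+\alpha)\,\mathbb{E}_{Z\sim\mathcal{Z}}\bigl[(\mathbb{E}_Z f)^2\bigr] \;=\; \mathbb{E}_\mu f^2 \;+\; \alpha\,(\mathbb{E}_\mu f)^2 \;\ge\; \mathbb{E}_\mu f^2,
\]
where the final inequality uses $\alpha \ge 0$. This is exactly the strong-separation condition with $C = 1+\alpha$, completing the proof.

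The only potentially delicate step is lifting the set-level second-moment identity to the integral identity for arbitrary bounded $f$. The cleanest alternative is via the stick-breaking representation $Z = \sum_{k\ge 1} \pi_k \delta_{\theta_k}$ with $\theta_k \iid \upsilon$ independent of the weights $\{\pi_k\}$ from a $\mathrm{Beta}(1,\alpha)$ stick-breaking scheme. Then $\mathbb{E}_Z f = \sum_k \pi_k f(\theta_k)$, and using independence of $\pi$ and $\theta$ together with the classical DP weight moments $\mathbb{E}[\sum_k \pi_k^2] = 1/(1+\alpha)$ and $\sum_{k\ne\ell}\mathbb{E}[\pi_k\pi_\ell] = \alpha/(1+\alpha)$ yields the closed-form expression directly, without any measure-theoretic approximation. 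With that identity in hand, no further obstacles remain.
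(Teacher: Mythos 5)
Your proof is correct and reaches the same key identity as the paper, namely $\mathbb{E}_{Z}[(\mathbb{E}_Z f)^2] = \tfrac{1}{1+\alpha}\mathbb{E}_\upsilon f^2 + \tfrac{\alpha}{1+\alpha}(\mathbb{E}_\upsilon f)^2$, after which the bound with $C=1+\alpha$ follows by discarding the manifestly non-negative $(\mathbb{E}_\upsilon f)^2$ term. The only difference is the route to that identity: the paper works from the stick-breaking representation $Z=\sum_k \beta_k \delta_{\bx_k}$ and computes $\sum_k \mathbb{E}[\beta_k^2]$ as a geometric series (obtaining $1/(1+\alpha)$ and then simply lower-bounding by dropping the cross terms $\sum_{k\neq \ell}\mathbb{E}[\beta_k\beta_\ell]$), whereas your primary argument uses the finite-dimensional Dirichlet marginals to get $\mathbb{E}[Z(A)Z(B)]$ at the level of sets and then lifts to bounded measurable $f$ by a simple-function/monotone-class argument. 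Your version is marginally more self-contained in that it yields the exact closed form (and avoids the series computation for the weight moments), while the paper's stick-breaking calculation is shorter once one grants the Beta moment facts; your stated ``cleanest alternative'' via $\mathbb{E}[\sum_k \pi_k^2]=1/(1+\alpha)$ is in fact precisely the paper's proof. Both are valid, and both correctly require only that $f$ be bounded and that $\upsilon$ (hence the mean measure $\mu$) be the base probability measure.
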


All proofs are provided in the Supplementary Materials. Lemma~\ref{lem:dp} shows that strong separability is satisfied by the Dirichlet process (DP), one of the most popular distributional processes (distribution on the space of distributions).   % and  it
%DP with the stick breaking representation 
%satisfies . % with constant $C = 1 + \alpha$, where $\alpha$ is the concentration parameter. 
%we can show that Dirichlet process $\text{DP}(\mu, \alpha)$ strongly separates bounded functions with constant $C = 1 + \alpha$. Proof is shown in Lemma \ref{lem:dp}. \note{[OTHER STRONG SEPARABILITY EXAMPLE?]}
%The formal statement is given below. 
Equipped with these assumptions, we now show that our proposed GP regression with distribution-valued covariates satisfy desirable error bounds. 

\subsection{Fixed Design}
We follow the notations and assumptions in Section~\ref{note_and_assump}. The outcomes $y_i$ and i.i.d samples $\{\bx_{ij}\}_j \sim Z_i$, for $i=1,\ldots,n$, are generated from model~\eqref{eq:model}. Our first result obtains bounds on the empirical norm for a fixed design. 
\begin{theorem}\label{th:fixed}
	Suppose the distributional process $\mathcal{Z}$ weakly separates $C^\beta[0,1]^d \cap H^\beta[0,1]^d$. Then for data generated from model~\eqref{eq:model} with any $\beta$-regular function $f_0$ and analyzed using model~\eqref{gp_dr} with an order $\alpha$ Mat\'ern kernel, there exists a constant $C$ independent of $n$ such that the posterior risk is controlled as
	\begin{equation} \label{res:emp}
	\mathbb{E}_{f_0} \int ||f - f_0||_n^2\ d \Pi_n(f|\mathbb{D}_n) \le C n^{-\frac{2\min(\beta, \alpha)}{2\alpha + d}}
	\end{equation}
	given $m = O(n^{C_{\alpha, \beta}})$, where $C_{\alpha, \beta}$ is a constant depending only on $\alpha$ and $\beta$. Minimax optimal rate is achieved when $\alpha = \beta$ and $m = O\left(n^{2 + \frac{d}{\beta} + \frac{4\beta}{2\beta + d}}\right)$.
\end{theorem}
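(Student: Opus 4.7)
The plan is to reduce the problem to a standard GP regression posterior contraction analysis on an abstract space (the space $\calS$ of distributions), and then to separately control the excess risk arising from the use of $\wzi$ in place of $Z_i$ in the analysis model.

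First I would reformulate the oracle analysis. Since $f \sim GP(0,K)$ on $\mathcal X$ induces $F(Z)=\mathbb E_Z f \sim GP(0,\mathbb K)$ on $\calS$ via (\ref{eq:meta}), and $F_0(Z)=\mathbb E_Z f_0$, the generative model becomes $y_i=F_0(Z_i)+\varepsilon_i$, a standard GP regression with inputs $Z_i$. The empirical norm $\|f\|_n^2=\frac1n\sum_i(\mathbb E_{Z_i}f)^2$ is exactly the fixed-design empirical norm $\frac1n\sum_i F(Z_i)^2$ in the lifted problem, so I can invoke the posterior contraction results of \citet{van2011information} for GP regression on general Polish spaces, provided the pair $(F_0,\mathbb K)$ satisfies the usual concentration function bounds. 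The key technical point here is checking that, for the order-$\alpha$ Mat\'ern kernel $K$ and any $\beta$-regular $f_0$, the lifted function $F_0$ is ``as smooth as $f_0$'' relative to the RKHS of $\mathbb K$: because $\mathbb E_Z$ is a bounded linear functional on $\mathcal H_K$ and $F=\mathbb E_Z f$ preserves the RKHS norm up to a constant, one obtains $\|F_0\|_{\mathcal H_{\mathbb K}} \le \|f_0\|_{\mathcal H_K}$, and the small-ball probability transfers analogously. This yields an oracle posterior contraction rate $\varepsilon_n^2 \asymp n^{-2\min(\alpha,\beta)/(2\alpha+d)}$ in the lifted empirical norm, which coincides with $\|\cdot\|_n$.

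Next I would address the misspecification. The actual posterior is conditioned on $\mathbb D_n$, not $\mathbb Z_n$, so $f$ is really fit against $\mathbb E_{\wzi} f$. I would decompose
\begin{equation*}
\|f-f_0\|_n^2 \;\le\; 2\cdot\frac1n\sum_i(\mathbb E_{\wzi}(f-f_0))^2 + 2\cdot\frac1n\sum_i\bigl(\mathbb E_{Z_i}(f-f_0)-\mathbb E_{\wzi}(f-f_0)\bigr)^2,
\end{equation*}
and use the standard Bayesian testing/evidence-lower-bound machinery on the first term to reduce control of the actual posterior on a sieve $\mathcal F_n$ (a Mat\'ern RKHS ball of radius $M\varepsilon_n^{-d/(2\alpha)}$ intersected with a sup-norm ball) to control of the oracle posterior plus a uniform error over $\mathcal F_n$. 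The second term is a Monte Carlo error: for each $i$, $\mathbb E_{\wzi}g - \mathbb E_{Z_i}g$ is a mean of $m$ i.i.d.\ zero-mean variables $g(\bx_{ij})-\mathbb E_{Z_i}g$, so Hoeffding/Bernstein combined with a chaining bound over $\mathcal F_n$ (whose metric entropy under $\|\cdot\|_\infty$ is controlled because the Mat\'ern RKHS embeds in $C^\beta$ for $\beta<\alpha$) gives $\sup_{g\in\mathcal F_n}|\mathbb E_{\wzi}g-\mathbb E_{Z_i}g|^2 = O(m^{-1}\cdot\text{poly}(\varepsilon_n^{-1}))$ with high probability. Weak separability is what ensures the $L_2(\mu)$-identifiability that converts this uniform control back into an $\|\cdot\|_n$ statement.

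The main obstacle will be the uniform bound in this second term: the sieve $\mathcal F_n$ grows with $n$, so one cannot simply apply the law of large numbers pointwise. I expect the entropy of $\mathcal F_n$ under $\|\cdot\|_\infty$ to scale like $\varepsilon_n^{-d/\alpha}$, which, balanced against the $m^{-1/2}$ Monte Carlo decay and propagated through the quadratic term, forces $m$ to grow polynomially in $n$; tracking constants carefully here should yield the claimed exponent $C_{\alpha,\beta}$ and, in the optimal case $\alpha=\beta$, the specific rate $m=O(n^{2+d/\beta+4\beta/(2\beta+d)})$. A secondary subtlety will be verifying that the Bayesian testing argument still works with the misspecified likelihood: I would proceed by first proving a ``likelihood ratio'' lower bound (evidence bound) showing the marginal likelihood under $\wzi$ is not too small compared to that under $Z_i$, which again reduces to uniform $\ell_\infty$ control of $\mathbb E_{\wzi}g-\mathbb E_{Z_i}g$ over the RKHS concentration region of the prior.
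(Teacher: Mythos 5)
Your overall architecture matches the paper's: both split the risk into an oracle part (the posterior one would obtain conditioning on the true $Z_i$) plus an excess part from replacing $Z_i$ by $\wzi$, and both handle the oracle part by lifting to a GP regression on the space of distributions with the induced kernel $\mathbb K$ of \eqref{eq:meta} and transferring the concentration function through the isometry $\|h\|_{\mathcal H_{\mathbb K}}=\|\pi(h)\|_{\mathcal H}$ (Lemma \ref{lem:hk}; note that weak separability enters there to make $\pi$ a bijection so that $\mathcal B$ is a separable Banach space, not to ``convert uniform control back into an $\|\cdot\|_n$ statement'' as you suggest --- for the fixed design the lifted empirical norm \emph{is} $\|\cdot\|_n$). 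Where you genuinely diverge is the excess risk. The paper never runs a testing/sieve/evidence-lower-bound argument for the misspecified likelihood: it exploits conjugacy, writes both posteriors in closed form via the operators $T_{{Z}},T_{\hat{{Z}}}$ in \eqref{gp_post}, decomposes the excess risk into differences of posterior variances and first and second moments (the terms $V$, $E_1$, $E_2$ of Sections \ref{sec:v}--\ref{sec:e2}), and bounds each via a resolvent identity combined with the kernel-mean-embedding perturbation $\mathbb E\|\mu_{Z_i}-\mu_{\wzi}\|^2_{\mathcal H}\le 4\kappa/m$ (Lemma \ref{lem:kme}) and the operator bound of Lemma \ref{lem:tz}, handling $f_0\notin\mathcal H_K$ by the approximant $f_0^\gamma$. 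Your route is more robust in principle (it would survive non-conjugate settings), but the paper's is an elementary exact computation once conjugacy is invoked.

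Two concrete cautions about your route. First, the misspecified evidence lower bound is not a ``secondary subtlety'': it is the crux, since the posterior given $\mathbb D_n$ is a pseudo-posterior whose normalizing constant must be compared to the oracle one over the full support of the prior, whose sample paths are a.s.\ \emph{not} in $\mathcal H_K$; you would need the usual high-prior-probability decomposition into an RKHS ball plus a small sup-norm ball, and for the RKHS component the one-line bound $|\mathbb E_{\wzi}g-\mathbb E_{Z_i}g|\le\|g\|_{\mathcal H}\,\|\mu_{\wzi}-\mu_{Z_i}\|_{\mathcal H}$ is both sharper and simpler than Hoeffding plus chaining over metric entropy. Second, do not expect to recover the exponent $2+\frac d\beta+\frac{4\beta}{2\beta+d}$ by ``tracking constants'': that exponent is an artifact of the paper's specific product of the bound $\|(T_{{Z}}+\sigma^2/n)^{-1}\|\le n/\sigma^2$, the $O(1/\sqrt m)$ embedding perturbation, and the growth $\|f_0^\gamma\|_{\mathcal H}=O(n^{(2\alpha-2\beta+d)\gamma/2\beta})$. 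A testing-based route would generically yield a different (quite possibly milder) sufficient growth condition on $m$; that is compatible with the first claim of the theorem, which asserts only the existence of some $C_{\alpha,\beta}$, but the specific rate for $m$ in the final sentence would then require its own derivation rather than falling out of your argument.
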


%The proof is presented in Supplementary Materials. 
For $\alpha=\beta$, our proposed distribution regression using the observed samples and a GP prior for the regression function  obtains optimal rates of function estimation. %For other cases, we present the exact form of $C_{\alpha, \beta}$ (order of $m$) in detail in Section~\ref{sec:morder} of the Supplementary Materials. 
\blue{Note that this result is for fixed design, i.e., the norm is specified by the fixed distributional covariates $Z_i$. The result only requires weak separability and thus holds for most reasonable choices of $\calZ$ and $\calS$ with some examples provided in Section \ref{sec:weak}.}

\subsection{Random Design}\label{sec:random}
For the estimation error bound, one would consider $L_2$ norm corresponding to the mean measure $\mu$ of the distributional process $\mathcal{Z}$. For any $f$ that is $\mu$-measurable and with finite squared integral, we define $\norm{f}_2^2 = \int f^2 d\mu$. 
We now state the result on error bound for our Bayesian distribution regression for estimating the regression function $f_0$. 
\begin{theorem}\label{th:random}
	\blue{Assume that} the distributional process $\mathcal{Z}$ strongly separates $C^\beta[0,1]^d \cap H^\beta[0,1]^d$. Then for any $\beta$-regular function $f_0$ and for data generated from~\eqref{eq:model} and fitted with~\eqref{gp_dr} with order $\alpha$ Mat\'ern kernel for the GP,  there exists a constant $C$ independent of $n$ such that the posterior risk is controlled as
	\begin{equation} \label{res:est}
	\mathbb{E}_{f_0} \int ||f - f_0||_2^2\ d \Pi_n(f|\mathbb{D}_n) \le C n^{-\frac{2\min(\beta, \alpha)}{2\alpha + d}}
	\end{equation}
	given $m = O(n^{C_{\alpha, \beta}})$ and $\min(\alpha, \beta) > d / 2$, where $C_{\alpha, \beta}$ is a constant depend only on $\alpha$ and $\beta$. Optimal rate is achieved when $\alpha = \beta$ and $m = O\left(n^{2 + \frac{d}{\beta} + \frac{4\beta}{2\beta + d}}\right)$.
\end{theorem}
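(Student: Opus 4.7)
The plan is to leverage Theorem~\ref{th:fixed} for the empirical norm and then upgrade to the $L_2$ norm using strong separability together with a uniform empirical process argument. First, I would observe that strong separability implies weak separability: if $\mathbb{P}_{Z\sim\mathcal{Z}}[\mathbb{E}_Z f_1 = \mathbb{E}_Z f_2] = 1$, then $\|f_1-f_2\|_2^2 \leq C\,\mathbb{E}_{Z\sim\mathcal{Z}}[(\mathbb{E}_Z(f_1-f_2))^2]=0$, so $f_1=f_2$. Hence Theorem~\ref{th:fixed} applies and yields
\[
\mathbb{E}_{f_0}\int \|f-f_0\|_n^2\,d\Pi_n(f\mid\mathbb{D}_n) \leq C\, n^{-\frac{2\min(\beta,\alpha)}{2\alpha+d}}.
\]

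Next, introduce the population analogue $\|f\|_{\mathcal{Z}}^2 := \mathbb{E}_{Z\sim\mathcal{Z}}[(\mathbb{E}_Z f)^2]$, for which $\mathbb{E}[\|f\|_n^2]=\|f\|_{\mathcal{Z}}^2$ since $Z_i \iid \mathcal{Z}$. Strong separability directly gives $\|f-f_0\|_2^2 \leq C\,\|f-f_0\|_{\mathcal{Z}}^2$, reducing the task to relating $\|\cdot\|_{\mathcal{Z}}$ to $\|\cdot\|_n$ uniformly over the effective posterior support. I would then invoke standard GP concentration (as in \citet{van2011information}) to show that the posterior mass outside an RKHS sieve $\mathcal{F}_n := \{f : \|f\|_{\mathcal{H}} \leq M_n\}$ is negligible, for an $M_n$ growing at most polylogarithmically. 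Inside $\mathcal{F}_n$, the reproducing property combined with the Mat\'ern bound $K(s,s)\leq \kappa$ delivers the uniform envelope $\|f\|_\infty \leq \sqrt{\kappa}\,M_n$, so that the random variables $\mathbb{E}_Z(f-f_0)$ are uniformly bounded across $Z \in \mathcal{S}$.

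A Bernstein-type maximal inequality together with entropy bounds for Mat\'ern RKHS balls then controls the supremum of the empirical process. Entropy for the induced functional class $\{Z\mapsto (\mathbb{E}_Z f)^2 : f\in\mathcal{F}_n\}$ propagates from that of $\mathcal{F}_n$ via the Lipschitz estimate $|(\mathbb{E}_Z f)^2 - (\mathbb{E}_Z g)^2| \leq 2\sqrt{\kappa}\,M_n\,\|f-g\|_\infty$. Combining these, with probability tending to one,
\[
\sup_{f\in\mathcal{F}_n}\bigl|\|f-f_0\|_{\mathcal{Z}}^2 - \|f-f_0\|_n^2\bigr| \leq \tfrac{1}{2}\|f-f_0\|_{\mathcal{Z}}^2 + \epsilon_n,
\]
where the condition $\min(\alpha,\beta)>d/2$ is exactly what is needed to pass through the Sobolev embedding $H^{\min(\alpha,\beta)}\hookrightarrow L^\infty$ (ensuring $f_0$ and typical posterior draws are essentially bounded) and to make $\epsilon_n$ of order at most $n^{-2\min(\alpha,\beta)/(2\alpha+d)}$. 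Rearranging gives $\|f-f_0\|_{\mathcal{Z}}^2 \leq 2\|f-f_0\|_n^2 + 2\epsilon_n$ on $\mathcal{F}_n$; integrating against the posterior, combining with the negligible posterior mass outside $\mathcal{F}_n$, applying strong separability, and using Theorem~\ref{th:fixed} to bound the $\|\cdot\|_n^2$ term yields the claimed rate.

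The main obstacle will be the uniform concentration step, specifically choosing $M_n$ large enough that the GP posterior concentrates on $\mathcal{F}_n$ yet small enough that $\epsilon_n$ matches the empirical-norm rate. This requires tying the Mat\'ern RKHS entropy rates, the sieve radius, and the envelope bound together carefully, and verifying that the same constraint $m=O(n^{C_{\alpha,\beta}})$ from Theorem~\ref{th:fixed} suffices to also absorb the additional loss from the empirical process step. Care is also needed because the functional class consists of \emph{squared} distributional expectations, not raw functions, so the Bernstein variance proxy must be controlled by $\|f-f_0\|_{\mathcal{Z}}^2$ itself to produce a multiplicative (not additive) error, which is what makes the rearrangement in the display above possible.
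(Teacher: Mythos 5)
Your high-level strategy for the random-design upgrade --- use strong separability to dominate $\|\cdot\|_2$ by $\|f\|_{\mathcal Z}^2 := \mathbb{E}_{Z\sim\mathcal{Z}}[(\mathbb{E}_Z f)^2]$, then compare $\|\cdot\|_{\mathcal Z}$ to the empirical norm $\|\cdot\|_n$ over a bounded RKHS sieve via a Bernstein-type inequality with a $\|\cdot\|_\infty$ envelope, with $\min(\alpha,\beta)>d/2$ controlling the sup-norm --- is exactly the mechanism the paper uses in Section \ref{sec:banach}, where the proof of Theorem 2 of \citet{van2011information} is continued through the chain $P(\norm{f-f_\varepsilon}_2 \ge 2C\norm{f-f_\varepsilon}_n) \le e^{-(n/5C^2)\norm{f-f_\varepsilon}_2^2/\norm{f-f_\varepsilon}_\infty^2}$. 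So that part of your plan is sound and matches the paper.

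The genuine gap is in how you handle the misspecification. You propose to establish everything about the actual posterior $\Pi_n(f\mid\mathbb{D}_n)$ by (i) citing Theorem \ref{th:fixed} for the empirical-norm risk and (ii) invoking ``standard GP concentration'' for the sieve mass $\Pi_n(\mathcal{F}_n^c\mid\cdot)$. But standard GP concentration applies to the \emph{oracle} posterior $\Pi_n(f\mid\mathbb{Z}_n)$ built from the true distributions $Z_i$, not to the misspecified posterior built from the empirical distributions $\wzi$; and the empirical-norm risk bound of Theorem \ref{th:fixed} does not by itself control either $\Pi_n(\mathcal{F}_n^c\mid\mathbb{D}_n)$ or the contribution $\mathbb{E}_{f_0}\int_{\mathcal{F}_n^c}\|f-f_0\|_{\mathcal Z}^2\,d\Pi_n(f\mid\mathbb{D}_n)$ on the sieve complement (where $\|f-f_0\|_{\mathcal Z}$ is unbounded), nor the risk on the exceptional event where your uniform empirical-process inequality fails. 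The paper avoids this by never routing the $L_2$ bound through the empirical norm: it decomposes $R_n = R_n^0 + R_n^1$, applies the van der Vaart--van Zanten machinery only to the oracle term $R_n^0$ (where it legitimately applies, after showing the induced process $F=\mathbb{E}_{(\cdot)}f$ lives in a separable Banach space with $\pi(\mathcal{H}_{\mathbb K})=\mathcal{H}$), and bounds the $L_2$ excess risk $R_n^1$ directly by computing $\int_{[0,1]^d}[V(s)+E_2(s)-2f_0(s)E_1(s)]\,ds$ via the kernel-mean-embedding perturbation $\mathbb{E}\|\mu_{Z_i}-\mu_{\wzi}\|_{\mathcal H}^2 \le 4\kappa/m$ and the operator bound on $\|T_Z - T_{\widehat Z}\|$; this is also where the super-quadratic scaling $m=O(n^{2+d/\beta+4\beta/(2\beta+d)})$ actually comes from. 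Your route could in principle be repaired by adding a separate perturbation argument showing the two Gaussian posteriors assign comparable mass to $\mathcal{F}_n^c$ and have comparable tail moments, but as written that step is missing, and it is precisely the step that distinguishes this theorem from a correctly specified GP regression result.
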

\blue{In Section \ref{sec:pfoutline} we provide the key ideas used in the proofs of Theorems \ref{th:fixed} and \ref{th:random} with technical proofs and details offered in Supplemental Section \ref{sec:proofs} (specifically Sections \ref{sec:def} through \ref{sec:banach}).} Once again, when $\alpha = \beta$, we get the minimax rate as $n^{-\beta / (2\beta + d)}$. %This is the optimal rate even if we do not make any other assumptions on the distributional process $\mathcal{Z}$. 
For the special case when $Z$ equals Dirac  distributions almost surely, $\mathcal{Z}$ strongly separates all function space and our model degenerates to the typical Gaussian process regression~\eqref{eq:gpreg}. The optimal rate we can get is the same rate for estimating $\beta$-regular functions in a typical nonparametric regression case, which is $n^{-\beta / (2\beta + d)}$. This serves as a sanity check for our theory, as the distribution GP regression model is a generalization of the standard GP regression. 

Note that although it is required that $\alpha = \beta$ to achieve the optimal rate, we can actually require smaller $m$ with smaller $\alpha$ if the only goal is consistency. For example, from Section~\ref{sec:combine} of the Supplementary materials, we know that if $\beta > \alpha + d/2$, consistency only requires $m / n^2 \to \infty$, and $m / n^{2+d/\beta} \to \infty$ if $\alpha = \beta$.
\blue{Two factors contribute to this super-quadratic scaling of $m$ with respect to $n$ required for the theory. First, note that as the distribution $\widehat Z_i$ constructed from the observed samples $\{\bx_{ij}\}$ is not equal to the true distribution $Z_i$, the resulting posterior $\Pi_n(f \given \mathbb D_n)$ for model (\ref{gp_dr}) is different from the oracle posterior $\Pi_n(f \given \mathbb Z_n)$ one would have obtained from  (\ref{eq:gpan2}) using the true $Z_i$, and for which the existing theory of Gaussian process regression applies. The difference between the posteriors control the excess risk. From Section \ref{sec:dgp} we know that (\ref{eq:gpan2}) is GP regression with distributional inputs $Z_i$ and induced distributional kernel $\mathbb K$, defined in (\ref{eq:meta}). Then (\ref{gp_dr}) is GP regression with the same kernel $\mathbb K$ but with perturbed inputs $\wzi$. 
%which is in turn is simply controlled by the differences between the actual distributional inputs $\hat Z_i$ and . %From (\ref{post_mean}) we see that quadratic forms of the $n \times n$ dimensional inverse kernel matrix $(\bM + \sigma^2 I)$ feature crucially in both the mean and variance. 
%The entries of $M_{ij}$ equals $\mathbb K(\widehat Z_i, \widehat Z_j)$ with $\mathbb K$ defined in (\ref{eq:meta}). This is a perturbation of $\mathbb K(Z_i,Z_j)$. 
The kernel $\mathbb K$ is linear in the mean embedings of the distributions (introduced later in Section \ref{sec:kme}), 
In a GP regression with sample size $n$, and a linear kernel $\mathbb K$, a perturbation of each input %s of a $n \times n$ linear GP kernel matrix 
by amount $\delta$ (in the spectral norm) leads to perturbation of amount $n \delta$ for the posterior mean and variances. Informally, the $O(n)$ scaling comes from bounding the inverse kernel matrix $\|(\mathbb K + \sigma^2 / n \bI)^{-1}  \| \leq n/\sigma^2$ which appears in the posterior mean and variance (see , e.g., (\ref{eq:var}) for a formal derivation for the variance perturbation). 
For $m$ i.i.d. samples, perturbations $\widehat Z_i$ from $Z_i$ are of the order $\delta=1/\sqrt{m}$ in terms of their mean embeddings (Lemma \ref{lem:kme}) thus leading to $O(n/\sqrt{m})$ perturbation in the posterior means and variances. This implies $m$ needs to scale atleast at rate larger than $O(n^2)$. The second part of the scaling for $m$ comes from approximating the true regression function $f_0$ by the RKHS $\mathcal H_K$ of the Mat\'ern GP kernel $K$ (of smoothness $\alpha$) which contains smoother ($\alpha + d/2$-smooth) functions \citep{kanagawa2018gaussian}. So if $\beta > \alpha + d/2$, i.e., when $f_0$ lies in the RKHS, we only require $m/n^2 \to \infty$ suffices for consistency. When $\beta \leq  \alpha + d/2$, the additional scaling for $m$ comes from standard results of approximating a less smooth $f_0$ by $\mathcal H_k$ \citep{van2011information}.}

\blue{The result requires strong separability so will hold when $\calZ$ is a Dirichlet process as proved in Lemma \ref{lem:dp}. 
However, strong separability may not hold if $\calZ$ is supported on collections of continuous distributions. However, our empirical results show that the method indeed performs well in terms of estimating the regression function $f_0$ even for continuous distributions.}

\subsection{\blue{Dependent samples}}\label{sec:dep}

\blue{Theorems \ref{th:fixed} and \ref{th:random} assume data generated from (\ref{eq:model}) which stipulates that the samples $\{\bx_{ij}\}$ for a subject are i.i.d. draws from the true-subject specific distribution $Z_i$. In many instances, $\{\bx_{ij}\}$ will possess some dependence structure. For example, if $\{\bx_{ij}\}$ are time-indexed measurements of a subjects exposure, it is likely that they will exhibit some form of serial correlation. We now extend the theoretical results to accommodate dependence among the samples. 

\begin{theorem}\label{th:dep}
Consider data generated from model (\ref{eq:model}) with the only change being that $\bx_{ij}$ are identical draws of $Z_i$ but are not independent, instead each $\{\bx_{ij}\}_j$ is an absolutely regular ($\beta$-) mixing sequence with mixing coefficients $\beta_i(k)$ for lag $k$. Then the results of Theorems \ref{th:fixed} and \ref{th:random} holds if $\sup_i \sum_k \beta_i(k) < \infty$. 
\end{theorem}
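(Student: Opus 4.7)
The plan is to identify the single place where the proofs of Theorems~\ref{th:fixed} and~\ref{th:random} invoke independence of $\{\bx_{ij}\}_j$, and to verify that the required bound survives intact, up to a multiplicative constant, under absolute regularity with summable mixing coefficients.

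First I would trace the misspecification gap between the analysis model~(\ref{gp_dr}) and its oracle counterpart~(\ref{eq:gpan2}) back to the kernel mean embedding error. As discussed after Theorem~\ref{th:random} and formalized in Lemma~\ref{lem:kme} in the supplement, the excess posterior risk arising from replacing $Z_i$ with $\wzi$ is controlled by $\mathbb E\|\mu_{\wzi} - \mu_{Z_i}\|_{\mathcal H}^2$, where $\mu_Z := \int K(\bx,\cdot)\,Z(d\bx)$. The i.i.d.\ assumption enters only through the variance bound $\mathbb E\|\mu_{\wzi} - \mu_{Z_i}\|_{\mathcal H}^2 \le \kappa/m$; every other ingredient, namely the GP posterior contraction on the induced distributional GP with meta-kernel $\mathbb K$, the approximation of a $\beta$-regular $f_0$ by the Mat\'ern RKHS, and the weak or strong separability of $\mathcal Z$, only concerns the outer distributional model and the prior and is entirely agnostic to how the within-subject samples are drawn.

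Second, I would extend this variance bound to the absolutely regular case via Berbee's coupling. Writing $\varphi_{ij} := K(\bx_{ij},\cdot) - \mu_{Z_i}$ and noting that $\|\varphi_{ij}\|_{\mathcal H} \le 2\sqrt\kappa$ since $K(\bx,\bx)\le\kappa$, the reproducing property gives
\begin{equation*}
\mathbb E\,\langle\varphi_{ij},\varphi_{ik}\rangle_{\mathcal H} \;=\; \mathbb E\,K(\bx_{ij},\bx_{ik}) - \|\mu_{Z_i}\|_{\mathcal H}^2 .
\end{equation*}
Berbee's lemma produces, for $j<k$, a coupled copy $\bx_{ik}^\ast$ independent of $\bx_{ij}$ with marginal $Z_i$ and $\mathbb P(\bx_{ik}\ne\bx_{ik}^\ast)\le\beta_i(k-j)$, which bounds the off-diagonal term by $2\kappa\,\beta_i(k-j)$ via the boundedness of $K$. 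Summing over $(j,k)$ and using stationarity of the marginals yields
\begin{equation*}
\mathbb E\,\|\mu_{\wzi}-\mu_{Z_i}\|_{\mathcal H}^2 \;\le\; \frac{4\kappa}{m}\Bigl(1 + \sum_{k\ge 1}\beta_i(k)\Bigr) \;\le\; \frac{C_\beta\,\kappa}{m},
\end{equation*}
with $C_\beta := 4\bigl(1 + \sup_i \sum_{k\ge 1}\beta_i(k)\bigr)<\infty$ by assumption. Crucially the bound is still $O(1/m)$ and uniform in $i$; the supremum hypothesis prevents any subject-dependent blow-up when the excess risk is aggregated over the $n$ subjects.

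Finally, substituting this $O(C_\beta\kappa/m)$ bound in place of the $O(\kappa/m)$ bound used in the original proofs, the entire argument proceeds verbatim: the same scaling $m = O(n^{C_{\alpha,\beta}})$ still renders the misspecification excess risk negligible compared to the oracle contraction rate $n^{-2\min(\alpha,\beta)/(2\alpha+d)}$, with $C_\beta$ absorbed into the overall constant $C$. The main obstacle I anticipate is rigorously lifting Berbee's coupling to the RKHS-valued mean embeddings; however, this reduces to the scalar case by applying the coupling directly to $K(\cdot,\cdot)$ as a bounded measurable bivariate function of the raw samples, so no genuinely new coupling machinery is required beyond classical $\beta$-mixing tools, and the uniform-in-$i$ control of the mixing constant is the one hypothesis that cannot be relaxed without changing the rate.
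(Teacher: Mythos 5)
Your proposal is correct and follows essentially the same route as the paper: both arguments observe that the samples $\bx_{ij}$ enter the proofs of Theorems \ref{th:fixed} and \ref{th:random} only through the mean-embedding bound of Lemma \ref{lem:kme}, keep the diagonal terms unchanged (they depend only on the identical marginals), and show that the expected cross terms $\mathbb E\langle K_{\bx_{ij}}-\mu_{Z_i},\,K_{\bx_{ij'}}-\mu_{Z_i}\rangle_{\mathcal H}$ are bounded by a constant times $\kappa\,\beta_i(|j-j'|)$, whose uniform summability preserves the $O(1/m)$ rate and hence the rest of the argument verbatim. The one point of divergence is purely technical: you establish the cross-term bound via Berbee's coupling (replacing $\bx_{ij'}$ by an independent copy agreeing except on an event of probability $\beta_i(|j-j'|)$ and using $|K|\le\kappa$), whereas the paper approximates $K$ by simple functions supported on product rectangles and applies the partition definition of the absolute-regularity coefficient directly, passing to the limit by monotone convergence. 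Both correctly exploit the fact that $\beta$-mixing (unlike $\alpha$-mixing) controls expectations of bounded \emph{bivariate} functions of the pair, which is exactly what is needed since $K(\bx,\bx')$ is not a product; your coupling route is arguably the cleaner and more standard of the two, while the paper's is self-contained. Your explicit remark that the uniformity in $i$ of the summed mixing coefficients is the hypothesis that cannot be dropped matches the role $\sup_i\sum_k\beta_i(k)<\infty$ plays in the paper's final display.
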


Proof of Theorem \ref{th:dep} is provided in Section \ref{sec:pfdep}. Theorem \ref{th:dep} shows that the posterior concentration results for GP distribution regression holds with the same rates even when the samples exhibit $\beta$-mixing class of dependence as long as the $\beta$-mixing coefficients satisfy  $\sup_i \sum_k \beta_i(k) < \infty$. The summability condition $\sum_k \beta_i(k) < \infty$ implies the coefficients $\beta_i(k)$ to decay at super-linear rate in terms of the lag $k$. The supremum over $i$  implies that this summability needs to be uniform across subjects. Many common temporal dependence structures satisfy this. For example, \cite{mokkadem1988mixing} showed that autoregressive (AR) and ARMA time series are $\beta$-mixing with geometrically decaying mixing coefficients $\beta_i(k)=\psi_i^k$ for some $\psi_i < 1$ and will be thus summable. %If the samples $\bx_{ij}$ are indexed over time $t_{ij}$ (some monotonic function of $j$ with $t_{ij} \to \infty$ as $j \to \infty$) and displays serial (AR) or ARMA type correlation,
Theorem \ref{th:dep} then ensures that the GP distribution regression still has optimal concentration rates for AR or ARMA dependence among samples as long as $\psi_i$s are bounded strictly below $1$.}

\subsection{\blue{Related theoretical work}}\label{sec:threview}

\blue{We briefly review existing theoretical support for scalar-on-distribution regression to put our theoretical contributions into context. We first note that while some scalar-on-distribution regression methods \cite{talska2021compositional,poczos2013distribution,oliva2014fast} can be formulated as scalar-on-function regression with pre-estimated density functions and that there is a vast theoretical literature for functional regression, there is a fundamental difference in the setup of functional and distribution regression which makes the theoretical studies different. In scalar-on-function regression, even if no covariate is truly functional, one actually observes the true function at a discrete set of points. In distributional regression, the true density function (presuming a density exists, which is not necessary) is never observed at any point. The discrete samples are just draws from the same distribution which is related to the response. This difference is important, as while there is theory for scalar-on-function regression that accounts for discrete realizations of the functions \cite{yao2005functional,crambes2009smoothing}, it does not apply to the distributional setting. %as the density function is never observed at any point. 

To our knowledge, our theoretical results represent the first work on information rates of a Bayesian methodology for scalar-on-distribution regression. Previously, the theoretical properties of various non-Bayesian scalar-on-distribution regression methods has been studied in \citet{poczos2013distribution,oliva2014fast,szabo2016learning}. Each work presents a different methodology and somewhat different framework (e.g., assumptions) for studying the theoretical properties so direct comparison of the results is challenging, however, we provide discussion on some aspects of similarity and differences with our theory. 

\citet{poczos2013distribution} proposes a {\em kernel-kernel-estimator}, first estimating subject specific densities using KDE, then using the estimated densities as covariates in a non-linear regression using kernels on density spaces. They provide error bounds for prediction under various scenarios of scaling for the within-subject ($m$) and between-subject ($n$) sample sizes. For the setting nearest to ours, where, in our notation, they allow $m$ to be larger than $n$, the minimax rate is attained when $m = O(n^{\frac{\beta + \tau + 1}{2\beta + \tau}(d+2)})$ \citep[page 513, ][]{poczos2013distribution} where $\beta$ denotes a measure of smoothness of the function relating the density to the response (like ours), and $\tau$ is a {\em doubling dimension} controlling the complexity of the distributional process $\calZ$ . If $\tau \geq \beta$, the scaling can be much worse than ours as $d$, the dimension of the covariate space, can be quite large. E.g., if $d > 1$ this will lead to  super-cubic scaling for $m$. Ofcourse, they consider a fully non-linear setup and is thus accommodating a much larger class of data generation mechanisms than our linear paradigm. However, they rely on bounded errors which is quite restrictive. Assumption of a finite doubling dimensional for $\calZ$ also rules out discrete distributions. Finally, they allow the kernel bandwidths to depend on the sample size, which we do not. Optimizing over bandwidths have been shown to sharpen Gaussian process regression based error bounds \cite{yang} and can possibly further improve our scaling. 

\citet{oliva2014fast} estimate the densities based on observed samples using basis functions and then uses the estimated basis function coefficients as covariates for modeling the response. Their estimates of densities are not guaranteed to be densities. Also, their theory only considers a truncated estimator instead of the actual one and the rate of convergence scales as $O(m \log(m) / n \log(n))$. It is unclear why the rate worsens with increase in the number of samples $m$. This is different to our rates as well as rates in \cite{poczos2013distribution,szabo2016learning} which decrease with $m$ as more samples carry more information about the true distribution. 

%-- density estimates not really densities, weird rate increasing with sample size

\citet{szabo2016learning} presents the most comprehensive theoretical treatment for scalar-on-distribution regression, although they also restrict their analysis to bounded errors. They consider a general class of models, also based on two kernels akin to \cite{poczos2013distribution}, but using the first (inner) kernel to obtain distributional mean embeddings %(see Section \ref{sec:kme}) 
instead of KDE and using these mean embeddings as covariates, linking them to the responses via a second (outer) kernel regression. 
They study a multitude of models of varying complexity depending on the choice of the outer kernel and discuss the tradeoffs of the relative scaling of the two sample sizes, complexity of the true function class and the distributional process. 
We show in Section \ref{sec:kme} that our distribution regression corresponds to kernel mean embedding using the Mat\'ern kernel $K$ and then using a linear outer kernel on the mean embeddings. The scenario of \citet{szabo2016learning} that is closest to ours is when their outer kernel is Lipschitz of order $h=1$ \citep[Eq. 12,][]{szabo2016learning}. While they offer a sub-quadratic scaling of $m$ in terms of $n$ for optimal rates \citep[Remark 6,][]{szabo2016learning}, we believe this difference with our super-quadratic scaling primarily comes from differences in scaling of the regularization term. As discussed in Section \ref{sec:random}, the super-quadratic scaling for our method results from the effect of perturbation of inputs $Z_i$ to $\wzi$ in a GP regression. This effect can be bounded as the product of bound on the inverse regularized GP kernel matrix which is $\|(\mathbb K + \sigma^2/n I)^{-1} \| \leq  n/\sigma^2$ and the bound on the amount of perturbation for the mean embedding which is $O(1/\sqrt m)$ from Lemma \ref{lem:kme}. Writing $\lambda=\sigma^2/n$, this leads to the $O(\lambda^{-1}/\sqrt m)$ product bound necessitating $m$ to scale as larger than $\lambda^{-2}$ for the rate to vanish asymptotically. In ours and any typical Bayesian GP regression, the error variance $\sigma^2$ does not depend on $n$ and for a sample size of $n$ and the value of the regularization parameter $\lambda=\sigma^2/n$ naturally comes from the negative log-likelihood of the joint posterior, thus requiring $m$ to scale atleast larger than $O(n^2)$. The approach of \cite{szabo2016learning} %do not consider any Bayesian formulation, they do not deal with any likelihood but simply 
optimizes a ridge-regularized loss function for the kernel regression. Although this loss has the same form as the negative log-Bayesian posterior, their regularization parameter $\lambda$ is not tied to any error variance model (as it is not a Bayesian method) and its scaling can be chosen to optimize the error rates. They have the same bounds for the regularized kernel matrix and the perturbation and thus %Conditional of the regularization parameter $\lambda$, 
they also require the same scaling for $m$ i.e., $\lambda^{-2}/m \to 0$ as seen from plgguing in $h=1$ in Eq (25) of \cite{szabo2016learning}. However, they allow $\lambda$ to scale at a rate slower than $O(1/n)$ \citep[Theorem 5,][]{szabo2016learning}. This translates to a sub-quadratic scaling for $m$ in terms of $n$. An equivalent choice for us would be to let $\sigma^2$ grow with $n$, which would allow $\lambda = \sigma^2/n$ to decay slowly. This can perhaps improve the scaling of $m$ with respect to $n$. However, letting the error variance grow with the sample size is somewhat unnatural in a Bayesian paradigm so we do not pursue this here. 

%-- most general, wide range of kernel choices, finite outcome, nearest to ours (explain why), regularization scale dependent, 

%Pros and cons -- some non-linear but with bounded and assumpotion like doubling, 
A key difference of our theory from the existing work is that, as we propose a Bayesian methodology, our theory gives error bounds for the entire posterior distribution of the risk for estimating the regression function, as opposed to error bounds of point predictions considered previously. To our knowledge this is the first theoretical treatment of posteriors from Bayesian scalar-on-distribution regression. We obtained optimal error rates and, as argued above, we conjecture that it might be challenging to further improve the scaling of $m$ that attains these optimal rates without additional assumptions or choices % However, that this can be done 
like letting the error variance or the kernel bandwidth to vary with $n$. Finally, none of the aforementioned theoretical work considered dependence among the observed samples, restricting to the i.i.d. case. We show in Theorem \ref{th:dep} that the error rates are attainable even under certain forms of within-subject dependence like autocorrelation.}

%i.e., there is an upper bound to the series sums of the $\beta$-mixing coefficients 
%ruling out cases where there is asymptotically infinite subjects with very strong dependence. 

\section{\blue{Proof outline}}\label{sec:pfoutline}
We present the key arguments here for proving  \blue{Theorems \ref{th:fixed} and \ref{th:random}} while the technical proofs are in the supplement.

\subsection{Risk Decomposition}\label{sec:risk}
The \blue{critical step for the theory risk is decomposition of the risk (\ref{eq:risk}).} %Consider the Gaussian process model with unknown true distributions $Z_i$:
%\begin{align} 
%	& f \sim GP(0, K) \label{model:truez1} \\ 
%	& y_i \sim \mathcal{N}\left(\mathbb{E}_{Z_i} f, \sigma^2\right) \label{model:truez2}
%\end{align}
%the posterior of which is denoted as Let 
Let $\Pi_n(f|\mathbb{Z}_n)$ \blue{denote the posterior from the {\em oracle} GP regression (\ref{eq:gpan2}) if the true distributions $Z_i$ were observed.} Then the risk term (\ref{eq:postrisk}) can be decomposed into:
\begin{equation} \label{eq:decomp}
\begin{split}
R_n &= R_n^0 + R_n^1 \\
&= \mathbb{E}_{f_0} \int ||f - f_0||^2 d \Pi_n(f|\mathbb{Z}_n) + \\
& \qquad \left( \mathbb{E}_{f_0} \int ||f - f_0||^2 (d \Pi_n(f|\mathbb{D}_n) - d \Pi_n(f|\mathbb{Z}_n)) \right)
\end{split}
\end{equation}
where $\norm{\cdot}$ can be empirical norm $\norm{\cdot}_n$ and $L_2$ norm $\norm{\cdot}_2$.
\blue{Recall that the Gaussian process distribution regression (\ref{eq:gpan2}) is simply Gaussian process regression on the space of distributions with the induced kernel $\mathbb K$ defined in (\ref{eq:meta}). So,} we can bound the first term $R^0_n$ using \blue{general posterior concentration results for Gaussian processes using a} similar method as in \cite{van2011information}. 

\blue{The second term $R_n^1$ constitutes the excess risk on account of using a misspecified model (\ref{gp_dr}).} We bound $\mathbb{R}^1_n$ through a direct computation \blue{based on kernel mean embeddings of distributions which we discuss below. We emphasize that the risk decomposition (\ref{eq:decomp}) is considered for studying the theoretical properties of the method. In our method, we do not require knowledge of the true $Z_i$, neither do we attempt to model it, as we work directly with the samples $\wzi$.} %and our goal is inference on the regression function $f_0$. So all terms involving $Z_i$ in the theory, like The$\mathbb{R}^1_n$, quantifies the impact of this.}

\subsection{\blue{Kernel mean embeddings}}\label{sec:kme}

Let $\mathcal{H_K}$ be the RKHS of the GP kernel $K$ with inner product $\langle\cdot, \cdot\rangle$. \blue{For simplicity, we will often denote $\mathcal H_K$ simply as $\mathcal H$.} Let 
\begin{equation}\label{eq:kme}
\mu_Z (s) = \int K(\bs,\bt) Z(d\bt) 
\end{equation}be the kernel mean embedding of a distribution $Z$. %\pink{AD: Do you need some regularity conditions on $z$ for $\mu_z$ to be in $\mathcal H$?} \note{(Someone has proved this always holds for any probability measure, its basically from how RKHS is closed)} 
\blue{The distributional $\mathbb K$ in (\ref{eq:meta}) can be simply expressed a second-stage linear kernel with the kernel mean embeddings as inputs, i.e., 
\begin{equation}\label{eq:linearkme}
\mathbb{K}(Z, Z') = \langle \mu_{Z}, \mu_{Z'}\rangle_\mathcal{H}.
\end{equation}
The difference between the posterior from the {\em oracle} Gaussian process regression (\ref{eq:gpan2}) had the $Z_i$ been known, and one from the direct approach (\ref{gp_dr}) actually used in practice,  simply results from the difference in GP regression with this linear kernel when replacing the true inputs $\mu_{Z_i}$ with the observed, perturbed inputs $\mu_{\wzi}$. 

Consequently, the key to control the excess risk $R_n^1$ between the two posteriors is to control the extent of the perturbation $\mu_{Z_i}$ to $\mu_{\wzi}$. Lemma \ref{lem:kme} quantifies this.

\begin{lemma}\label{lem:kme} $\mathbb E \|\mu_{Z_i} - \mu_{\wzi}\|^2_{\mathcal H} \leq 4\kappa/m$.
\end{lemma}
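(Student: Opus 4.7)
The plan is to recognize $\mu_{\wzi}$ as a sample mean in the RKHS $\mathcal H = \mathcal H_K$ and then apply the standard Hilbert-space variance identity. First I would observe that, by the reproducing property, $K(\cdot,\bx)$ is an element of $\mathcal H$ with $\|K(\cdot,\bx)\|_{\mathcal H}^2 = K(\bx,\bx) \leq \kappa$; in particular, the uniform bound ensures the Bochner integral $\mu_{Z_i}=\int K(\cdot,\bx)\,Z_i(d\bx)$ is a well-defined element of $\mathcal H$ and coincides pointwise with the kernel mean embedding in (\ref{eq:kme}). Consequently, $\mu_{\wzi}=\frac{1}{m}\sum_{j=1}^m K(\cdot,\bx_{ij})$ is literally the empirical average of $m$ conditionally (given $Z_i$) i.i.d.\ $\mathcal H$-valued random elements, each having conditional mean $\mu_{Z_i}$.

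Next, conditional on $Z_i$, I would invoke the variance identity for i.i.d.\ Hilbert-space valued sample means,
\begin{equation*}
\mathbb E\bigl[\|\mu_{\wzi}-\mu_{Z_i}\|_{\mathcal H}^2 \bigm| Z_i\bigr] \;=\; \frac{1}{m}\,\mathbb E\bigl[\|K(\cdot,\bx_{i1})-\mu_{Z_i}\|_{\mathcal H}^2 \bigm| Z_i\bigr],
\end{equation*}
which follows directly from expanding the squared norm and using independence to kill the cross terms. To bound the right-hand side, I would apply $\|a-b\|^2 \leq 2\|a\|^2+2\|b\|^2$ in $\mathcal H$, using $\|K(\cdot,\bx_{i1})\|_{\mathcal H}^2=K(\bx_{i1},\bx_{i1}) \leq \kappa$ and $\|\mu_{Z_i}\|_{\mathcal H}^2 = \mathbb K(Z_i,Z_i) = \iint K(\bx,\bx')\,Z_i(d\bx)\,Z_i(d\bx') \leq \kappa$ via the kernel bound. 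This yields a per-sample variance bound of $4\kappa$, and hence $4\kappa/m$ conditionally on $Z_i$; the unconditional inequality then follows by the tower property of expectations.

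The argument is essentially textbook and I do not anticipate a genuine obstacle; the one subtle point worth checking carefully is that $\mu_{Z_i}$ and $\mu_{\wzi}$ truly lie in $\mathcal H$ so that the RKHS norm is meaningful, which is immediate from the uniform boundedness $K(\bx,\bx) \leq \kappa$. The factor $4$ in the stated bound is slightly loose — expanding the inner product directly and using $\mathbb E\langle K(\cdot,\bx_{i1}),\mu_{Z_i}\rangle_{\mathcal H}=\|\mu_{Z_i}\|_{\mathcal H}^2$ yields the sharper bound $\kappa/m$ — but $4\kappa/m$ is already enough for its downstream role in controlling the excess risk $R_n^1$ via the input-perturbation argument sketched in Section~\ref{sec:kme}.
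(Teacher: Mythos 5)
Your proof is correct and follows essentially the same route as the paper's: both expand $\|\mu_{\wzi}-\mu_{Z_i}\|_{\mathcal H}^2$ as a sum over samples, use conditional independence given $Z_i$ to annihilate the cross terms (the Hilbert-space variance identity), bound the per-sample term by $2\|K(\cdot,\bx_{i1})\|_{\mathcal H}^2+2\|\mu_{Z_i}\|_{\mathcal H}^2\leq 4\kappa$, and integrate over $Z_i$. Your side remark that the constant can be sharpened to $\kappa/m$ via $\mathbb E\|X\|^2-\|\mathbb E X\|^2\leq\kappa$ is also correct, though immaterial for the downstream rate.
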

}

\blue{Recall that $\kappa$ is the upper bound on the kernel $K$. The result shows that the distance between the mean embeddings is $O(1/m)$, thus decreasing with the number of samples $m$. This is expected as with more samples collected, $\wzi$ becomes a better approximation of $Z_i$. As we outline below, the impact of perturbation of the distributional inputs from $Z_i$ to $\wzi$ propagates to the excess risk term $R_n^1$ solely through the expected distance between the mean embeddings given in Lemma \ref{lem:kme}.} 

\subsection{\blue{Controlling $R_n^1$}}\label{sec:perturb}
\blue{We show in Section \ref{sec:morder} that  we can have the following decomposition for $R_n^1$: 
\begin{align}\label{eq:decomp}
 %   &= \mathbb{E}_{{Z}, \hat{{Z}}} \int_{[0,1]^d} V(s) + E_2(s) - 2f_0(s) E_1(s) ds
	R_n^1 = \mathbb{E}_{f_0} \int_{[0,1]^d} \left[  V(s) +   E_2(s)  - 2f_0(s) E_1(s) \right] ds
	%   &=  \mathbb{E}_{f_0} \int_0^1 V(s) + S(s)E(s) + 2(f^K_0(s)-f_0(s))E(s)\ ds
\end{align}}
where $V(s) = \text{Var}(f(s)|\mathbb{D}_n) - \text{Var}(f(s)|\mathbb{Z}_n)$, $E_1(s) = \mathbb{E}_{\bm{\varepsilon}} \left[\mathbb{E}(f(s) | \mathbb{D}_n) - \mathbb{E}(f(s) | \mathbb{Z}_n)\right]$ and $E_2(s) = \mathbb{E}_{\bm{\varepsilon}} \left[\mathbb{E}^2(f(s) | \mathbb{D}_n) - \mathbb{E}^2(f(s) | \mathbb{Z}_n)\right]$. \blue{The terms $V(s)$, $E_1(s)$ and $E_2(s)$ are all based on the difference between the expected posterior means and variances corresponding to the two GP distribution regressions (\ref{eq:gpan2}) and (\ref{gp_dr}). As the induced distributional GP kernel  is linear in the mean embeddings (\ref{eq:linearkme}), bounds for all three terms are determined by the expected distance between the mean embeddings established in Lemma \ref{lem:kme}. These bounds are established in Sections \ref{sec:v} through \ref{sec:e2}.}

Combining all \blue{the 3 bounds} we have:
\begin{equation} \label{eq:r1rate_in}
R^1_n = O\left( \frac{n}{\sqrt{m}}  \right)
\end{equation}
if $f_0 \in \mathcal{H}_K$ ($\beta \ge \alpha + d/2$). And if $f_0 \not\in \mathcal{H}_K$ we would get a more complicated and worse rate as:
\begin{equation} \label{eq:r1rate_out}
R^1_n = O\left(\frac{ n^{1 + \frac{(2\alpha - 2\beta + d)\gamma}{2\beta}}}{\sqrt{m}} + \frac{n^{2-\gamma + \frac{(2\alpha - 2\beta + d)\gamma}{2\beta}}}{\sqrt{m}} + \frac{n^{3-2\gamma}}{\sqrt{m}} \right)
\end{equation}
given $n / \sqrt{m} \to 0$. Therefore, when $\alpha = \beta$, setting $\gamma = 1$ we get that $R^1_n = O(n^{-2\beta / (2\beta + d)})$ if $m = O\left(n^{2 + \frac{d}{\beta} + \frac{4\beta}{2\beta + d}}\right)$.

\subsection{\blue{Controlling $R^0_n$}}\label{sec:r0}

\blue{The term $R^0_n$ in (\ref{eq:decomp}) is only based on model (\ref{eq:gpan2}) using the true distributions $Z_i$. Using the induced kernel (\ref{eq:meta}) we can rewrite (\ref{eq:gpan2}) directly as a GP regression on distributional inputs
\begin{align}\label{eq:gpan2dist}
\begin{aligned}
y_i &= F(Z_i) + \varepsilon_i, \, F \sim GP(0,\mathbb K(\cdot,\cdot)), \, \varepsilon_i \iid N(0,\sigma^2).
\end{aligned}
\end{align}

Thus $F$ is a Gaussian random element taking values in a linear space $\mathcal B$ such that there exists an bijection $\pi$ from $\mathcal{B}$ to $\mathcal{C}_0$: the bounded continuous function on $\calX=[0, 1]^d$ and $\forall F \in \mathcal{B}, Z \in \mathcal{S}: \ F(Z) = \mathbb{E}_Z \pi(F)$. The true generative model (\ref{eq:dgp}) then can be written as $\mathbb E_{\varepsilon_i} y_i = F_0(Z_i)$ where $\pi(F_0)=f_0$. 

The bijection $\pi$ can be used to establish that $\calB$ is a separable Banach space allowing us to invoke posterior concentration results for GP regression when the GP takes value in such spaces \citep{van2008rates,van2008reproducing}. This offers a bound for the term $R_n^0$. See Section \ref{sec:banach} for details. }

\section{\blue{Implementation and} extensions}
\subsection{\blue{Bayesian implementation}}\label{sec:bayes}
\blue{Implementations of our proposed GPDR vary in computational complexity and degree of approximation or simplification. A full Bayesian implementation of the model (\ref{gp_dr}) assigning priors on both $\sigma^2$ and kernel parameters $\phi$ is free of any approximation (beyond using Markov Chain Monte Carlo to sample from the posterior). However, } posterior inference from a Gaussian process regression like~\eqref{gp_dr} is typically slow when we have a large sample size, since the computations require $O(n^3)$ time complexity with $n$ be the number of subjects. \blue{This is further complicated when assuming unknown kernel parameters as this $O(n^3)$ computation -- related to inversion of the $n \times n$ GP kernel matrix needs to be repeated within every MCMC iteration. This will be prohibitive even for moderate values of $n$. A compromise would be to fix the kernel parameters, a common practice in kernel regression, which then only requires a one time inversion of the GP kernel matrix in the entire MCMC run. The full Bayesian implementation is also amenable to extensions like fully non-linear GPDR (as outlined in Section \ref{sec:nonlin}) and to non-Gaussian responses. Also, this Bayesian implemention of GPDR can be done using off-the-shelf software like STAN requiring only specification of model (\ref{gp_dr}) and all the priors. We implement this Bayesian version in STAN and demonstrate the accuracy of GPDR in one of the numerical experiments.}

\subsection{Low Rank Approximation} \label{sec:low_rank}

\blue{The computational challenges of Gaussian process regression exacerbate} in our distribution regression setting, since every subject can involve multiple samples, and the resulting posterior mean and covariance operators given in~\eqref{post_mean} require summing over all sampled covariate values of each subject. If using a \blue{Bayesian implementation as outlined in Section \ref{sec:bayes},} the overall time complexity would be $O(n^3 + n^2m^2)$, where $m$ is the number of samples for each subject. \blue{This becomes prohibitive when either $n$ or $m$ is moderately large.}

\blue{We here propose a fast alternative using MCMC-free conjugate Bayesian updates and a low-rank Gaussian process approximation. If the kernel parameters and $\sigma^2$ are held fixed, leveraging conjugacy of the Gaussian likelihood for $y$ with the Gaussian process prior for $f$, the posterior of $f$ is available in closed form as specified in (\ref{post_mean}). Conjugate Bayesian methods are becoming increasingly popular as fast alternatives for GP regression \citep{finley2009improving} as they perform at par with fully Bayesian methods while being orders of magnitude faster \citep{heaton2019case}.} %The kernel parameters are typically estimated based on cross-validation after multiple parallel conjugates updates corresponding to a grid of parameter choices.}

\blue{The conjugate updates (\ref{post_mean}) for GPDR still requires the one-time $O(n^3)$ computation for inverting the full rank the GP kernel matrix.} 
Many excellent methods to scale-up GP computations are now available \citep[see][ for a review]{heaton2019case}. %These mostly focus on improving scalability with respect to $n$, the number of subjects. In our setting, the number of subject will typically grow much slowly than the number of repeated measures (e.g., sub-second measurements or MCMC samples) per subject. So, scalability with respect to $m$ is essential. 
In particular, low- or finite-rank GP approximations \citep{banerjee2008gaussian,cressie2008fixed,finley2009improving, datta2016hierarchical} provide excellent scalable inference. We  %demonstrate how low-rank GP harmonizes with our need to average the GP kernel over all repeated measures %In real world application where exact posterior inference is not needed, we can perform a low rank approximation technique to accelerate the algorithm. 
%and 
propose the following low-rank approximation technique to accelerate our algorithm. 

From~\eqref{post_mean}, using the Representer Theorem, we consider representing $f$ as:
\begin{equation} \label{f_present}
f(s) = \sum_{i=1}^n w_i l_i(s), \hspace{1em} \text{with } \bm{w} =(w_1,\ldots,w_n)^\top \sim \mathcal{N}\left(\bm{0},  \bm{M}^{-1}\right)
\end{equation}
with basis $l_i(\bs) = \sum_j K(\bx_{ij}, \bs)/m_{i}$ and the covariance matrix given by $\bm{M}_{ij} = \sum_{uv} K(\bx_{iu}, \bx_{jv})/(m_{i}m_{j})$. The prediction, $\hat{y}_i = \sum_j f(\bx_{ij})/m_{i} = \sum_j \bm{M}_{ij} w_j$, and we have the matrix representation as $\hat{\bm{y}} = \bm{M} \bm{w}$. Therefore, the posterior mode for $\hat{\bm{w}}$ could be find through minimizing
\begin{equation} \label{eq:minimize}
\norm{\bm{y} - \bm{M} \bm{w}}^2 / \sigma^2 + \bm{w}^\top \bm{M} \bm{w}
\end{equation}
%A typical low-rank method produced by the eigen approximation could then be easily performed. 
Let $\bm{M} = \bm{U}\bm{D}\bm{U}^\top$ be the eigendecomposition of $\bm{M}$. For a fixed $k \ll n$, we approximate $\bm{M}$ as $\bm{U}_k\bm{D}_k\bm{U}_k^\top$, where $\bm{U}_k$ consists of the first $k$ eigenvectors and $\bm{D}_k$ is the diagonal matrix of first $k$ eigenvalues. Restricting $\bm{w}$ be the column space of $\bm{U}_k$ with $\bm{w} = \bm{U}_k \bm{w}_k$, ~\eqref{eq:minimize} becomes
\begin{equation} \label{eq:low_rank}
\min_{\bm{w}_k} \norm{\bm{y} - \bm{U}_k\bm{D}_k \bm{w}_k}^2 / \sigma^2 + \bm{w}_k^\top \bm{D}_k \bm{w}_k
\end{equation}
Such a low-rank approximation~\eqref{eq:low_rank} is very similar to the thin plate spline method~\citep{wood2003thin}. Therefore, the whole process can be efficiently implemented using a spline regression package, such as \textit{mgcv} in R language. Through such an approximation, we could reduce the computation complexity of the regression part from $O(n^3)$ to $O(nk^2)$ using a suitable Lanczos algorithm. The $O(m^2)$ part for computing each entry of $\bm{M}$ remains the same. %Therefore the algorithm is still costly when we get big $m$. 
However, since these entries are simply averages over the repeated measures, the computation could be fully parallelized. Also, when $m$ is really large, down-sampling or binning the observation $\bx_{ij}$ into tractable size will further speed up the algorithm while yielding a reasonable approximation.
%\note{Compare BDR paper: Please rephrase the words below if not good:}

The low-rank approach has some similarities with the Bayesian distribution regression in \citet{law2018bayesian}. However, they use the fixed landmark points method, i.e., they are representing the regression function $f$ directly as $f(s) = \sum_{i=1}^R w_i K(\bs_i, \bs)$ with manually  selected and fixed landmark points $\{\bs_i\}$. By using a limited number of landmark points, they can control the size of the $\bm{M}$ covariance matrix. %When using a limited number of landmark points, that method is generally faster than ours. 
However, the Representer theorem suggests equation~\eqref{f_present} is the correct basis expansion. So the approach in \cite{law2018bayesian}, using fixed landmarks to express the function, is hard to justify theoretically. We thus expect our method to offer a better approximation and observe this in empirical studies.

\subsection{\blue{Non-linear extension}}\label{sec:nonlin}
\blue{As discussed in Section \ref{sec:an}, the GP distribution regression (\ref{gp_dr}) is a linear model on the space of distributions. More formally, when (\ref{eq:gpan2}) is represented as a GP regression (\ref{eq:gpan2dist}) on the distributional outputs, we have $\mathbb E_{\varepsilon_i} y_i = F(Z_i)$ where $F \sim GP(0,\mathbb K)$. The induced distributional kernel $\mathbb K$ given in (\ref{eq:meta}) is linear in the mean embeddings as seen in (\ref{eq:linearkme}). This linear kernel is not an universal kernel (see \cite{micchelli2006universal} for a definition), i.e, the resulting RKHS $\mathcal H_{\mathbb K}$ does not have the universal approximating property of being able to approximate any continuous function on the space of distributions. For example, if the data is generated as $\mathbb E_{\varepsilon_i} y_i = F_0(Z_i)$ where $F_0(Z) = \mbox{median}(Z)$ or $F_0(Z) = (\mathbb E_{Z} f^*)^2$ for some function $f^*$, then there is no $f_0$ for which $F_0(Z)=\mathbb E_{Z} f_0$ for all $Z$ in any reasonably rich collection $\calS$ of distributions. See Supplementary Section \ref{sec:nonuni} for concrete examples and more detailed explanation of why $\mathbb K$ is not universal.

We present a simple non-linear extension of the GP distribution regression model. We consider data generated as
%\begin{equation}
$\mathbb E_{\varepsilon_i} y_i = F_0(Z_i)$ % + \varepsilon_i,\, \varepsilon_i \iid N(0,\sigma^2),\, \{\bx_{ij} \}_j \iid Z_i, \, Z_i \iid  \mathcal{Z}, 
%\end{equation}
where $F_0$ is some, possibly non-linear, functional linking $Z_i$ to $y_i$. A analysis model for such data will be a regression $\mathbb E_{\varepsilon_i} y_i = F(Z_i)$ where $F$ needs to be modeled to be from a sufficiently rich class of functionals in order to capture possible non-linearity.} 
We first motivate our direct approach using  permutation or order invariance. \blue{As we only observe the samples $\wzi$,} 
we consider regression using the entire set of samples, i.e., we consider $y_i = F(\wzi) + \varepsilon_i$. \blue{As $\wzi$ can be completely characterized by the samples $\{\bx_{ij}\}_j$ we can think of} $F$ is a mapping from set $\{\bx_{ij}\}_j$ to a real number. 

\blue{The premise of of distribution regression is that order of collection of the samples $\bx_{ij}$ is unimportant,} i.e., the index $j$ in the samples $\bx_{ij}$ does not represent any meaningful information. \blue{If that is not the case, then $\{\bx_{ij}\}$ should be considered as functional data and functional regression approaches will be more suitable. For distribution regression we would want the function $F$ to be invariant to the order of the samples $\bx_{ij}$.} 
%If the repeated measures, $\bx_{ij}$, can be considered as exchangeable samples, i.e., the index $j$ does not represent any meaningful information, then 
%$\bx_{ij}$ as i.i.d samples from some distribution, we are treating them as . 
%$F$ must be invariant to any  of label $j$. Under such permutation, 
\citet{zaheer2017deep} shows any such permutation-invariant function $F$ can be represented as $F(\{\bx_{ij}\}_j) = \phi\left(\sum_j f_i(\bx_{ij})\right)$ for some functions $\phi$ and $f_i$. The linear GP distribution regression (\ref{gp_dr}) then comes naturally with $\phi_i(x) = x$ (linear link) and $f_i = f/m_{i}$ with $f$ being the common non-linear regression function shared across all subjects. The normalizing factor, $1/m_{i}$, ensures that different numbers of samples contribute equally.

\blue{We can easily relax the assumption of linearity of the link to have a non-linear link function. Formally, we let $\phi_i=\phi$, a non-linear link function common to all the subjects (as subject-specific links cannot be identified with only one observed response per subject). In principle $\phi$ can also be modeled as a GP, in addition to modeling $f$ as a GP. Alternatively, $\phi$ can be modeled using a basis expansion, which we do here. This leads to the following single-index type non-linear GP distribution regression model. 

\begin{equation}\label{eq:nlgpdr}
\begin{split}
y_i  &= \phi\left(\frac 1{m_i} \sum_{j=1}^{m_i} f(\bx_{ij})\right) + \varepsilon_i, f \sim GP(0,K(\cdot,\cdot)), \, \varepsilon_i \iid N(0,\sigma^2),\\
\phi(x) &= \sum_{r=1}^R \gamma_r \phi_r(x),\, \bgamma = (\gamma_1, \ldots, \gamma_R)^\top \sim N(\bmu_\gamma, \bSigma_\gamma).
\end{split}
\end{equation}

Here $\phi_1,\ldots,\phi_R$ denote a set of known basis functions. The unknown basis coefficients  are stacked in $\bgamma$ which is assigned a conjugate multivariate normal prior. Like the full Bayesian version of the linear model described in Section (\ref{sec:bayes}), the non-linear GPDR can also be implemented by simply specifying the model (\ref{eq:nlgpdr}) in an off-the-shelf software to run Bayesian algorithms. We offer a proof-of-concept implementation of (\ref{eq:nlgpdr}) in STAN and provide a comparison of the linear and non-linear GPDR in one set of numerical experiments.}

\subsection{Incorporating other subject-specific information}\label{sec:subject}

The Bayesian hierarchical model framework~\eqref{eq:dgp} and~\eqref{gp_dr} we proposed for distribution regression is very flexible and can be easily extended for more structured data types with additional subject-specific information. For example, if there are other measured vector-valued covariates $W_i$ for each subject, it can be easily included along with the distribution-valued covariates, by including a linear regression term. Thus, we extend~\eqref{eq:dgp} to $\mathbb E_{\varepsilon_i} y_i = \beta_0^\top W_i + E_{Z_i} f_0$ and the analysis model~\eqref{gp_dr} similarly \blue{by adding a linear regression term with coefficient $\beta$.} Implementation will remain efficient, as the Gibbs sampler will have conjugate updates for the regression coefficient $\beta$, and conditional on $\beta$, the updates for $f$ will be similar to~\eqref{post_mean} but replacing $y_i$ with $y_i - \beta^\top W_i$. 

Similarly, if the data are clustered in nature, one can easily modify the outcome model to have $\mathbb E(y_{ic}) = \mathbb E_{Z_i}(f_{c})$, where $c$ denotes the $c^{th}$ cluster and $f_{c}$ denotes cluster-specific regression functions, which can be modeled as exchangeable draws from a Gaussian process, akin to standard random effect models for clustered data. If the data are functional in nature, but both the functional and distributional aspects are important, then one can extend the outcome model to have both a functional and a distributional regression component. Inference from all these models, like our base model, can be obtained using off-the-shelf Bayesian software. We will pursue these extensions in the future. 

\section{Simulations}\label{sec:sim}
One difference between our proposed \blue{Gaussian process distribution regression (GPDR)} to \blue{many} existing \blue{methods for distribution regression,} like \citet{poczos2013distribution,oliva2014fast}, is that we do not require estimating the underlying densities. Because the sample mean is the best estimator of the expectation in the non-parametric sense, we would expect our model to do better than functional linear models with estimated densities. Also, by using full samples, our model should have better asymptotic performance compared to the Bayesian models presented in \citet{law2018bayesian} with only a fixed amount of landmark points. In this section, we conduct a simulation study to show that our method converges and has a better convergence rate compared to the alternatives using either estimated densities or landmark points.

We simulate our data as
\begin{equation}\label{sim:ys}
Z_i \sim DP\left(\text{Unif}[0, 1], 25\right),\, \bx_{ij} \iid Z_i; \varepsilon_i \iid \mathcal{N}(0, 0.01) %\label{sim:xs},
\; 
y_i = \mathbb{E}_{Z_i}(f_0) + \varepsilon_i,
\end{equation}
where $f_0(x) = 10x \cdot \exp(-5x)$ is infinitely smooth within $[0,1]$.
We draw $n$ by $m$ samples $\bx_{ij}$ and corresponding $y_i$ from (\ref{sim:ys}). Here $n$ ranges within $\{50, 100, 200, 300, 400\}$ and $m$ ranges within $\{50, 100, 250, 500, 1000, 2000\}$. We do the exact posterior inference using~\eqref{post_mean}, not the low-rank approximation,  and compare the empirical risk $\int \norm{f - \hat{f}}_2^2 d \Pi(\hat{f}\ |\ \{X, Y\})$ estimated with 100 samples from the posterior process $\Pi(\hat{f}\ |\ \{X, Y\})$ for each combination of $n, m$. For the \textit{Bayesian density regression (BDR)} model introduced in \citet{law2018bayesian}, we use 10 and 50 evenly spaced landmark points in $[0, 1]$ and set all other hyperparameters as default. We also compare our method with a direct density estimation alternative, that is, to replace empirical expectation $\sum_j f(\bx_{ij}) / m$ with $\int f(\bx) \widehat{dZ_i}(\bx) d(\bx)$ in~\eqref{gp_dr}, where $\widehat{dZ}_i$ is a density of $Z_i$ estimated from kernel density estimation (KDE). The KDE analysis model is thus given by 
\begin{align} \label{gp_dr_kde}
	y_i \sim \mathcal{N}\left(\int f(\bx) \widehat{dZ_i}(\bx) d(\bx), \sigma^2\right),\,
	f \sim GP(0, K).
\end{align}
It is easy to see that such an alternative is a functional linear regression in Reproducing Kernel Hilbert Space with the same kernel $K$ as in (\ref{gp_dr}).

\begin{figure}[!h]
	\centering
	\includegraphics[width=0.95\linewidth]{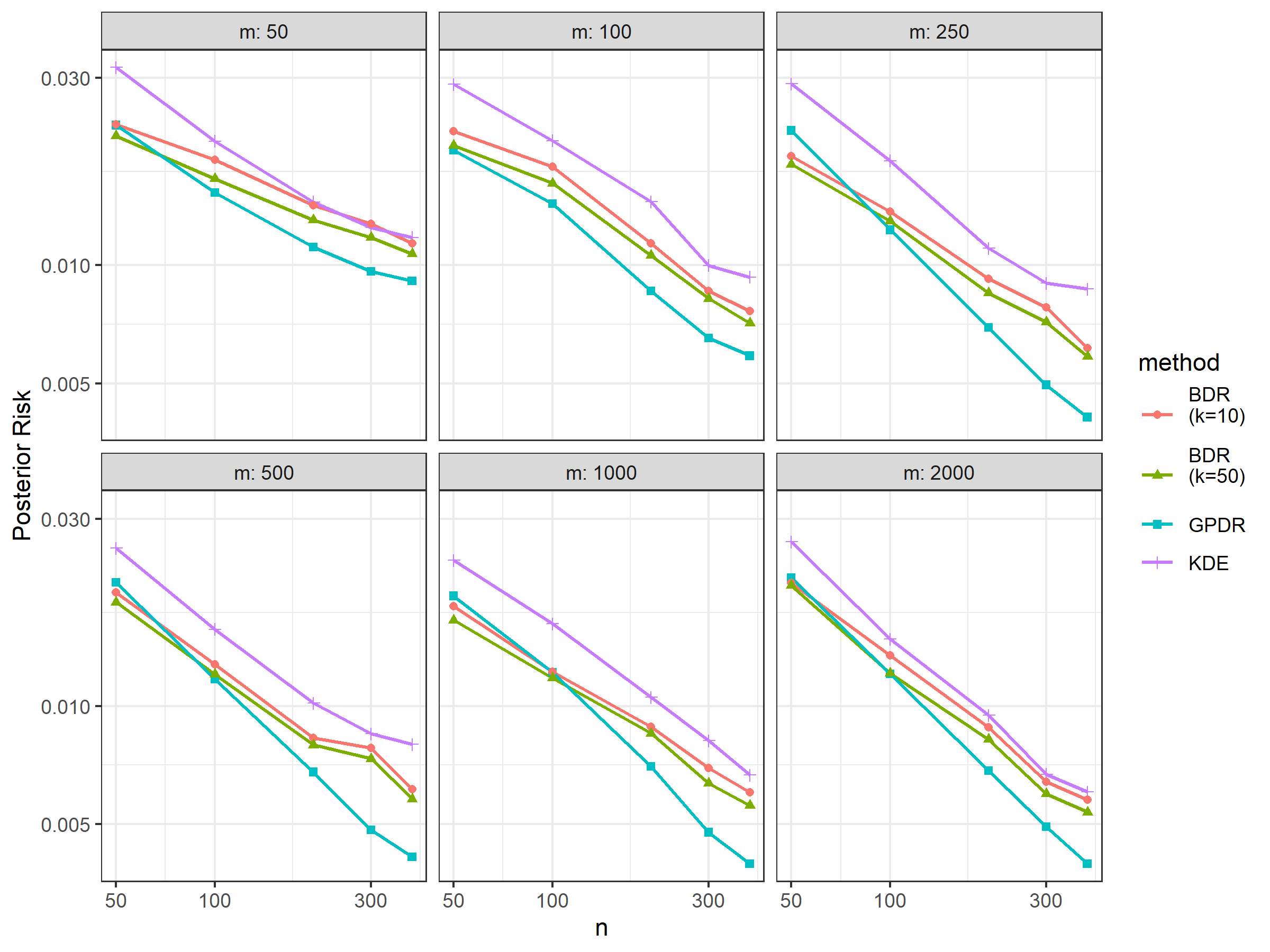}
	\caption{\label{fig:sim}The empirical posterior risk for every combination of $n$ and $m$ under 100 runs, where $n$ stands for the number of subjects and $m$ is the number of samples for each subject. GPDR is for our Gaussian process distribution regression model; KDE stands for its direct alternative~\eqref{gp_dr_kde} using kernel density estimated expectations; BDR stands for the Bayes distribution regression method suggested in \citet{law2018bayesian}, where $k$ is the number of landmark points.}% \pink{AD: Change the name of our method to GPDR in the legend.}} \note{Changed}
\end{figure}
We use a typical Gaussian kernel $K(s, t) = \exp\left(-(s-t)^2 / 2l\right)$, where $l$ is set to $0.25$. We run every setting 100 independent times and report the final mean empirical risk with confidence intervals for every setting. We show the results in Figure~\ref{fig:sim}. It is clear that the estimation error from our algorithm converges in polynomial rate when we increase $n$ with sufficient large $m$. This is at par with our theory. We see that our method is significantly better than its kernel density estimation alternative across all the settings. We also have a better convergence rate and smaller estimation error compared with the \textit{BDR} method in \citet{law2018bayesian} for large $n$, irrespective of the choice of landmark point numbers. This is not surprising since \textit{BDR} uses only a fixed number of basis functions and the gap is similar to that of fixed basis spline regression compared to Gaussian process regression. {\color{blue}We remove BDR from the further performance comparisons provided below, due to its slow running time.}

%\pink{AD: Describe the BDR paper method used, the results and explain why they perform worse than ours (one reason seems to be the choice of $K$ which controls the function class, they tend to perform better for large $K$ as expected.} \note{Added}\\

% \begin{figure}
%  \begin{subfigure}{.5\textwidth}
%   \centering
%   \includegraphics[width=\linewidth]{}
%   \caption{}
%   \label{fig:s1}
%  \end{subfigure}%
%  \begin{subfigure}{.5\textwidth}
%   \centering
%   \includegraphics[width=\linewidth]{}
%   \caption{}
%   \label{fig:s2}
%  \end{subfigure}%
%  \\
%  \caption{\ref{fig:s1} and \ref{fig:s2} show the empirical loss for every combination of $n$ and $m$ under 100 runs, where $n$ stands for the number of subjects and $m$ is the number of samples for each subject. \textbf{EXP} label in \textit{method} column stands for our model \ref{gp_dr} using empirical expectation, and \textbf{KDE} stands for its direct alternative \ref{gp_dr_kde} using kernel density estimated expectations.}
% \label{fig:sim}
% \end{figure}

\subsection{\blue{Continuous distribution and dependent samples}}\label{sec:simcont}
{\color{blue} 
In the previous section, the distributions $Z_i$ were generated from a Dirichlet process to ensure that the strong separability condition is met (Lemma \ref{lem:dp}) and the data are generated under conditions for which Theorem \ref{th:random} on estimation consistency holds. Realizations of DP are almost surely discrete distributions, although the large concentration parameter $\alpha$ ensured that the distribution is nearly continuous (in the sense the CDF of $Z_i$ can be closely approximated in the supremum norm by continuous functions).

We also assess the generalizability of the approaches to truly continuous distributions to assess robustness to violation of the strong separability assumption. We generate data as follows: With $\logit (x) = \log (x/(1-x))$, we hierarchically simulate %$\logit (\bx_{ij}) \ind \mathcal{N}(c_i, 0.09)$ and $c_i \iid \mathcal{N}(0, 4)$.
\begin{equation*}
    c_i \iid \mathcal{N}(0, 4), \,\, \logit (Z_i) = N(c_i, 0.09), \,\, (\bx_{ij}) \iid Z_i. %\,|\, c_i \ind  
    %\quad \text{and} \quad .
\end{equation*}
Thus, $Z_i$ is a continuous (logit-normal) distribution. Here, the logit transformation is used to ensure the support of $Z_i$ is compact $([0,1])$. 

We also consider a case where the samples $\bx_{ij}$ are not independent and there is a serial within-subject correlation. For each subject $i$, we induce correlation between $\{\bx_{ij}\}_j$ to analyze the impact of dependent covariates. Let $\Phi = (\rho^{\abs{i-j}})$ denote the $m \times m$ correlation matrix under the AR(1) dependence with coefficient $\rho$. Then for each subject $i$, we simulate dependent covariates $\bx_i$ as
\begin{equation*}
    c_i \iid \mathcal{N}(0, 4),\,\, \logit (Z_i) = N(c_i, 0.09),\,\, \logit (\bx_i) \,|\, c_i \ind N (c_i \boldsymbol{1}, 0.09 \Phi),
\end{equation*}
where $\logit (\bx_i) = (\logit (\bx_{i1}), \cdots,\logit (\bx_{im}))^\textsc{T}$ and $\boldsymbol{1}$ is a vector of length $m$ with all entries 1. Generating $\bx_i$ this way ensures that each $\logit(\bx_{ij}) \sim N(c_i,0.09)$, i.e, $\logit(\bx_{ij}) \sim \logit(Z_i)$, i.e.,  $\bx_{ij} \sim Z_i$ but that $\bx_{ij}$ are not independent across $j$ and they exhibit AR(1) correlation in the logit scale. 

\begin{figure}[!h]
	\centering
	\includegraphics[width=.9\linewidth]{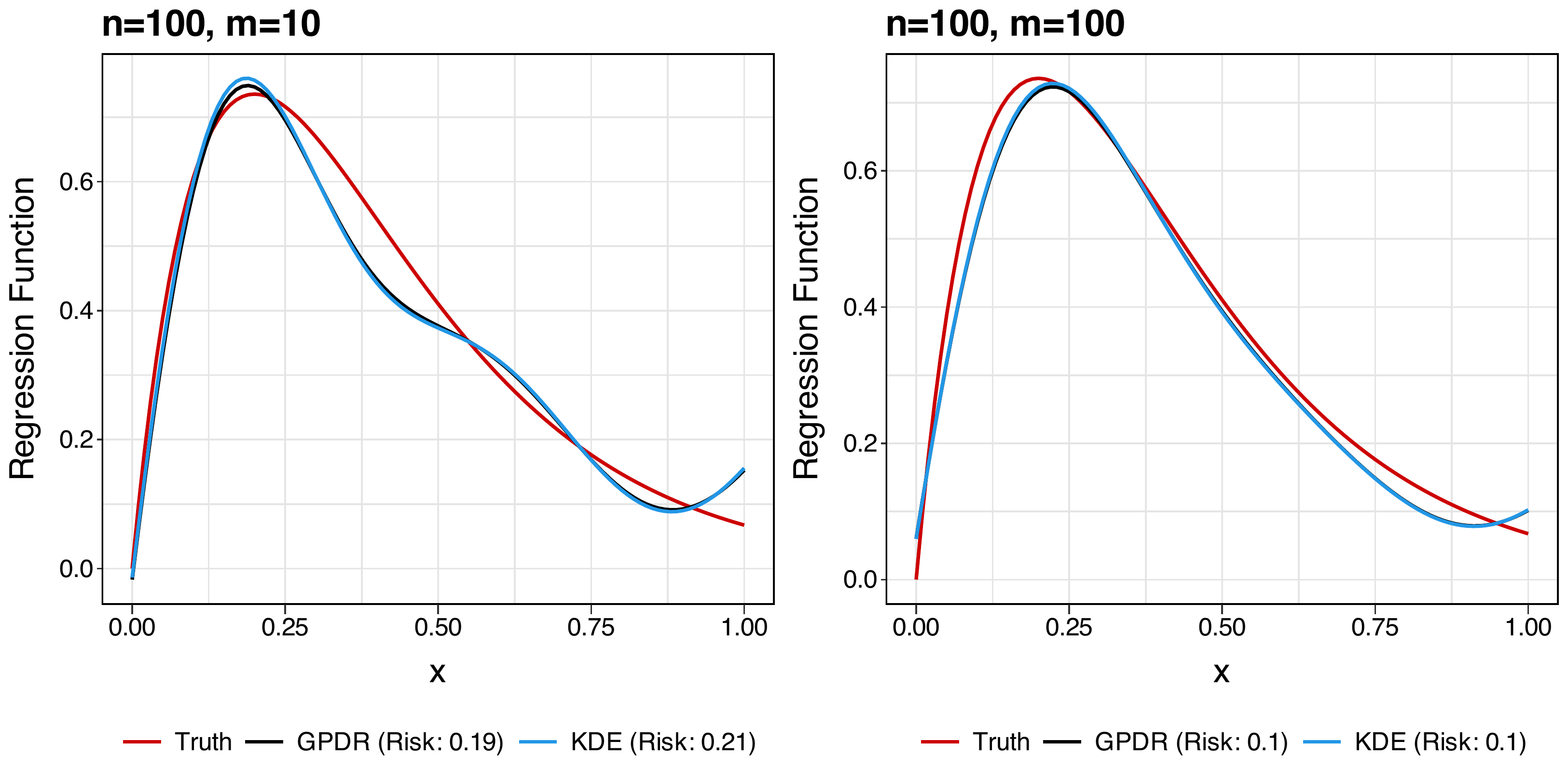}
	\caption{\label{fig:reg func CP dep}Given a simulated data with dependent continuous covariates for $\rho=0.5$, this compares regression function estimates from KDE and GPDR with the truth.}
\end{figure}

We consider three choices of $\rho$: $0$, $0.5$ and $0.8$ with $\rho=0$ corresponding to the setting of i.i.d. samples. We first show results for one simulated dataset for one of dependent settings with $\rho=0.5$. Figure~\ref{fig:reg func CP dep} compares regression function estimates using GPDR and KDE. We see that both methods perform similarly well in this case of continuous distributions and dependent samples, estimating the regression function accurately. The density estimate from KDE for samples generated from a continuous distribution closely represents the true distribution $Z_i$ well. So the performance of KDE for this scenario is much better than when the distributions are discrete. However, even for this scenario that is favorable to KDE and misspecified for the estimation theory of GPDR, it performs at par with KDE. Also, the results show even for dependent samples, GPDR can estimate the regression function accurately.%The performance of the methods for independent covariates is similar to this dependent case and is omitted here. 
The average performance across $100$ replicate datasets for all different data generation scenarios also reveals GPDR to be consistently better or at par with KDE. This is presented in Figure \ref{fig:risk ratio} and discussed in detail later. 

\begin{figure}[!h]
	\centering
	\includegraphics[width=\linewidth]{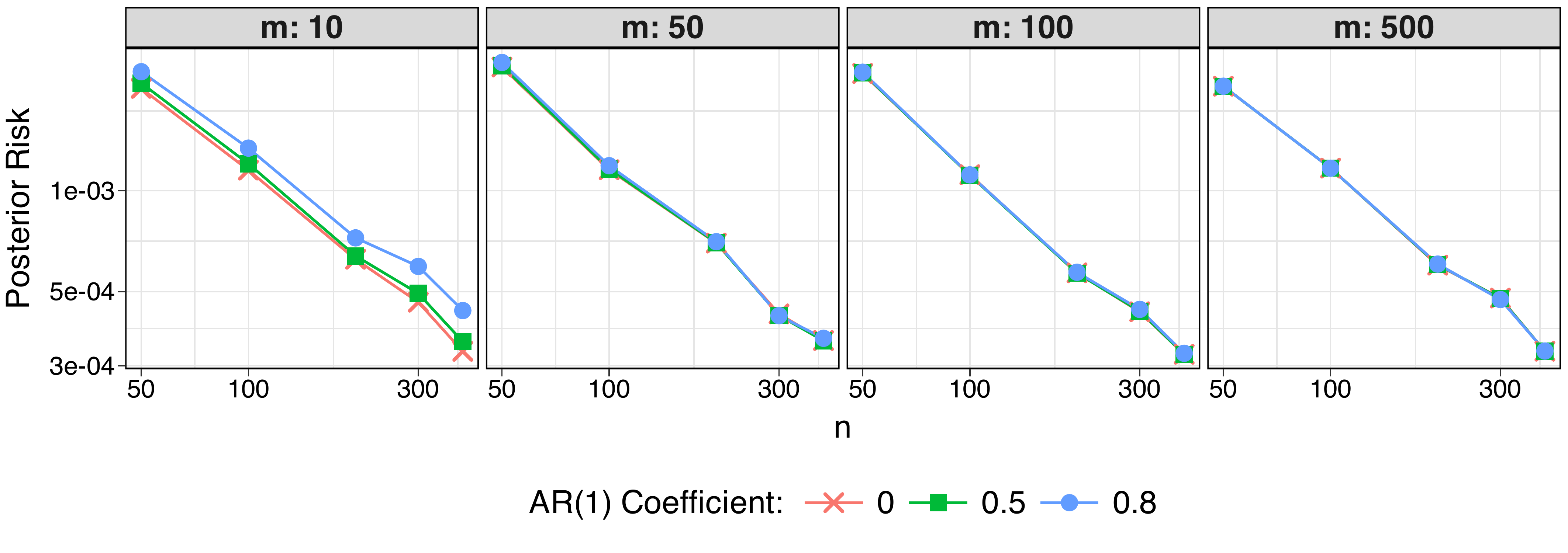}
	\caption{\label{fig:risk gpdr varying dependence} For different $n$ and $m$, this presents the empirical posterior risk of GPDR based on 100 simulation runs where $\rho$ is varied as 0 (independent), 0.5 (moderate), and 0.8 (high).}
\end{figure}
In Figure~\ref{fig:risk gpdr varying dependence}, we examine the impact of within-subject dependence on GDPR risk by varying the degree of dependence, represented by $\rho$, at three levels: 0 (independent), 0.5 (moderate), and 0.8 (strong). The figure reveals that for small values of $m$, the risk is higher when the dependence is strong.  However, as $m$ increases, the performance improves, and the risks become similar across all $\rho$ values tested. These findings collectively suggest that for moderate or large $m$, GPDR performs comparably to situations where covariates are uncorrelated while for small $m$ i.i.d. samples are desirable. %It is worth noting that our theory and simulations focus on AR(1) dependence, which weakens the dependence at a polynomial rate. %Nonetheless, 
The results underscore the reliability and robustness of GPDR even under within-subject covariate dependence, corroborating the theory for the dependent case in Theorem \ref{th:dep}.

\subsection{\blue{Improvement over kernel density estimation based methods}}\label{sec:simkde}
The simulation experiments summarized in Figure \ref{fig:sim} already showed the tangible superiority of GPDR over KDE. We now present a case of more extreme improvement afforded by GPDR over density estimation based methods. The data generation process given in Equation (\ref{sim:ys}) generated the distributional covariates as draws of a Dirichlet process with concentration parameter %of the Dirichlet Process, say 
$\alpha=25$. This parameter $\alpha$ governs the degree of discreteness in the subject-specific distribution $Z_i$. As $\alpha \uparrow \infty$, they spread out across a wider range, exhibiting more continuous behavior. For $\alpha \downarrow 0$, the observed values $\bx_{ij}$ tend to cluster around a few distinct values. Figure~\ref{fig:DP 0.1 vs 25} gives an illustration by comparing empirical CDFs of $Z_i$ for $\alpha \in \{0.1, 25\}$, based on 100 random samples
\begin{figure}[!h]
	\centering
	\includegraphics[width=0.8\linewidth]{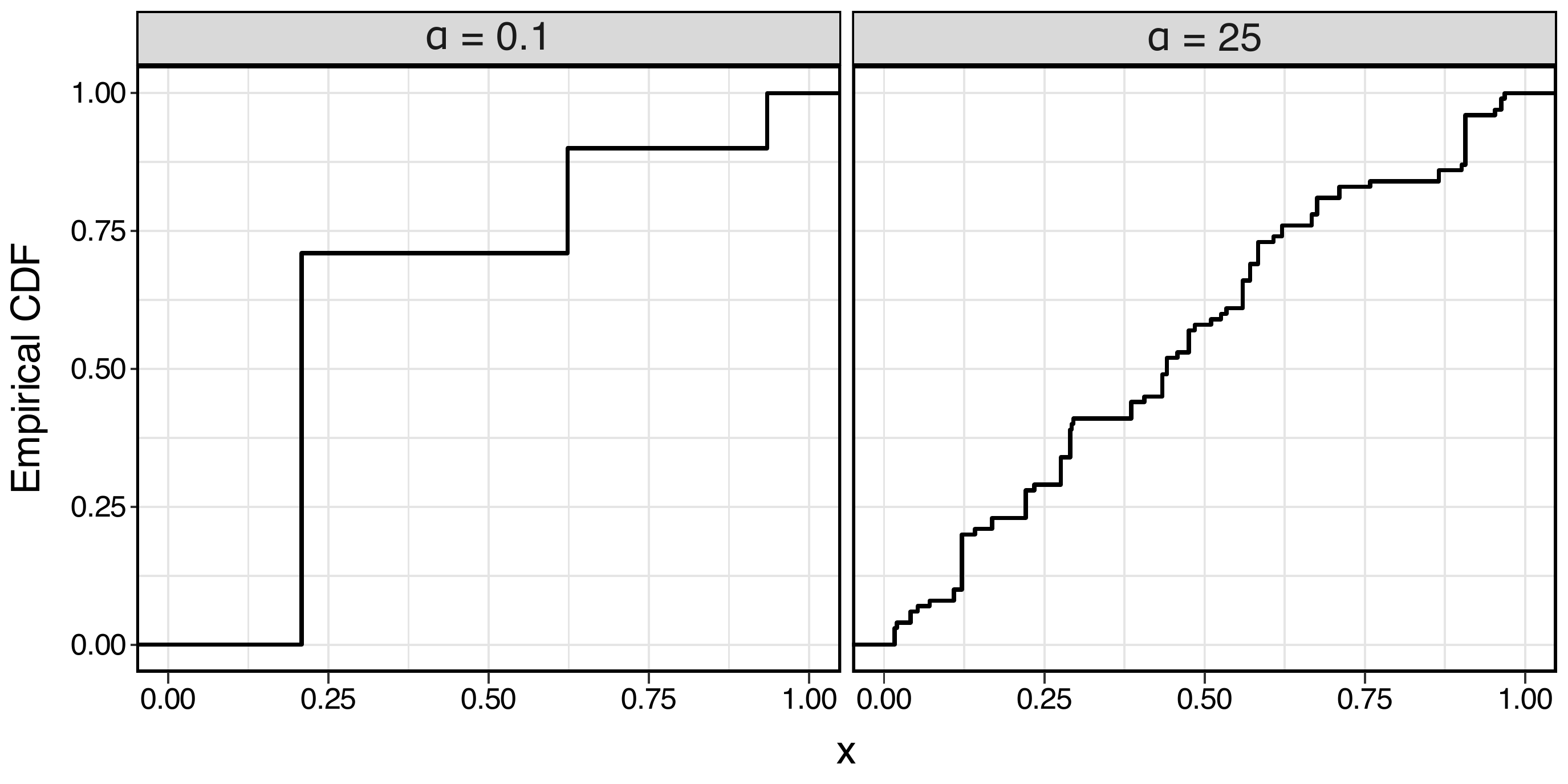}
	\caption{\label{fig:DP 0.1 vs 25}The empirical CDF of 100 random samples drawn from the Dirichlet Process with concentration parameters $(\alpha)$ 0.1 and 25.}
\end{figure}
  We see that for $\alpha=25$, which was used in (\ref{sim:ys}), the distributions are supported on many points with small point masses and the corresponding CDF is much closer to continuous CDFs (in terms of sup norm). In contrast, for $\alpha=0.1$ we obtain distributions with few points of support and large point masses. 
\begin{figure}[!h]
	\centering
	\includegraphics[width=.9\linewidth]{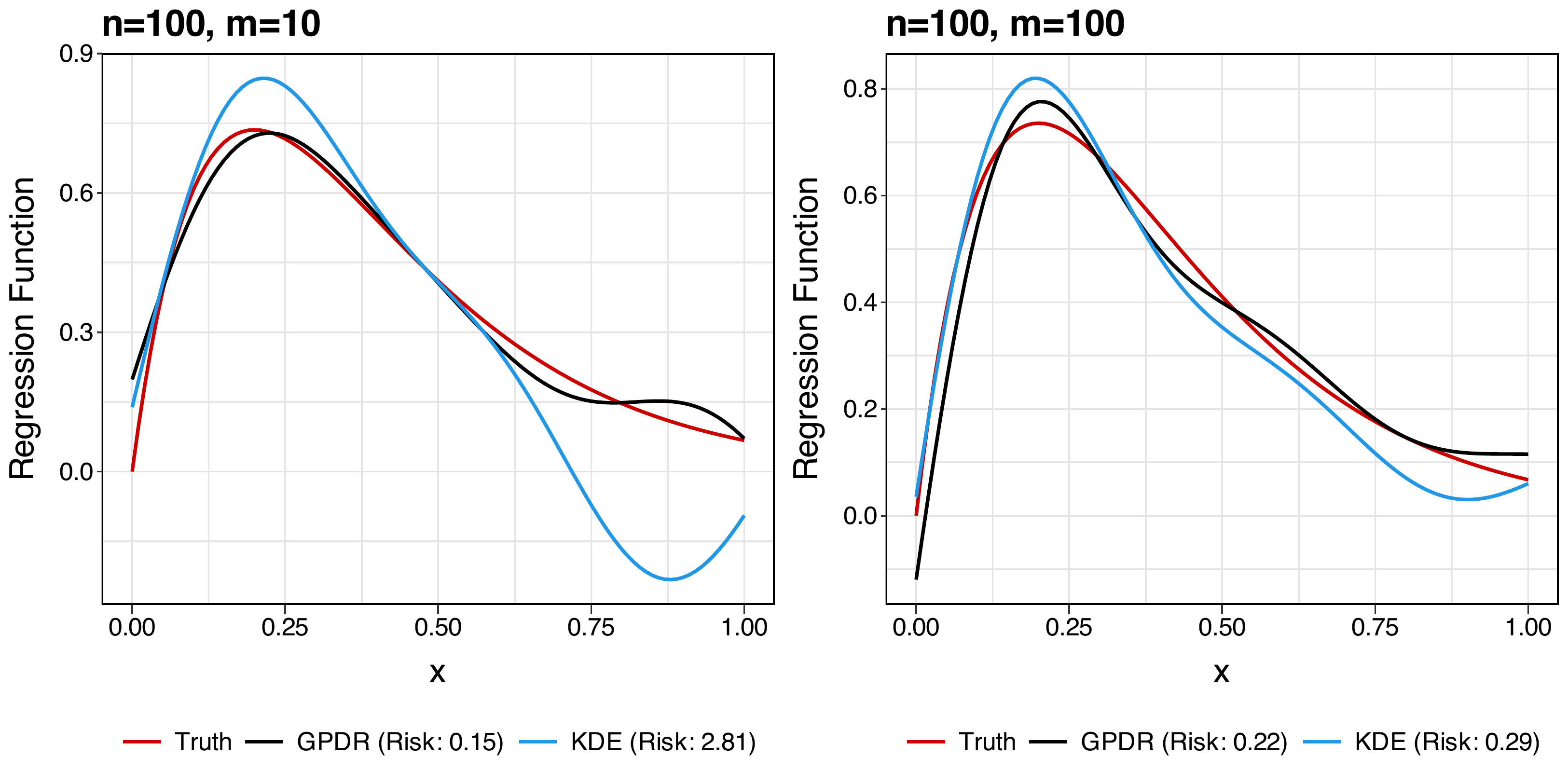}
	\caption{\label{fig:reg func DP 0.1}Given a simulated data with DP concentration parameter 0.1, this compares regression function estimates based on KDE and GPDR with the truth.}
\end{figure}
\begin{figure}[!h]
	\centering
	\includegraphics[width=\linewidth]{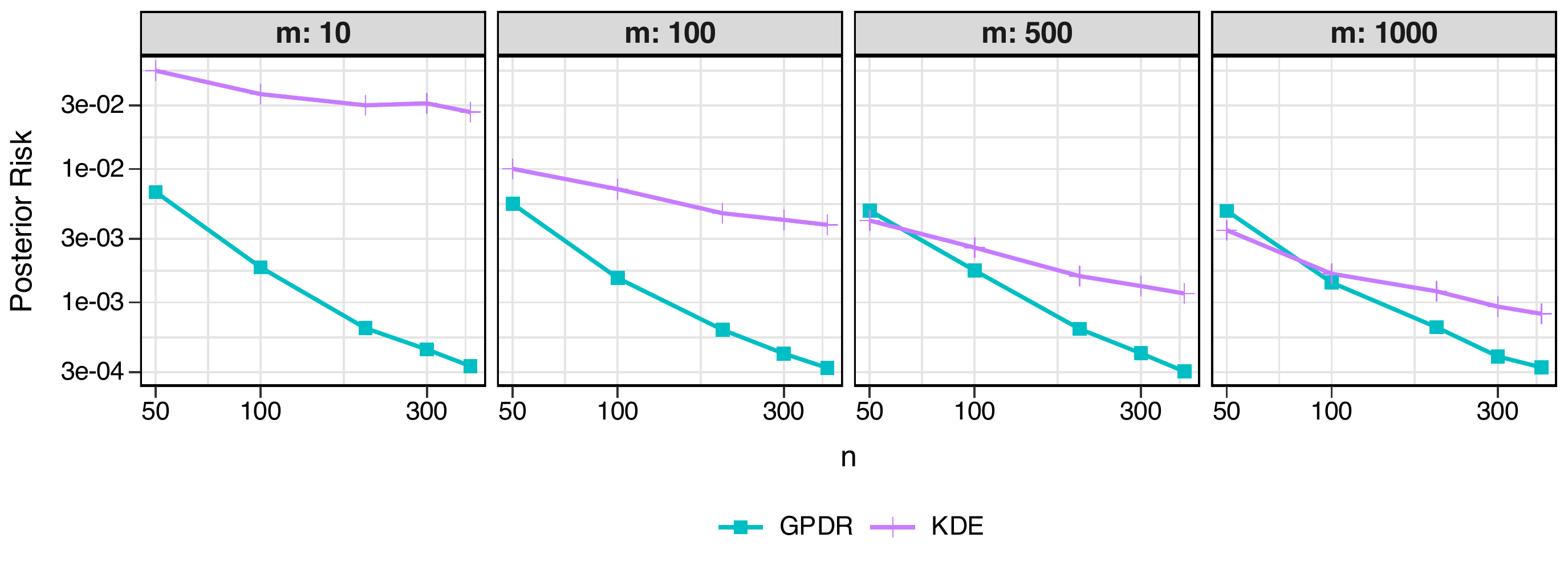}
	\caption{\label{fig:risk DP 0.1}The empirical posterior risk for different $n$ and $m$ under 100 runs with DP concentration parameter 0.1.}
\end{figure}
We now repeat the simulation experiments, using $\alpha=0.1$ instead of $\alpha=25$ in the data generation process (\ref{sim:ys}). %Given an observed data for $\alpha=0.1$, 
Figure~\ref{fig:reg func DP 0.1} compares regression function estimates using GPDR and KDE with the truth for two values of $m$. Due to the discreteness, it is challenging to estimate densities $Z_i$, particularly for small $m$. The regression function estimation in KDE suffers from this, with estimates being far from $f_0$ at both ends. With GPDR not relying on density estimates, it provides a significantly better estimate even for such small $m$. As $m$ increases to 100, the KDE estimation improves, but GPDR still performs better in comparison.  Figure~\ref{fig:risk DP 0.1} compares their performances over $100$ independent simulation runs. It shows that GPDR performs significantly better than KDE across all settings. In particular, GPDR producing risk estimates that are over $100$-fold smaller than those of KDE for smaller $m$. This reflects the drastic gains afforded by GPDR over KDE when the true distributions are discrete and when the density estimation becomes challenging (small $m$). 

\begin{figure}[!h]
	\centering
	\includegraphics[width=\linewidth]{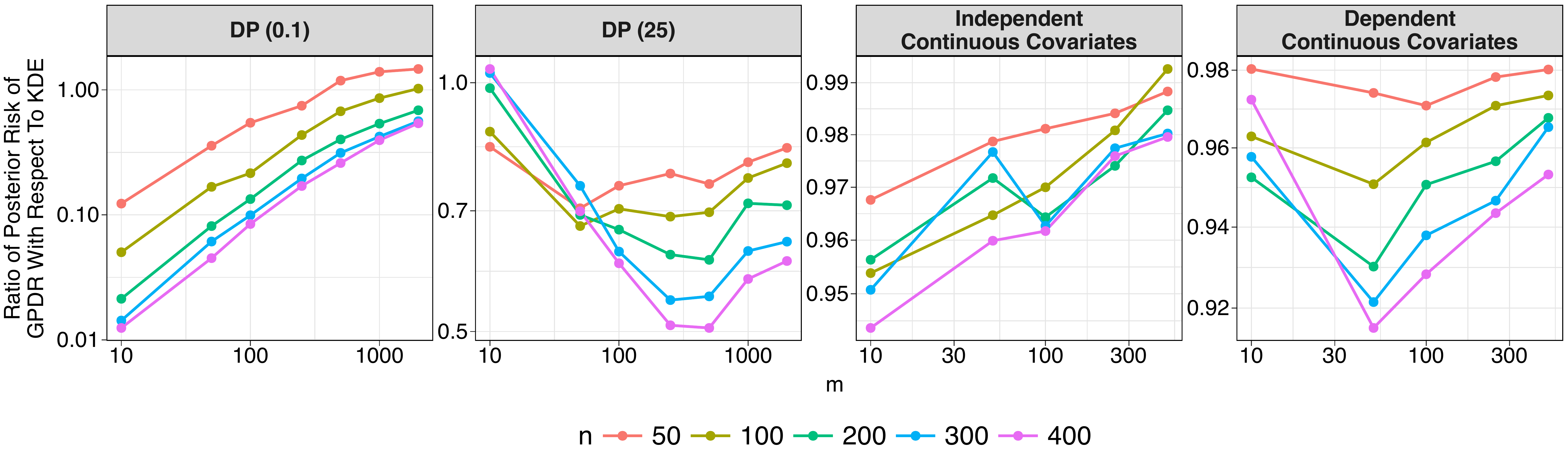}
	\caption{\label{fig:risk ratio}For different $n$ and $m$ with 100 simulation runs, this presents the ratio of empirical posterior risks of GPDR to KDE. $\rho$ is set to 0.5 for simulating dependent covariates.}
\end{figure}
Figure~\ref{fig:risk ratio} summarizes performance across all 4 scenarios considered in the simulation experiments (DP with $\alpha=25$, DP with $\alpha=0.1$, continuous distributions with i.i.d. samples, continuous distributions with correlated samples for $\rho=0.5$). We  plot the ratio of the posterior risk of GPDR to KDE. Combined across all settings, GPDR in general performs better than KDE. Only for small $n$ and large $m$, the two perform similarly. For small $m$ and highly discrete case $\alpha=0.1$, the gains from GPDR over KDE are dramatic. Also, with increasing $n$, the GPDR risk continues to improve over KDE, i.e., the risk ratio decreases across all 4 scenarios.

}

\subsection{\blue{Linear and non-linear GPDR}}\label{sec:simnl}

\blue{In this Section, we conduct a numerical experiment to compare the performance of linear  (\ref{gp_dr}) and non-linear (\ref{eq:nlgpdr}) GPDR. The non-linear GPDR does not enjoy the conjugacy of the posterior as the link function $\phi$ is also unknown, in addition to the regression function $f$, and the joint posterior of $(f,\phi)$ is not conjugate even if the kernel parameters are held fixed. So we use a MCMC-based implementation in STAN. For fairness of comparisons, the linear GPDR used for this experiment is also implemented in a fully Bayesian way in STAN (as detailed in Section \ref{sec:bayes}).

We consider both a linear and a non-linear generative model. For the linear case, the generative model is (\ref{eq:dgp}), which makes the linear GPDR analysis model (\ref{gp_dr}) correctly specified. Of course, the non-linear model (\ref{eq:nlgpdr}) is also a correctly specified in this case but constitutes a richer class of models. 
For the non-linear generative model we simulate $\mathbb E_{\varepsilon_i} y_i = \phi(\mathbb E_{Z_i} f_0)$ with a non-linear link $\phi(x) = \exp(x)$. In both setup the true regression function is $f_0(x) = \sin (2\pi x)$ for $\bx \in [0,1]$. We use $m=25$ and $n=20$ as the full Bayesian implementation scales poorly when $n$ or $m$ is moderately large. For non-linear GPDR we model the link as a polynomial (cubic) basis expansion with normal prior on the basis coefficients. 

Figure \ref{fig:lin} estimates of the regression function $f$ from linear and non-linear GPDR for the linear data generation scenario. We see that the linear model offer an accurate posterior mean estimate for the function $f_0$ with tight uncertainty bounds. The posterior mean of $f$ from the non-linear GPDR is also accurate, however, it has very large uncertainty bounds. This is expected as the non-linear GPDR is a much richer class of models and the uncertainty in estimation of the unknown link function is propagated into the uncertainty estimates for $f$. The linear model, being correctly specified and without redundant parametrization, is unsurprisingly more efficient. Thus if the data generation process can be well approximated by a linear model, it serves well to use the linear GPDR to obtain more confident estimates. 

\begin{figure}[h]
    \centering
    \includegraphics[scale=0.13]{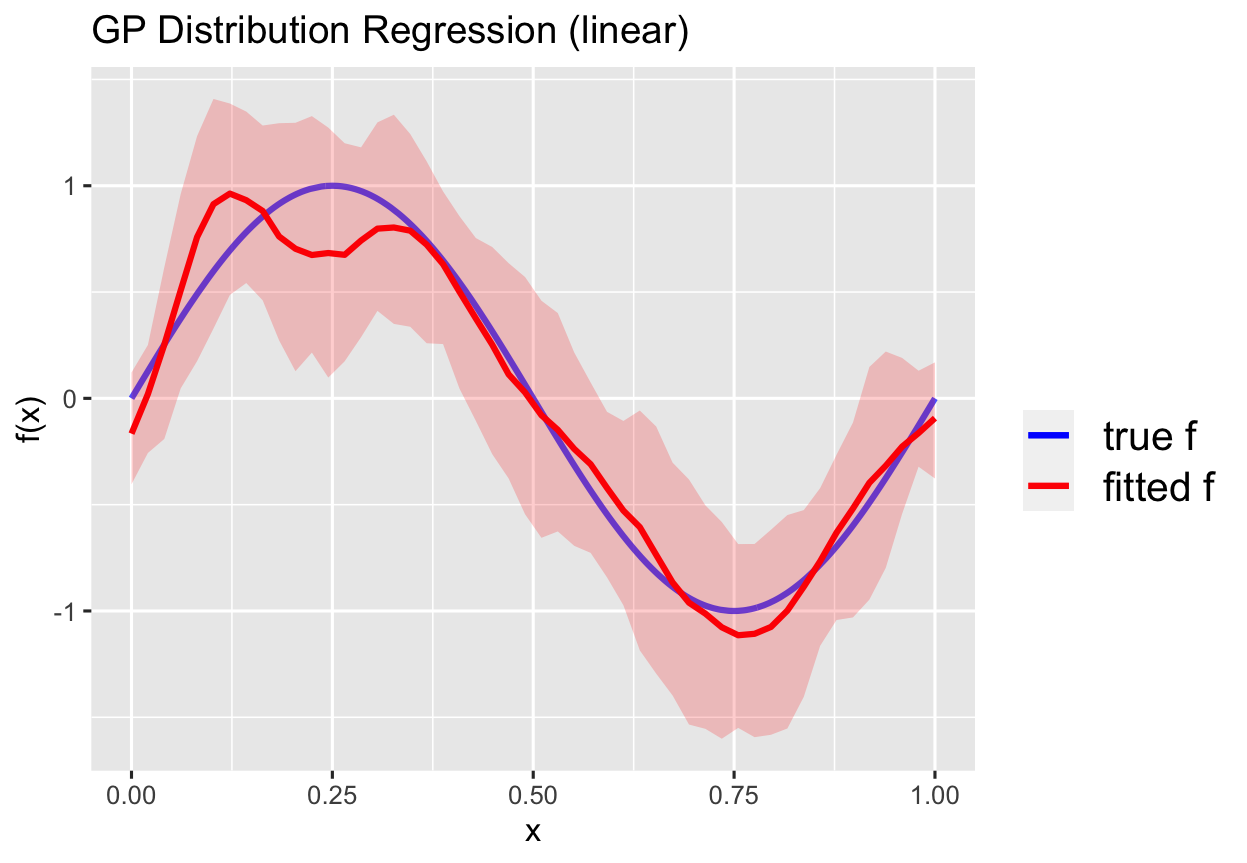}
    \includegraphics[scale=0.13]{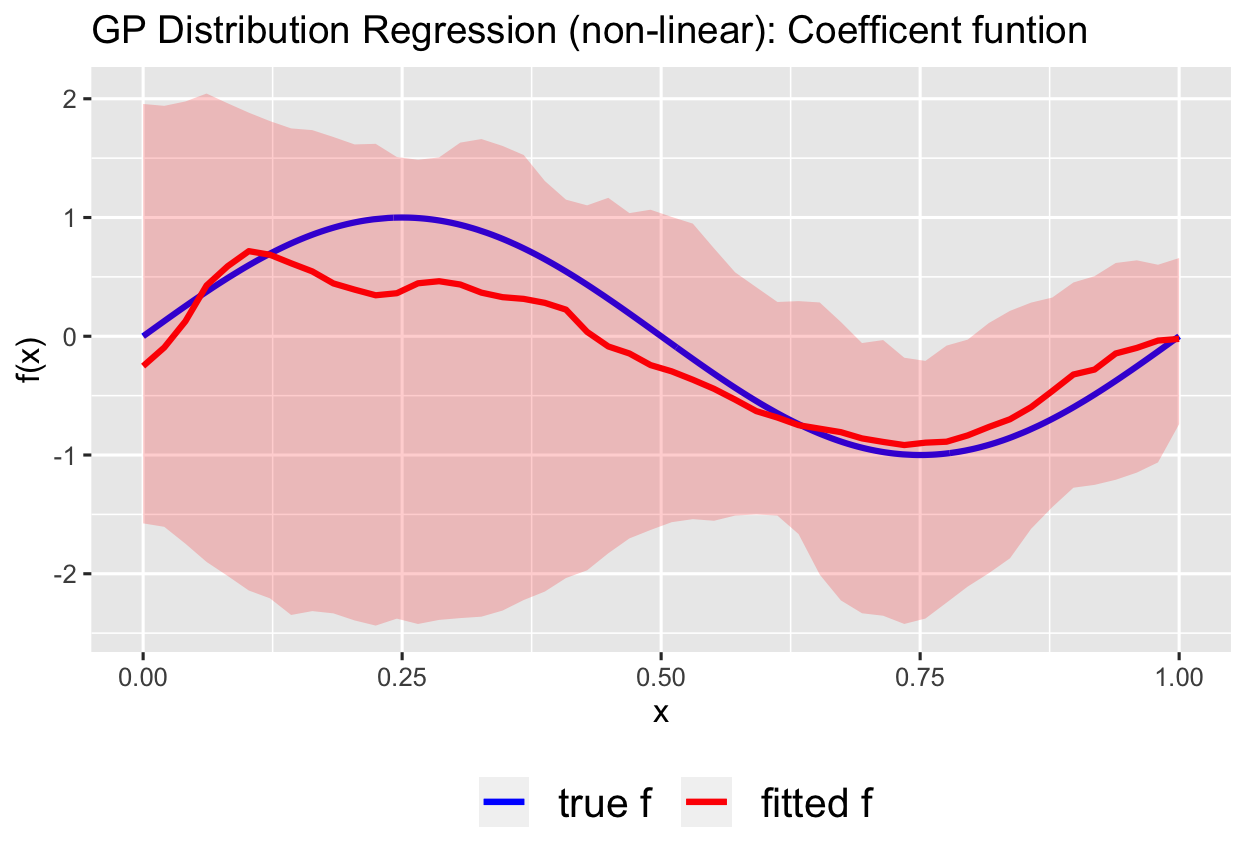}
    \caption{Comparison of estimates of the regression function $f_0$ from linear GPDR (left) and non-linear GPDR (right) when the true generative model is linear.}
    \label{fig:lin}
\end{figure}

Figure \ref{fig:lin} estimates of the regression function $f_0$ from linear and non-linear GPDR for the non-linear data generation scenario. Due to the non-linearity of the link function, the linear model is now misspecified, yielding a biased estimate of $f_0$. The non-linear model once again offers an accurate point estimate, although the intervals are still wide. 

\begin{figure}[h]
\centering
\includegraphics[scale=0.13]{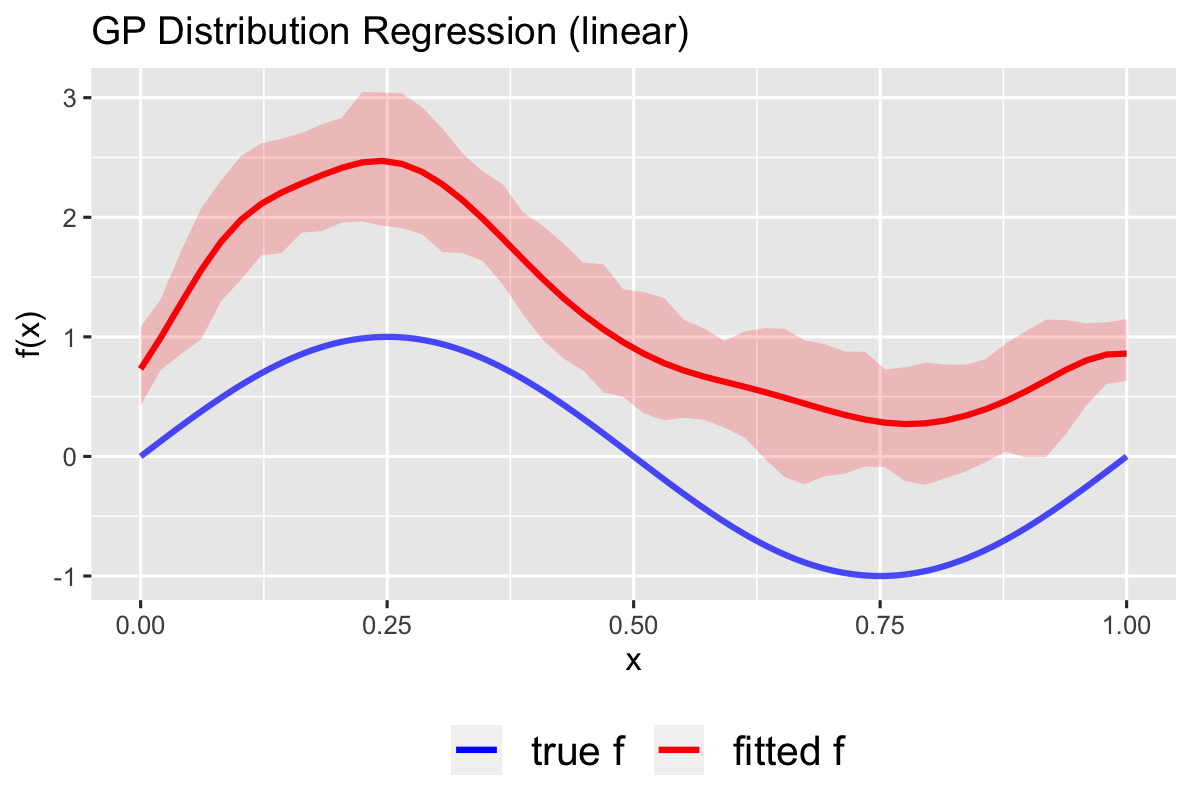}
\includegraphics[scale=0.13]{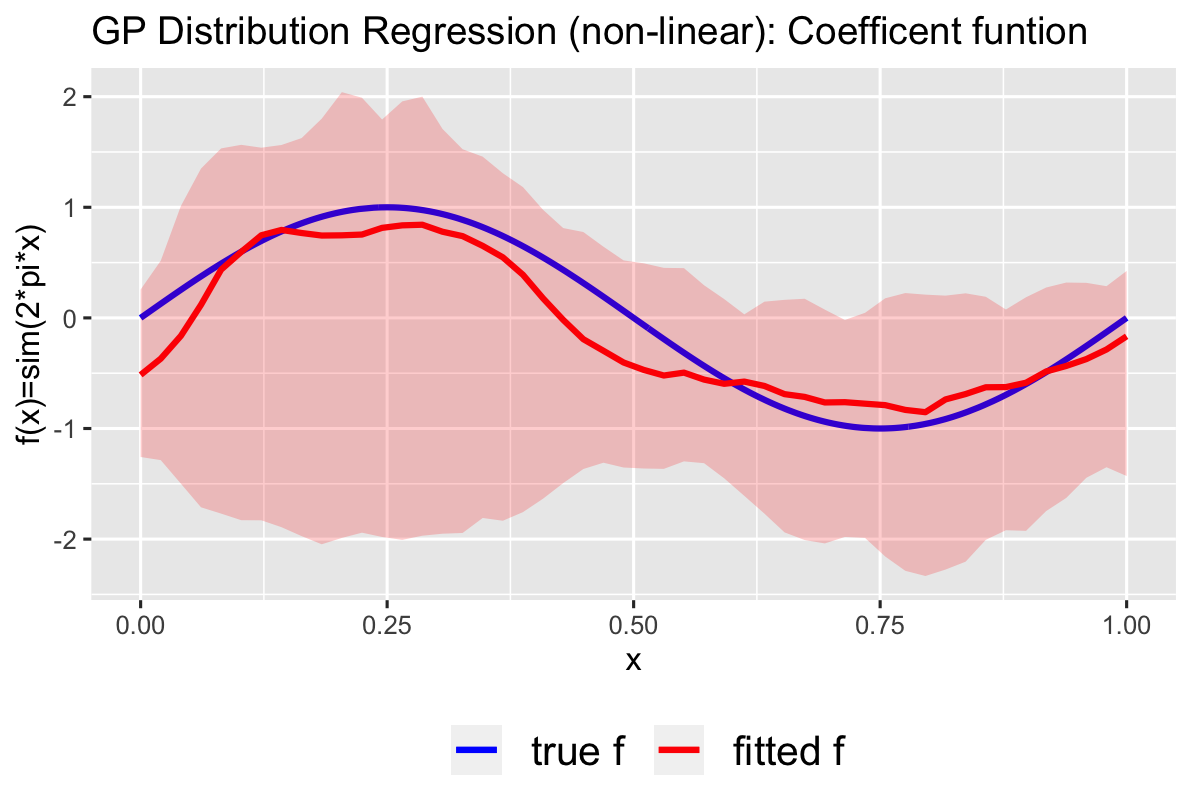}
\caption{Comparison of estimates of the regression function $f_0$ from linear GPDR (left) and non-linear GPDR (right) when the true generative model is non-linear.}
\label{fig:nl}
\end{figure}

Finally, Figure \ref{fig:link} presents the estimates of the link function $\phi$ from the non-linear GPDR for both the linear and non-linear data generation scenarios. We see that non-linear GPDR estimates the link accurately in both the linear scenario where $\phi(x)=x$ and in the non-linear scenario where $\phi(x)=\exp(x)$.

\begin{figure}
\centering
\includegraphics[scale=0.13]{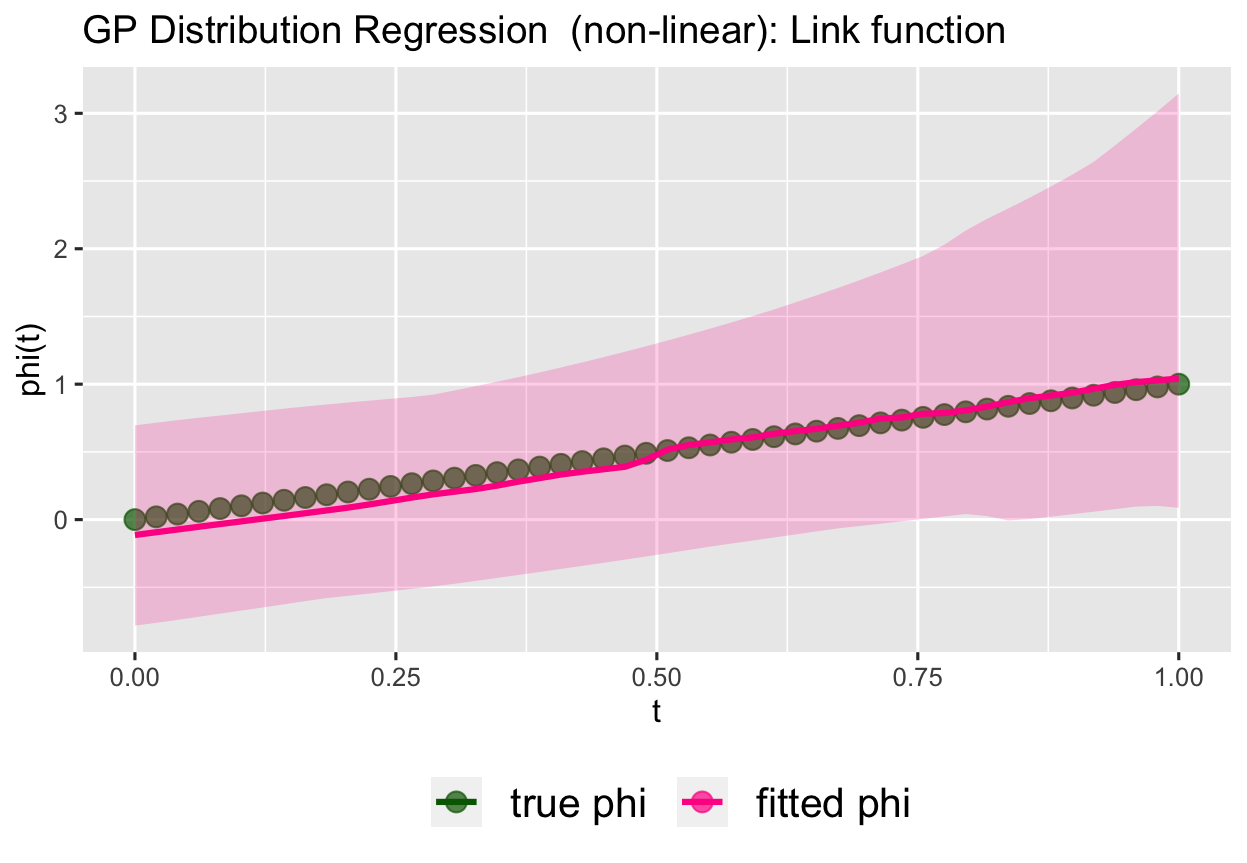}
\includegraphics[scale=0.13]{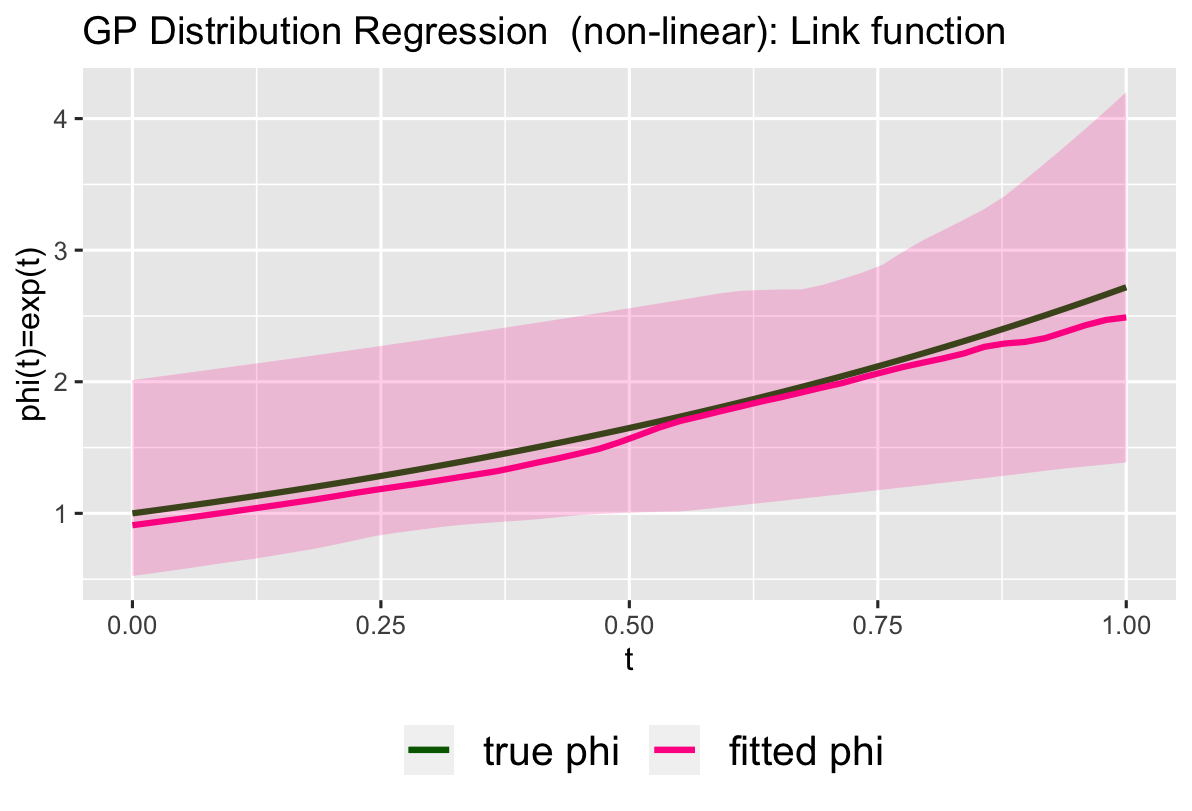}
\caption{Comparison of estimates of the link function $\phi$ from non-linear GPDR when the true generative model is linear (left) and non-linear (right).}
\label{fig:link}
\end{figure}

Overall, this experiment presents a proof-of-concept of utility of the non-linear GPDR. The results reveal the usual tradeoffs of using a simple and a richer model. The simple linear model, when correctly specified, offers more efficient estiamtion but can lead to highly biased inference under gross misspecification. The richer non-linear model performs well under various data generation scenarios but this robustness comes at the expense of precision due to the considerably higher dimensionality of the model. 
}
%\pink{AD: Please update the text in this Section and in Fig. 1 to refer to our method as GPDR (to be consistent with the next Section).} \note{The issue is that \textbf{EXP} and \textbf{KDE} are both GPDR in the sense of Gaussian process density regression. Their main difference is using expectation or KDE density. So I named them in this way. I add some text to differ the method in next section (since it uses low rank approximation). If you insist of changing the names I will change them in the next version.}
\section{NHANES activity data analysis}
The National Health and Nutrition Examination Survey (NHANES) is a long-standing survey study conducted by the Centers for Disease Control and Prevention and the National Center for Health Statistics to study the nutrition and health of the US population. Relatively recent samples (2003, 2005 cycles) collected physical activity data via hip worn accelerometers.  The data consist of minute resolution records of activity intensity beginning from 12:01 a.m. the day after participants’ health examination. Subjects were instructed to wear the accelerometer for 7 consecutive days during the daytime,  except when showering or bathing. The data used herein was processed and curated in the {\em rnhanes} package, as described in \cite{leroux2019organizing}.  

The database contains relatively high frequency activity data recorded from accelerometers. These are often summarized by a single measure of daily activity \citep[see the discussion in][]{bai2016activity}.
However, it is also of interest to consider the functional shape of the activity as a time ordered function \citep{goldsmith2016new}. We further argue that it is also of interest to investigate time-invariant, i.e., distributional aspects of the data, to summarize the role of activity, whenever it occurred. %Of course, summaries of the data, such as aggregate activity are time-invariant, but the question is what other information remains in the entire time course treated as if \blue{time- or order-invariant.} %Therefore, we consider the activity data as exchangeable repeated measures disregarding its longitudinal nature.  
Therefore, we consider \blue{our distribution regression approach for this problem treating} the activity counts as \blue{order-invariant} covariates, as if repeated measures, ignoring the time-ordering of data points and assuming them to be samples from a subject-specific distribution. We illustrate that in this setting we can have superior performance compared to typical functional data analysis which considers the time \blue{(order)} and not the distribution as informative.

The data comprises 2,719 subjects recorded in 2003-2004 \& 2005-2006 NHANES survey cycles with activity count data for every minute during one week (i.e., there are $1,440 = 60$ minutes/hour $\times 24$ hours) points for every subject every day. In this analysis, we use it to predict the age of subjects reported during the survey. Activity has been shown to be correlated with age, with \cite{schrack2014assessing} reporting an expected 1.3\% decline in total activity per year.
For simplicity, we use only the activity data on Friday, because empirically mean activity on this day had the highest correlation with age. Care should be taken in interpreting our results, since \cite{smirnova2020predictive} showed a high correlation between accelerometry estimated activity, in the NHANES dataset, and all cause mortality.

Denote the age of subject $i$ as $y_i$ and the log-transformed activity counts as $\bx_{ij} = \log(1 + count_{ij})$ for every minute $j$. Consider two situations: 1) $\bx_{ij}$ as functional data with equally spaced time indices, $j$, via functional linear regression (FLR), 2) ignoring that $j$ corresponds to the $j^{th}$ time-point and treating $\{\bx_{ij}\}$ as samples from some distribution, $Z_i$, and utilizing distribution regression approaches. We will use both our Gaussian process distribution regression (GPDR) with low rank approximation as described in Section \ref{sec:low_rank} as well as the  \textit{Bayesian Linear Regression} (BLR)  %\note{(Looks strange about the abbrev)}
method introduced in \cite{law2018bayesian}. %\note{Notice 
The BLR model was chosen as it is a simplified version of the BDR model used in Section \ref{sec:sim}, which is computationally intensive. Of note, only the uncertainty in the regression function is modeled, while sampling uncertainty is not. %Since the MCMC steps for the full version BDR is time intensive, Bayesian linear regression is the version the authors actually used and suggested for large sample real life analysis. %, therefore qualifies the comparison in this section.

For functional analysis, we use penalized functional linear regression and smooth the curve using a 10 cubic spline basis with equally spaced knots. The penalization parameter is estimated from generalized cross validation. In the distribution regression situation, we use the low rank approximation of Section \ref{sec:low_rank} and set $k=10$ fixed. We further use a Mat\'ern kernel with regularity 2.5 (thus is twice differentiable). %Specifically, the kernel $K$ is 
%$$K(s, t) = \sigma_K^2 e^{-|s-t| / \rho} (1 + |s-t| / \rho + (|s-t|/\rho)^2/3)$$
As suggested in \cite{kammann2003geoadditive}, we set the scale parameter as the range $\max \bx_{ij} - \min \bx_{ij}$. Other parameters were fit from generalized cross validation. We use the same kernel for BLR method and set 10 and 50 evenly spaced landmark points in the range of $[\min \bx_{ij}, \max \bx_{ij}]$. All other hyperparameters were set as the default values.

Each algorithm is evaluated using 5-fold cross validation. The mean R squared and its 95 \% confidence interval are estimated from 500 independent runs. In each run the R squared in the validation data and record the average value across 5 folds. Results are shown in Table \ref{tb:cv}. The density regression method significantly outperforms the default functional regression method and does slightly better than the BLR method in \cite{law2018bayesian}. There is a roughly 26\% improvement in out-of-sample R squared compared to functional linear regression and 6\% improvements compared to the Bayesian linear regression.  

\begin{table}[H]
	\caption{5-fold cross validation results for different algorithms. FLR means functional linear model that respect the time order of the data. GPDR is our Gaussian process distribution regression using low rank approximation method which ignoring any time information. BLR is the Bayesian linear regression model of  \cite{law2018bayesian} with 10 or 50 landmark points.}
	\centering
	\begin{tabular}{|c|c|c|c|c|}
		\hline
		& FLR & GPDR & BLR:10 & BLR:50 \\
		\hline
		R squared & 14.7\% & 18.5 \% & 17.6 \% & 17.6 \%\\
		\hline
		95\% CI & (14.2 \%, 15.0\%) & (17.9 \%, 18.8\%) & (17.2 \%, 17.8 \%) & (17.2 \%, 17.8 \%) \\
		\hline
	\end{tabular}
	\label{tb:cv}
\end{table}

The results demonstrate that distribution regression is often worth considering, even if the data is functional in nature and not random samples from individual distributions. This clearly can happen when order invariant summaries of the data, rather than ordered functional summaries, are more related to the outcome.

%\pink{AD: It is worth having a discussion among ourselves on whether we want to present BLR in the real data analysis. We have already shown our method does better than BDR in the simulations, and we could not implement BDR in a efficient way for the large real data. Also the performance of BLR and our method seems to be very similar. I think the point of the real analysis is to show that in some settings a distributional analysis works better than functional analysis even if the samples have a time-stamp. Comparison of different distribution regression methods is already presented in the simulations.}

%\pink{AD: Is BLR same as BDR from the last Section? If so, then use one notation (BDR). If not, then describe how the BLR method is different from BDR. Also summarise the results for BDR for the NHANES data.} \note{Edited} 

\section{Discussion}
We present a simple approach to regression using distribution-valued (repeated measures) covariate. We frame the model as a generalization of the typical Bayesian GP regression. This has several advantages. The method becomes simply a GP regression with the kernel averaged across repeated measures. We can obtain inference from the model directly using the observed covariate samples without having to estimate the subject-specific distributions or densities. All advantages of GP regression are retained, such as exploiting conjugacy in a Gibbs sampler and using low-rank GP approximations. The hierarchical formulation allows incorporating other subject-specific information like additional scalar-valued covariates and longitudinal or clustered data structures. Future work will explore these extensions in the context of specific applications. 

Theoretically, we present a comprehensive, and to our knowledge, the first set of results for Bayesian distribution regression using GP. We show our method has the same optimal finite-sample error bounds of function estimation as the typical GP regression does, under the same assumptions on the function smoothness and kernel choice as long as the class of subject-specific distributions is separable, i.e., rich enough to identify the regression functions. \blue{The theory can even account for  dependence among samples within a subject, possibly occurring due to time of measurement for the samples. However, it does not consider possible dependence among subjects. Such dependence will be less common than within-subject dependence, especially if the model accounts for other subject level covariates as in Section \ref{sec:subject}. However, we conjecture that similar theoretical results will hold for this setting as well as long as the dependence across subjects are controlled. Studying this will be an important future direction. 

In our approach we do not attempt to model or infer on the true covariate distributions $Z_i$, which we consider to be nuisance parameters. This is deliberate and a strength of the approach, as we work only with the observed samples which enables implementing our model as standard GP regression.  We show theoretically and empirically that we can recover the true regression function despite the misspecification on account of only using the repeated measures. However, our Bayesian hierarchical formulation is amenable to modeling  the true distributions, if desired. One would then need to add another hierarchy to the analysis model (\ref{gp_dr}) modeling the samples $\bx_{ij}$ as draws from some parametric or non-parametric distribution family. This would, however, lead to both computational and theoretical challenges as the posterior for the regression function will no longer be available in closed form. Addressing these would be another potential direction for future research, as would be extending the theory to the non-linear GPDR presented in Section \ref{sec:nonlin}.

There is further room to improve scalability of the approach. Despite the approximations (low-rank GP, conjugate posteriors) used in the current implementation, the approach cannot be deployed for large $n$ or $m$. In the future we plan to use state-of-the-art GP approximations to further improve scalability and develop a publicly-available software for GPDR.}

\section*{Acknowledgemnt}
This work was partially supported by the following grants: NIEHS R01 ES033739, NIBIB R01 EB029977, NIBIB P41 EB031771, NIDA U54 DA049110, NINDS R01 NS060910, NIMH R01 MH126970.

\vskip 0.2in
\bibliographystyle{biometrika}
\bibliography{reference}

\newpage

\begin{center} \Large \textbf{Supplementary Materials}
\end{center}

\setcounter{section}{0}
\renewcommand\thesection{S\arabic{section}}
\renewcommand\theequation{S\arabic{equation}}
\renewcommand\thefigure{S\arabic{figure}}
\renewcommand\thetable{S\arabic{table}}

\section{\blue{Examples of weak separability}}\label{sec:weak}

\blue{As mentioned in Section \ref{note_and_assump} weak separability will be satisfied for most distributional processes with support on a sufficiently rich collection of distributions $\calS$. We give a few examples:\\

\noindent \textbf{Dirac distributions:} Let $\calS=\{\delta_\bx \mid \bx \in \mathcal X\}$. So any subset of $A$ of $\calS$ corresponds to a set $B(A) \subseteq X$ such that $\{A= \delta_\bx \mid \bx \in B(A) \}$. 
Then a distributional process $\mathcal Z$ with support on $\calS$ corresponds to a distribution $\nu(\calZ)$ with support on $\mathcal X$ such that for any $A \subseteq \calS$ we have $P_{\mathcal Z}(A) = P_{\nu(\calZ)}(B(A))$.

Now suppose we have a set $A \in \calS$ such that $P_\calZ(A)=1$ and for every distribution $Z \in A$, we have $\mathbb E_Z(f_1) = \mathbb E_Z(f_2)$ for two functions $f_1,f_2$ in a functional class $\calF \subseteq \mathcal C_0$, the set of all continuous functions on $\mathcal X = [0,1]^d$. Then we will show that $f_1=f_2$. 

Note that $P_\calZ(A)=1$ implies $P_{\nu(\calZ)}(B(A))=1$. So for any $\bx \in \calX$, there exists a sequence $\bx_1,\bx_2,\ldots$ in $B(A)$ which converges to $\bx$. So the sequence $\delta(\bx_1),\delta(\bx_2),\ldots,$ is in $A$ and we have $\mathbb E_{\delta_{\bx_k}}(f_1)=E_{\delta_{\bx_k}}(f_2)$ for all $k$, i.e., $f_1(\bx_k)=f_2(\bx_k)$  for all $k$. Taking the limit $k \to \infty$ and using continuity of $f_1,f_2$ we have $f_1(\bx)=f_2(\bx)$. This holds for all $\bx \in \calX$ proving weak separability. 
\\

\textbf{Continuous distributions:} Consider $d=1$, and $\calS$ denote the collection of all $Beta(a,b)$ distributions on $\calX=[0,1]$. Let $\Theta=(0,\infty) \times (0,\infty)$ denote the set of all valid parameter choices for the Beta distribution. Any subset of $A$ of $\calS$ corresponds to a set $B(A) \subseteq \Theta$ such that $\{A= Beta(a,b) \mid \theta=(a,b) \in B(A) \}$. Then a distributional process $\calZ$ on $\calS$ corresponds to a distribution $\nu(\calZ)$ on $\Theta$ such that for any $A \subseteq \calS$ we have $P_{\mathcal Z}(A) = P_{\nu(\calZ)}(\theta \in B(A))$.

We will show that if $\calZ$ is supported on all of $\calS$, i.e., it gives positive probability around any Beta distribution then $\calZ$ is weakly separable. 

For $\theta=(a,b)$, we write $Beta(a,b)=Beta(\theta)$. Suppose we have a set $A \in \calS$ such that $P_\calZ(A)=1$ and for every distribution $Z \in A$, we have $\mathbb E_Z(f_1) = \mathbb E_Z(f_2)$ for two functions $f_1,f_2$ in a functional class $\calF \subseteq \mathcal C_0$. As $\calZ$ is supported on all of $\calS$ we have equivalently that $\nu(\calZ)$ is supported on entire $\Theta$ and $P_\calZ(A)=1$ implies $P_{\nu(\calZ)}(B(A))=1$. For any $\theta \in \Theta$, there exists a sequence $\theta_1,\theta_2,\ldots$ in $B(A)$ which converges to $\theta$. Then as $\mathbb E_{Beta(\theta_k)} f_1 = \mathbb E_{Beta(\theta_k)} f_2$ for all $k$, Beta$(\theta_k) \overset{d}{\implies} \mbox{Beta}(\theta)$, and $f_1, f_2$ are bounded functions (continuous on compact domain), taking limit of $k \to \infty$ and applying the Portmanteau theorem, we have $\mathbb E_{Beta(\theta)} f_1 = \mathbb E_{Beta(\theta)} f_2$. This holds for all $\theta \in \Theta$. For any $x$ in $(0,1)$, consider $\theta = (rx,r(1-x))$. Then $Beta(\theta) \overset{d}{\implies} \delta_x$ as $r \to \infty$. Applying the Portmanteau theorem once again, we have $\mathbb E_{\delta_x} f_1 = \mathbb E_{\delta_x} f_2$ for all $x \in (0,1)$ which, by continuity, implies $f_1(x) = f_2(x)$ for all $x \in [0,1]$.

Similar logic will hold for many other rich collection $\calS$ of continuous distributions as long as $\calZ$ is supported on all of $\calS$ (e.g., normal distributions or normal mixtures) truncated to transformed to be on $[0,1]^d$.}

\section{\blue{Non-universality of the linear kernel}}\label{sec:nonuni}

\blue{We discussed in Section \ref{sec:r0} how the GP distribution regression (\ref{eq:gpan2}) can be expressed as a standard GP regression (\ref{eq:gpan2dist}) directly with distributional inputs $Z_i$ (or $\widehat Z_i$, in practice) and the induced kernel $\mathbb K$. Unsurprisingly, this kernel will not be universal as it is linear in the kernel mean embeddings of the distributions (\ref{eq:linearkme}). We give a concrete counter-example here demonstrating this non-universality. 

Let $\calX=[0,1]$ and $\calS$ denote any collection of distributions that contain all Dirac (degenerate) distributions on $\calX$ and atleast some non-degenerate distributions. Then there exists some distribution $Z_0 \in \calS$ and some $\delta > 0$ such that $\mathbb E Z_0^2 > (\mathbb E Z_0)^2 + \delta$. 

Let $f \sim GP(0,K)$ and define $F(Z) = \mathbb E_Z f$. Then as discussed in Section \ref{sec:banach}, $F$ is a Gaussian random element with kernel $\mathbb K$ and taking values in a separable Banach space $\mathcal B$. So the RKHS $\mathcal H_{\mathbb K}$ lies within the support of $F$ (Lemma 5.1 of \cite{van2008reproducing}). 

Now let's consider the functional $F_0(Z)= (\mathbb E Z)^2$ for all $Z \in \calS$. We will show that $F_0$ cannot be approximated by the support of $F$ and hence by the RKHS $\mathbb K$. By the definition of $F$, any realization $F(\omega)(Z) = \mathbb E_{\bx \sim Z} f(\omega)(\bx)$ for all $Z \in \calS$ where $f(\omega)$ is a realization from $GP(0,K)$. 

Suppose there exists a realization $f(\omega)$ such that $F(\omega)(Z) = \mathbb E_{\bx \sim Z} f(\omega)(\bx)$ satisfies $\sup_{Z \in \calS} |F_0(Z) - F(\omega)(Z)| \leq \delta/2$. 

Choosing $Z = \delta_{\bx}$ we have $F(w)(Z) = f(\omega)(\bx)$ and $F_0(Z) = \bx^2$ implying
$\bx^2 - \delta/2 <  F(\omega)(\bx) <  \bx^2 + \delta/2$. This holds for all $\bx \in \calX$.  

Then we have $F(\omega)(Z_0) = \mathbb E_{\bx \sim Z_0} f(\omega)(\bx)$ satisfying
\[ \int (\bx^2 - \delta/2)Z_0(d\bx)  < \mathbb E_{\bx \sim Z_0} f(\omega)(\bx) < \int (\bx^2 + \delta/2)Z_0(d\bx), \mbox{i.e.,} \]
\[ \mathbb E Z_0^2 - \delta/2  < \mathbb E_{\bx \sim Z_0} f(\omega)(\bx) < \mathbb E Z_0^2 + \delta/2 \]
So we have $|F(\omega)(Z_0) - \mathbb E Z_0^2| \leq \delta/2$ implying $|F(\omega)(Z_0) - (\mathbb E Z_0)^2| \geq \delta/2$, i.e.,  
$|F(\omega)(Z_0) - F_0 Z_0| \geq \delta/2$ (as $\mathbb E Z_0^2 > (\mathbb E Z_0)^2 + \delta$). This contradicts the approximating property of $F(\omega)(Z_0)$. 

So the support of the Gaussian process $F \sim GP(0,\mathbb K)$ cannot approximate all functions arbitrarily closely and thus neither can $\mathcal H_{\mathbb K}$. So $\mathbb K$ is not universal. 
}

\section{Proofs}\label{sec:proofs}
\subsection{Definitions}\label{sec:def}
In this manuscript, we will mainly study the \textit{H\"{o}lder space} $C^{\alpha}[0, 1]^d$ for $\alpha > 0$. When writing $\alpha = k + \eta$, \blue{where $k = \floor{\alpha}$,} the space $C^\alpha[0,1]^d$ is the space of all functions supported in $[0, 1]^d$, whose partial derivatives of orders $(l_1,\cdots,l_d)$ exist for all non-negative integers $l_1,\cdots,l_d$ such that $l_1 + \cdots + l_d \le k$ and for which the highest order partial derivatives are H\"{o}lder continuous with order $\eta$ ($f$ being H\"{o}lder continuous with order $\eta$ if $|f(x) - f(y)| \le C\norm{x-y}^\eta$ for all $x, y$ and some constant $C$) .

Another functional space we will study is the \textit{Sobolev space} $H^\alpha[0,1]^d$ which contains all $[0, 1]^d \rightarrow \mathbb{R}$ functions $f$ such that
\begin{equation*}
	\int_{\mathbb{R}^d} \left(1 + \norm{\lambda}^2\right)^\alpha \left|\hat{f}(\lambda)\right|^2 d\lambda < \infty
\end{equation*}
where $\hat{f}$ is the Fourier transformation of $f$: $\hat{f}(\lambda) = (2\pi)^{-d}\int \exp(i\lambda^T t) f(t) d t$. 

An order $\alpha$ Mat\'ern kernel for $d$ dimensional process has the form:
\begin{equation*}
	K(s, t) = \int_{\mathbb{R}^d} \frac{e^{i\lambda^T (s-t)}}{(1 + \norm{\lambda}^2)^{\alpha + d / 2}} d \lambda.
\end{equation*}

\subsection{Strong separability of Dirichlet Process}
\begin{proof}[Proof of Lemma \ref{lem:dp}]
	Using stick breaking representation, we know that the sample probability mass function $p(x)$ has the form
	\begin{equation} \label{eq:lem1}
	p(x) = \sum_{k=1}^{\infty} \beta_k \cdot \delta_{\bx_k}(x),
	\end{equation}
	where $\beta_k = \beta'_k \prod_{i=1}^{k-1} (1 - \beta_i')$ for $\beta_k'$ i.i.d follows $\text{Beta}(1, \alpha)$ and $\bx_k$ i.i.d follows $\blue{\upsilon}$ with $\delta_\bx$ the point mass at $\bx$. Also, it is clear that the mean measure for $\text{DP}(\blue{\upsilon}, \alpha)$ is just $\blue{\upsilon}$. Therefore, for bounded $f$, we can directly calculate $\mathbb{E}_{z\sim \text{DP}(\blue{\upsilon}, \alpha)} [(\mathbb{E}_z f)^2]$. We have:
	\begin{align*}
		\mathbb{E}_{z\sim \text{DP}(\blue{\upsilon}, \alpha)} [(\mathbb{E}_z f)^2] &= \mathbb{E}_{z\sim \text{DP}(\blue{\upsilon}, \alpha)} \left[ \left( \sum_{k=1}^\infty \beta_k f(\bx_k) \right)^2 \right] \\
		&= \mathbb{E}_{z\sim \text{DP}(\blue{\upsilon}, \alpha)} \left[ \sum_{ij}^\infty f(\bx_i) f(\bx_j) \beta_i\beta_j\right] \\
		&= \mathbb{E}_\blue{\upsilon} f^2 \cdot \sum_i \frac{2}{\alpha(1+\alpha)}\left(\frac{\alpha}{2+\alpha}\right)^i + (\mathbb{E}_\blue{\upsilon} f)^2 \sum_{i \neq j} \mathbb{E}[\beta_i\beta_j] \\
		&\ge  \mathbb{E}_\blue{\upsilon} f^2 \cdot \sum_i \frac{2}{\alpha(1+\alpha)}\left(\frac{\alpha}{2+\alpha}\right)^i = \frac{1}{1 + \alpha} \mathbb{E}_\blue{\upsilon} f^2
	\end{align*}
	%\pink{AD: Could you expand on how you obtained the last two steps.} \note{Added one intermediate step. } 
	Therefore, $\text{DP}(\blue{\upsilon}, \alpha)$ strongly separates the bounded functions with constant $C = 1 + \alpha$.
\end{proof}

\subsection{\blue{Distance between mean embeddings}}\label{sec:emb}
In the following sections, we denote $K_s$ as the function $K_s(t) = K(s, t)$. 

\begin{proof}[Proof of Lemma \ref{lem:kme}]
\blue{Note that $\mu_{\wzi} = \frac{1}{m} \sum_{j=1}^m K_{\bx_{ij}}$. So 
\begin{equation}\label{eq:cross}
    \begin{split}
        \norm{\mu_{\widehat{Z}_i} - \mu_{Z_i}}_\mathcal{H}^2 =& \norm{ \frac 1m \sum_{j=1}^m (K_{\bx_{ij}} - \mu_{Z_i})}_\mathcal{H}^2 \\
        =& \frac 1{m^2} \sum_{j=1}^m \norm{K_{\bx_{ij}} - \mu_{Z_i}}_\mathcal{H}^2  \\
       & \qquad + \sum_{j \neq j'} <K_{\bx_{ij}} - \mu_{Z_i},K_{\bx_{ij'}} - \mu_{Z_i}>_{\mathcal H}.
    \end{split}
\end{equation}
We will now show that the expectation of the cross terms in (\ref{eq:cross}) is 0. Using the RKHS property and applying (\ref{eq:meta}), 
$$\mathbb E_{\bx_{ij} \sim Z_i} <K_{\bx_{ij}},K_{\bx_{ij'}}>_{\mathcal H} = \int_{\bx} \int_{\bx'} K(\bx,\bx') Z_i(d\bx) Z_i(d\bx') = \mathbb K(Z_i,Z_i). $$
By the linear representation of $\mathbb K$ in (\ref{eq:linearkme}), we immediately have $\|\mu_{Z_i}\|_{\mathcal H}^2 = \mathbb K(Z_i,Z_i)$. 
Also, 
\begin{align*}
\mathbb E_{\bx_{ij} \sim Z_i} <K_{\bx_{ij}},\mu_{Z_i}>_{\mathcal H} &= \mathbb E_{\bx_{ij} \sim Z_i} \int K(\bx_{ij},\bt) Z(d\bt) \\
&= \int_{\bx} \int_{\bt} K(\bx,\bt) Z_i(d\bx) Z_i(d\bt) = \mathbb K(Z_,Z_i).
\end{align*}
Thus all terms when expanding $<K_{\bx_{ij}} - \mu_{Z_i},K_{\bx_{ij'}} - \mu_{Z_i}>_{\mathcal H}$ has the same expectation, two of them are positive and two are negative,  and consequently 
$\mathbb E_{\bx_{ij} \sim Z_i} <K_{\bx_{ij}} - \mu_{Z_i},K_{\bx_{ij'}} - \mu_{Z_i}>_{\mathcal H}=0$. Finally, for any $Z$, $\|\mu_{Z}\|_{\mathcal H} \leq \kappa$ , the upper bound of the kernel $K$. Using the fact theat $\bx_{ij}$ are i.i.d., we then have:}

\begin{equation}\label{eq:kmebound}
\begin{split}
\mathbb{E} \norm{\mu_{\widehat{Z}_i} - \mu_{Z_i}}_\mathcal{H}^2 &= \mathbb{E}_{Z_i \sim \calZ} \mathbb{E}_{\bx_{ij} \sim Z_i} \norm{\frac{1}{m} \sum_j K_{\bx_{ij}} - \mu_{Z_i}}_\mathcal{H}^2 \\
	&= \mathbb{E}_{Z_i} \left[\frac{1}{m} \mathbb{E}_{\bx_{ij}} \norm{K_{\bx_{ij}} - \mu_{Z_i}}_\mathcal{H}^2 \right]\\
 &\blue{\leq \frac{1}{m} \mathbb{E}_{Z_i} \left[ 2 \mathbb{E}_{\bx_{ij}} \norm{K_{\bx_{ij}}}_\mathcal{H}^2 + 2\norm{\mu_{Z_i}}_\mathcal{H}^2 \right]}\\
 &\le \frac{4\kappa}{m}. 
 \end{split}
\end{equation}
\end{proof}

\subsection{\blue{Posterior distributions}}\label{sec:post}
\blue{We first express the posterior distributions of both GP distribution regressions (\ref{eq:gpan2}) and (\ref{gp_dr}) in terms of the kernel mean embeddings defined in (\ref{eq:kme}).}

We borrow similar notation as in \cite{szabo2016learning, caponnetto2007optimal}. 
Denote:
$$T_{{Z}} = \frac{1}{n} \sum_{i=1}^n \mu_{Z_i} \langle\mu_{Z_i}, \cdot\rangle, \hspace{2em} g_{{Z}} = \frac{1}{n} \sum_{i=1}^n y_i \mu_{Z_i}, \hspace{2em} \phi_{{Z}} = \frac{1}{n} \sum_{i=1}^n \mu_{Z_i} \otimes \mu_{Z_i}$$
where $(f_1 \otimes f_2) (s, t) = f_1(s)f_2(t)$. % and $Z_i, y_i$ are the true distribution and response for subject $i$. 
Naturally, $\mathcal{H} \otimes \mathcal{H}$ would be the closure of set $\{f_1 \otimes f_2: f_1\in \mathcal{H}, f_2 \in \mathcal{H}\}$, equipped with inner product as continuous extension of $\langle f_1 \otimes f_2, g_1 \otimes g_2 \rangle = \langle f_1, g_1 \rangle \langle f_2, g_2 \rangle$. Then we know that $g_{{Z}} \in \mathcal{H}$ and $T_{{Z}}: \mathcal{H} \rightarrow \mathcal{H}$ is Hermitian. Similarly, we can define $T_{\hat{{Z}}}, g_{\hat{{Z}}}, \phi_{\hat{{Z}}}$ by changing true distribution $Z_i$ to empirical distribution $\wzi$ based on the samples $\{\bx_{ij}\}$ %\pink{AD: It might be worth using a notation for $\{\bx_{ij}\}$. 
, and they have the same properties. Following these notations, Gaussian process has the following posterior \citep{caponnetto2007optimal}:
\begin{align} \label{gp_post}
	f\ |\ \mathbb{Z}_n &\sim GP\left((T_{{Z}} + \sigma_n^2)^{-1}g_{{Z}},\ K - \left[(T_{{Z}} + \sigma_n^2)^{-1} \otimes \text{Id} \right] \phi_{{Z}} \right) \\
	f\ |\ \mathbb{D}_n &\sim GP\left((T_{\hat{{Z}}} + \sigma_n^2)^{-1}g_{\hat{{Z}}},\ K - \left[(T_{\hat{{Z}}} + \sigma_n^2)^{-1} \otimes \text{Id} \right] \phi_{\hat{{Z}}} \right)
\end{align}
where $\sigma_n^2 = \sigma^2 / n$ and Id is the identity operator. Here, the operator $T_1 \otimes T_2: \mathcal{H} \otimes \mathcal{H} \rightarrow \mathcal{H} \otimes \mathcal{H}$ is naturally defined through continuous extension of $(T_1 \otimes T_2) (f_1 \otimes f_2) = T_1(f_1) \otimes T_2(f_2)$. %\pink{AD: Please show/cite the steps for deriving the above two equations. Also, the definition of the expression $\left[(T_{{Z}} + \sigma_n^2)^{-1} \otimes \text{Id} \right] \phi_{{Z}}$ is not clear to me. This needs to be explained in more details.} \note{(I use tensor product notation just for more compact equations, it would be clear if you apply tensor product operator back.)} 
Denote %$\widehat{Z}_i$ being the empirical distribution from samples $\bx_{ij}$, 
$f_{{Z}} = (T_{{Z}} + \sigma_n^2)^{-1}g_{{Z}}$ and $f_{\hat{{Z}}} = (T_{\hat{{Z}}} + \sigma_n^2)^{-1}g_{\hat{{Z}}}$. Also $M_{{Z}} = \left[(T_{{Z}} + \sigma_n^2)^{-1} \otimes \text{Id} \right] \phi_{{Z}} $ and $M_{\hat{{Z}}} = \left[(T_{\hat{{Z}}} + \sigma_n^2)^{-1} \otimes \text{Id} \right] \phi_{\hat{{Z}}}$. Therefore $M_{{Z}}, M_{\hat{{Z}}} \in \mathcal{H} \otimes \mathcal{H}$. 

\blue{Note that the operators $T_Z$ ($T_{\widehat Z}$) will feature heavily in expression of the posterior mean and variance in (\ref{gp_post}). Before proceeding with the proof of the theorem, below we provide a bound on the distance $\|T_Z - T_{\widehat Z}\|$ in terms of the mean embedding distance (Lemma \ref{lem:kme}) that will be crucial in bounding $R_n^1$.

\begin{lemma}\label{lem:tz}
    $\mathbb E \norm{(T_{{Z}} - T_{\hat{{Z}}})}^2_{\mathcal{L(H)}} \leq 2\kappa \mathbb E \norm{\mu_{Z_i} - \mu_{\widehat{Z}_i}}_\mathcal{H}^2 \leq 8 \kappa^2 / m $ for operator norm $\norm{\cdot}_{\mathcal{L(H)}}$.
\end{lemma}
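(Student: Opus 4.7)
The plan is to bound $\|T_Z - T_{\hat Z}\|_{\mathcal L(\mathcal H)}$ in terms of the mean-embedding distances already controlled by Lemma \ref{lem:kme}. Writing
\[
T_Z - T_{\hat Z} \;=\; \frac{1}{n}\sum_{i=1}^n \bigl(\mu_{Z_i} \otimes \mu_{Z_i} - \mu_{\wzi} \otimes \mu_{\wzi}\bigr),
\]
where $u \otimes v := u\langle v, \cdot\rangle_{\mathcal H}$ denotes the rank-one operator, I would first apply the triangle inequality for the operator norm to reduce the problem to controlling each summand separately.

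For each summand, I would use the standard telescoping identity
\[
a \otimes a - b \otimes b \;=\; (a-b) \otimes a + b \otimes (a-b),
\]
applied with $a = \mu_{Z_i}$ and $b = \mu_{\wzi}$. Since the operator norm of a rank-one operator satisfies $\|u \otimes v\|_{\mathcal L(\mathcal H)} = \|u\|_{\mathcal H}\|v\|_{\mathcal H}$, and since $\|\mu_Z\|_{\mathcal H}^2 = \mathbb K(Z,Z) \leq \kappa$ for every $Z$ (using $K(s,t) \leq \sqrt{K(s,s)K(t,t)} \leq \kappa$, so in particular $\|\mu_{Z_i}\|_{\mathcal H}, \|\mu_{\wzi}\|_{\mathcal H} \leq \sqrt{\kappa}$), each summand's operator norm is bounded by a constant multiple of $\sqrt{\kappa}\,\|\mu_{Z_i} - \mu_{\wzi}\|_{\mathcal H}$.

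To finish, I would square the triangle-inequality bound and apply Cauchy--Schwarz (equivalently, Jensen's inequality for $x \mapsto x^2$) to the $n$-term sum, which yields
\[
\|T_Z - T_{\hat Z}\|^2_{\mathcal L(\mathcal H)} \;\leq\; \frac{C\kappa}{n}\sum_{i=1}^n \|\mu_{Z_i} - \mu_{\wzi}\|^2_{\mathcal H}
\]
for an absolute constant $C$. Taking expectations and exploiting the i.i.d.\ structure of the pairs $(Z_i, \wzi)$ collapses the average to a single term $\mathbb E \|\mu_{Z_1} - \mu_{\widehat Z_1}\|_{\mathcal H}^2$, after which Lemma \ref{lem:kme} immediately delivers the claimed $O(\kappa^2/m)$ rate. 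No serious obstacle is anticipated; the only mild subtlety is bookkeeping the correct numerical constant when combining the telescoping identity with the uniform bound $\|\mu_Z\|_{\mathcal H} \leq \sqrt{\kappa}$, and possibly sharpening via centering (the cross-terms of the expansion have controllable expectation because $\mathbb E[\mu_{\wzi} \mid Z_i] = \mu_{Z_i}$) to tighten from the naive constant down to the stated $2\kappa$ prefactor, but this is purely routine.
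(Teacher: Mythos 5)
Your proposal is correct and follows essentially the same route as the paper: reduce to the per-subject rank-one differences, apply the telescoping identity $a\otimes a - b\otimes b = (a-b)\otimes a + b\otimes(a-b)$, bound each piece using $\|\mu_Z\|_{\mathcal H}\le\sqrt{\kappa}$, and conclude via Lemma \ref{lem:kme}. The only difference is cosmetic (triangle inequality then squaring versus Jensen on the squared norms directly), and your remark that the prefactor requires care is apt --- the naive triangle-inequality constant is $4\kappa$ rather than the stated $2\kappa$, a harmless discrepancy that does not affect the $O(\kappa^2/m)$ rate.
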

}

\begin{proof}

To bound $\norm{(T_{{Z}} - T_{\hat{{Z}}})}_{\mathcal{L(H)}}$, we have:
\begin{align*}
	\norm{T_{{Z}} - T_{\hat{{Z}}}}_\mathcal{L(H)}^2 &\le \frac{1}{n} \sum_{i=1}^n \norm{\mu_{Z_i} \langle\mu_{Z_i}, \cdot\rangle - \mu_{\widehat{Z}_i} \langle\mu_{\widehat{Z}_i}, \cdot\rangle}_\mathcal{L(H)}^2
\end{align*}
%\pink{AD: Wondering why you use the inequality for norm-square instead of using triangular inequality for the norm.} \note{(Yes, that will be more natural but the rate would be the same)}
And for every $\norm{\mu_{Z_i} \langle\mu_{Z_i}, \cdot\rangle - \mu_{\widehat{Z}_i} \langle\mu_{\widehat{Z}_i}, \cdot\rangle}$, apply it to an arbitrary function $f$.
\begin{align*}
	\norm{\mu_{Z_i} \langle\mu_{Z_i}, f\rangle - \mu_{\widehat{Z}_i} \langle\mu_{\widehat{Z}_i}, f\rangle}_\mathcal{H}^2 &= \norm{(\mu_{Z_i} - \mu_{\widehat{Z}_i}) \langle\mu_{Z_i}, f\rangle + \mu_{\widehat{Z}_i} \langle\mu_{Z_i} - \mu_{\widehat{Z}_i}, f\rangle}_\mathcal{H}^2 \\
	&\le \norm{f}_\mathcal{H}^2 \left(\left(\norm{\mu_{Z_i}}_\mathcal{H}^2 + \norm{\mu_{\widehat{Z}_i}}_\mathcal{H}^2\right) \norm{\mu_{Z_i} - \mu_{\widehat{Z}_i}}_\mathcal{H}^2\right)\\
	&\le 2\kappa \norm{f}_\mathcal{H}^2 \norm{\mu_{Z_i} - \mu_{\widehat{Z}_i}}_\mathcal{H}^2
\end{align*}
Therefore 
\begin{equation}\label{eq:tz}
    \norm{T_{{Z}} - T_{\hat{{Z}}}}_\mathcal{L(H)}^2 \le \frac{2\kappa}{n}\sum_i \norm{\mu_{Z_i} - \mu_{\widehat{Z}_i}}_\mathcal{H}^2,
\end{equation}
\blue{and the first inequality follows by taking expectation. The second inequality follows from Lemma \ref{lem:kme}.}
\end{proof}

\subsection[Bound R1 for L2 norm]{Bound for $R_n^1$ \blue{under random design (Theorem \ref{th:random})}}

\subsubsection{\blue{Decomposition of }
 %Bound 
 $R_n^1$} %for $L_2$ norm $\norm{\cdot}_2$} 
 \label{sec:morder}
A naive bound for $R_n^1$ is to directly calculate it out:
\begin{align*}
	R_n^1 &= \mathbb{E}_{f_0} \int ||f - f_0||^2 (d \Pi_n(f|\mathbb{D}_n) - d \Pi_n(f|\mathbb{Z}_n)) \\
	&= \mathbb{E}_{f_0} \int_{[0,1]^d} \left( \mathbb{E}_{f|\mathbb{D}_n}(f-f_0)^2(s) - \mathbb{E}_{f|\mathbb{Z}_n}(f-f_0)^2(s) \right) ds \\
	&= \mathbb{E}_{f_0} \int_{[0,1]^d} \left( \mathbb{E}_{f|\mathbb{D}_n}(f^2(s)-2f(s)f_0(s)) - \mathbb{E}_{f|\mathbb{Z}_n}(f^2(s)-2f(s)f_0(s)) \right) ds \\
	&= \mathbb{E}_{f_0} \int_{[0,1]^d} \left[\text{Var}(f(s)|\mathbb{D}_n) - \text{Var}(f(s)|\mathbb{Z}_n) + \mathbb{E}^2(f(s) | \mathbb{D}_n) - \mathbb{E}^2(f(s) | \mathbb{Z}_n) \right.\\
	&\hspace{5em} - \left.2f_0(s)\left( \mathbb{E}(f(s) | \mathbb{D}_n) - \mathbb{E}(f(s) | \mathbb{Z}_n) \right) \right] ds.
 \end{align*}
\blue{This leads to (\ref{eq:decomp}).} All exchange of integral and expectation above should be legal because all functions are clearly bounded by a constant if fixing $n$.
% where $V(s) = \text{Var}(f(s)|\mathbb{D}_n) - \text{Var}(f(s)|\mathbb{Z}_n)$, $E(s) = \mathbb{E}(f(s) | \mathbb{D}_n) - \mathbb{E}(f(s) | \mathbb{Z}_n)$ and $S(s) = \mathbb{E}(f(s) | \mathbb{D}_n) + \mathbb{E}(f(s) | \mathbb{Z}_n) - 2f^K_0(s)$.

Now we can bound $R_n^1$ step by step \blue{using the expression of the posteriors given in Section \ref{sec:post}.} 
\blue{In the following subsections, we bound each of the three terms $V(s)$, $E_1(s)$ and $E_2(s)$.}% in (\ref{eq:decomp}).} %But first we provide a general bound between the difference in kernel mean embedings of the true distribution $Z_i$ and observed samples $\wzi$ that plays an important role in bounding all three terms. }
%\subsection[step 0: sampling bound]{\blue{Step 0: Bound on sampling variation}}

\subsubsection[step 1: bound V(s)]{Step 1: Bound $V(s)$}\label{sec:v}
We have $V(s) = M_{{Z}}(s, s) - M_{\hat{{Z}}}(s, s)$. It is easy to observe that 
\begin{align*}
	M_{{Z}}(s, s) &= \langle M_{{Z}}, K_s \otimes K_s \rangle = \left\langle \frac{1}{n} \sum_{i=1}^n (T_{{Z}} + \sigma^2_n)^{-1} \mu_{Z_i} \langle \mu_{Z_i}, K_s \rangle, K_s \right\rangle \\
	&= \left\langle (T_{{Z}} + \sigma^2_n)^{-1} T_{{Z}} K_s, K_s \right\rangle
\end{align*}
The second equation comes from the definition of $\mathcal{H} \otimes \mathcal{H}$. Therefore 
\begin{equation}\label{eq:var}
\begin{split}
	\mathbb E |V(s)| &= \mathbb E\left|\left\langle ((T_{{Z}} + \sigma^2_n)^{-1} T_{{Z}} - (T_{\hat{{Z}}} + \sigma^2_n)^{-1} T_{\hat{{Z}}}) K_s, K_s \right\rangle \right| \\
	&\le \kappa \mathbb E \norm{(T_{{Z}} + \sigma^2_n)^{-1} T_{{Z}} - (T_{\hat{{Z}}} + \sigma^2_n)^{-1} T_{\hat{{Z}}}}_{\mathcal{L(H)}} \\
	&= \kappa \sigma^2_n \mathbb E \norm{(T_{\hat{{Z}}} + \sigma^2_n)^{-1} - (T_{{Z}} + \sigma^2_n)^{-1}}_{\mathcal{L(H)}} \\
	&= \kappa \sigma^2_n \mathbb E \norm{(T_{\hat{{Z}}} + \sigma^2_n)^{-1} (T_{{Z}} - T_{\hat{{Z}}}) (T_{{Z}} + \sigma^2_n)^{-1}}_{\mathcal{L(H)}} \\
 &\le \frac{ \kappa}{\sigma_n^2} \mathbb E \norm{(T_{{Z}} - T_{\hat{{Z}}})}_{\mathcal{L(H)}} \\
 &\le \frac{2\sqrt{2}\kappa^2}{\sigma_n^2\sqrt{m}} 
 \end{split}
\end{equation}
\blue{from Lemma \ref{lem:tz} and applying Jensen's inequality with the concave function $\sqrt x$.}

\subsubsection[step 2: bound E1]{Step 2: Bound $E_1(s)$}\label{sec:e1}
One main difficulty for the bound here is that we don't assume $f_0$ lying in the RKHS of kernel $K$. In fact, when optimal rate is achieved with $\beta = \alpha$, the RKHS $\mathcal{H}_K$ contains all $(\beta + d/2)$-regular functions, which means $f_0 \not\in \mathcal{H}_K$. 

In the situation $f_0 \not\in \mathcal{H}_K$, from Lemma 4 of \cite{van2011information} we know that for order $\alpha$ Mat\'ern kernel $K$ and $\beta$-regular $f_0$ with $\beta \le \alpha$ we can always find $h \in \mathcal{H}_K$ such that:
\begin{equation} \label{f0_proj}
\inf_{\norm{h - f_0}_\infty < \varepsilon} \norm{h}_\mathcal{H}^2 \le C_\alpha \left(\frac{1}{\varepsilon}\right)^{(2\alpha - 2\beta + d)/\beta}
\end{equation}
for arbitrary small $\varepsilon$ and constant $C_\alpha \geq 1$ depending only on $\alpha$ and $f_0$.
Now we use equation \ref{f0_proj} to find a $f_0^\gamma \in \mathcal{H}$ such that $\norm{f_0^\gamma - f_0}_\infty \le n^{-\gamma}$ and $\norm{f_0^\gamma}_\mathcal{H} \le 2C_\alpha n^{(2\alpha-2\beta + d)\gamma/2\beta}$. We will determine $\gamma$ at the end of the proof. Then we have $\mathbb{E}_{\varepsilon_i} y_i = \mathbb{E}_{Z_i} f_0 = \mathbb{E}_{Z_i} f_0^\gamma + \mathbb{E}_{Z_i} (f_0 - f_0^\gamma) = \langle \mu_{Z_i}, f_0^\gamma \rangle + \mathbb{E}_{Z_i} (f_0 - f_0^\gamma)$. Denote $r_i = \mathbb{E}_{Z_i} (f_0 - f_0^\gamma)$, we have $|r_i| \le n^{-\gamma}$ and:
\begin{align*}
	\mathbb{E}_{\bm{\varepsilon}} (T_{{Z}} + \sigma_n^2)^{-1} g_{{Z}} &= (T_{{Z}} + \sigma_n^2)^{-1} \left(\frac{1}{n} \sum_i (\langle \mu_{Z_i}, f_0^\gamma \rangle \mu_{Z_i} + r_i \mu_{Z_i})\right) \\
	&= (T_{{Z}} + \sigma_n^2)^{-1} T_{{Z}} f_0^\gamma + \frac{1}{n}\sum_i r_i (T_{{Z}} + \sigma_n^2)^{-1} \mu_{Z_i}
\end{align*}
And similarly:
\begin{align*}
	\mathbb{E}_{\bm{\varepsilon}} (T_{\hat{{Z}}} + \sigma_n^2)^{-1} g_{\hat{{Z}}} &= (T_{\hat{{Z}}} + \sigma_n^2)^{-1} \left(\frac{1}{n} \sum_i (\langle \mu_{\widehat Z_i}, f_0^\gamma \rangle \mu_{\widehat{Z}_i} + r_i \mu_{\widehat{Z}_i})\right) \\
	&= (T_{\hat{{Z}}} + \sigma_n^2)^{-1} T_{\hat{{Z}}} f_0^\gamma + \frac{1}{n}\sum_i (r_i + d_i) (T_{\hat{{Z}}} + \sigma_n^2)^{-1} \mu_{\widehat{Z}_i}
\end{align*}
where $d_i = \langle \mu_{Z_i} - \mu_{\widehat{Z}_i}, f_0^\gamma \rangle$. 

Therefore
\begin{align*}
	|E_1(s)| &=  \left|\langle E_1, K_s \rangle\right| \le \sqrt{\kappa} \norm{E_1}_\mathcal{H} \\
	&\le \sqrt{\kappa} \mathbb E \norm{(T_{{Z}} + \sigma_n^2)^{-1} T_{{Z}} - (T_{\hat{{Z}}} + \sigma_n^2)^{-1} T_{\hat{{Z}}}}_{\mathcal{L(H)}} \norm{f_0^\gamma}_\mathcal{H} \\
	&\hspace{2em}+ \sqrt{\kappa} \norm{\frac{1}{n}\sum_i d_i (T_{\hat{{Z}}} + \sigma_n^2)^{-1} \mu_{\widehat{Z}_i}}_\mathcal{H} + \sqrt{\kappa} \norm{\frac{1}{n}\sum_i r_i (T_{{Z}} + \sigma_n^2)^{-1} (\mu_{Z_i} - \mu_{\widehat{Z}_i})}_\mathcal{H} \\
	&\hspace{2em}+ \sqrt{\kappa} \norm{\frac{1}{n}\sum_i r_i ((T_{{Z}} + \sigma_n^2)^{-1} - (T_{\hat{{Z}}} + \sigma_n^2)^{-1}) \mu_{\widehat{Z}_i}}_\mathcal{H} \\ 
	&\leq \left(\frac{2 C_\alpha \sqrt{\kappa}}{\sigma_n^2} \norm{T_{{Z}} - T_{\hat{{Z}}}}_{\mathcal{L(H)}} + \frac{2C_\alpha\kappa}{\sigma_n^2 n}\sum_i \norm{\mu_{Z_i} - \mu_{\widehat{Z_i}}}_\mathcal{H} \right) n^{\frac{(2\alpha-2\beta + d) \gamma}{2\beta}} \\
	&\hspace{2em}+ \left(\frac{\sqrt{\kappa}}{\sigma_n^2 n} \sum_i \norm{\mu_{Z_i} - \mu_{\widehat{Z_i}}}_\mathcal{H} + \frac{\kappa}{\sigma_n^4} \norm{T_{{Z}} - T_{\hat{{Z}}}}_{\mathcal{L(H)}}\right) n^{-\gamma}
\end{align*}

Using the bounds \blue{from Lemmas \ref{lem:kme} and \ref{lem:tz},} we have:
\begin{align*}
	\left| \mathbb{E} \int_{[0,1]^d} 2f_0(s) E_1(s) ds\right| \le \frac{16C_0 C_\alpha \kappa^{3/2}}{\sigma_n^2 \sqrt{m}} n^{\frac{(2\alpha-2\beta + d) \gamma}{2\beta}} + \frac{4C_0\kappa}{\sigma_n^2 \sqrt{m}} n^{-\gamma} + \frac{4\sqrt{2} \kappa^2}{\sigma_n^4 \sqrt{m}} n^{-\gamma}
\end{align*}
%\pink{AD: Has $C_0$ been defined?} \note{
where $C_0 \geq 1$ is the upper bound of $|f_0|$ which must exists, because we assume $f_0$ regular (hence continuous) in a compact set.
The rate can be improved when $f_0 \in \mathcal{H}_K$ ($\beta \ge \alpha + d/2$), in which case $f_0^\gamma$ can directly be $f_0$, making $r_i = 0$ and $\norm{f_0^\gamma}$ constant. We have:
\begin{align*}
	\left| \mathbb{E} \int_{[0,1]^d} 2f_0(s) E_1(s) ds\right| \le \frac{16C_0^2 C_\alpha \kappa^{3/2}}{\sigma_n^2 \sqrt{m}}
\end{align*}

\subsubsection[step 3: bound E2]{Step 3: Bound $E_2(s)$}\label{sec:e2}

Using the same notation as in Step 2. In the situation $f_0 \not\in \mathcal{H}_K$, we denote:
\begin{align*}
	A &= (T_{{Z}} + \sigma_n^2)^{-1}T_{{Z}} f_0^\gamma, \hspace{1em} B = \frac{1}{n} \sum_i \varepsilon_i (T_{{Z}} + \sigma_n^2)^{-1} \mu_{Z_i}, \hspace{1em} C = \frac{1}{n} \sum_i r_i (T_{{Z}} + \sigma_n^2)^{-1} \mu_{Z_i} \\
	\hat{A} &= (T_{\hat{{Z}}} + \sigma_n^2)^{-1}T_{\hat{{Z}}} f_0^\gamma, \hspace{1em}\hat{B} = \frac{1}{n} \sum_i \varepsilon_i (T_{\hat{{Z}}} + \sigma_n^2)^{-1} \mu_{\widehat{Z}_i} , \hspace{1em} \hat{C} = \frac{1}{n} \sum_i (r_i + d_i) (T_{\hat{{Z}}} + \sigma_n^2)^{-1} \mu_{\widehat{Z}_i}
\end{align*}
Then $E_2 = \mathbb{E}_{\bm{\varepsilon}} \left[(A+B+C)^2 - (\hat{A} + \hat{B} + \hat{C})^2\right]$. Easy to see $\mathbb{E}_{\bm{\varepsilon}} \left[B(A+C) + \hat{B}(\hat{A} + \hat{C})\right] = 0$. Because $\varepsilon_i$ is mean 0 and independent of any $Z_i$ and $\bx_{ij}$. %\pink{AD: Please add the steps/reason on how you get see this previous equality. Is it simply because $\mathbb E \varepsilon_i = 0$?} 
Therefore, 
\begin{equation*}
	E_2 = (A+C+\hat{A}+\hat{C})(A-\hat{A}+C-\hat{C}) + \mathbb{E}_{\bm{\varepsilon}} \left[B^2 - \hat{B}^2\right]
\end{equation*}
For the first part
\begin{align*}
	&|(A+C+\hat{A}+\hat{C})(A-\hat{A}+C-\hat{C})|(s) = |\langle A+C+\hat{A}+\hat{C}, K_s\rangle \langle A-\hat{A}+C-\hat{C}, K_s\rangle| \\
	&\hspace{6em}\le \kappa \left(\norm{A}_\mathcal{H} + \norm{\hat{A}}_\mathcal{H} + \norm{C}_\mathcal{H} + \norm{\hat{C}}_\mathcal{H}\right) \norm{E_1}_\mathcal{H} \\
	&\hspace{6em}\le \kappa \norm{E_1}_\mathcal{H} \left(2\norm{f_0^\gamma}_\mathcal{H} + \frac{2\sqrt{\kappa}}{\sigma_n^2} n^{-\gamma} + \frac{\sqrt{\kappa}\norm{f_0^\gamma}_\mathcal{H}}{\sigma_n^2 n}\sum_i \norm{\mu_{Z_i} - \mu_{\widehat{Z}_i}}_\mathcal{H} \right)
\end{align*}
%\pink{AD: Please expand the steps for last inequality above. If you are using any of the previous equations, please number and reference them.}
We can use the same bound for $E_1$ and \blue{apply Lemmas \ref{lem:kme} and \ref{lem:tz} to have:}
\begin{align*}
	&\mathbb{E} \left[\frac{\norm{T_{{Z}}-T_{\hat{{Z}}}}_{\mathcal{L(H)}}}{n}\sum_i \norm{\mu_{Z_i} - \mu_{\widehat{Z}_i}}_\mathcal{H}\right] \le \sqrt{\mathbb{E}\norm{T_{{Z}}-T_{\hat{{Z}}}}_{\mathcal{L(H)}}^2 \mathbb{E}\norm{\mu_{Z_i} - \mu_{\widehat{Z}_i}}_\mathcal{H}^2} = O\left(\frac{1}{m}\right)\\
	&\mathbb{E} \left[\left(\frac{1}{n} \sum_i \norm{\mu_{Z_i} - \mu_{\widehat{Z}_i}}_\mathcal{H}\right) \left(\frac{1}{n} \sum_i \norm{\mu_{Z_i} - \mu_{\widehat{Z}_i}}_\mathcal{H} \right)\right] \le \sqrt{\mathbb{E}\norm{\mu_{Z_i} - \mu_{\widehat{Z}_i}}_\mathcal{H}^2 \mathbb{E}\norm{\mu_{Z_i} - \mu_{\widehat{Z}_i}}_\mathcal{H}^2} = O\left(\frac{1}{m}\right)
\end{align*}

For the second part, we have (notice that $(T_{{Z}} + \sigma_n^2)^{-1}$ is Hermitian):
\begin{align*}
	\mathbb{E}_{\bm{\varepsilon}} [B^2] (s) &= \frac{1}{n^2} \sum_i \sigma^2 \langle(T_{{Z}} + \sigma_n^2)^{-1} \mu_{Z_i}, K_s\rangle \langle \mu_{Z_i}, (T_{{Z}} + \sigma_n^2)^{-1} K_s\rangle \\
	&= \frac{\sigma^2}{n} \langle(T_{{Z}} + \sigma_n^2)^{-2} T_{{Z}} K_s, K_s \rangle
\end{align*}
and  
\begin{align*}
	|\mathbb{E}_{\bm{\varepsilon}} [B^2-\hat{B}^2] (s)| &\le \frac{\sigma^2 \kappa}{n} \norm{(T_{{Z}} + \sigma_n^2)^{-2} T_{{Z}} - (T_{\hat{{Z}}} + \sigma_n^2)^{-2} T_{\hat{{Z}}}}_{\mathcal{L(H)}} \\
	&\le \kappa\sigma_n^2 \norm{((T_{{Z}} + \sigma_n^2)^{-1} - (T_{\hat{{Z}}} + \sigma_n^2)^{-1})(T_{{Z}} + \sigma_n^2)^{-1}T_{{Z}}}_{\mathcal{L(H)}} \\
	&\hspace{1em}+ \kappa\sigma_n^2\norm{(T_{{Z}} + \sigma_n^2)^{-1}((T_{{Z}} + \sigma_n^2)^{-1}T_{{Z}} - (T_{\hat{{Z}}} + \sigma_n^2)^{-1}T_{\hat{{Z}}})}_{\mathcal{L(H)}} \\
	&\le \frac{2\kappa}{\sigma_n^2} \norm{T_{{Z}} - T_{\hat{{Z}}}}_{\mathcal{L(H)}}
\end{align*}

Therefore, using bound \blue{from Lemma \ref{lem:tz}} and ignoring the constant term, we have (recall that $\norm{f_0^{\gamma}}_\mathcal{H} = O(n^{(2\alpha-2\beta+d)\gamma / 2\beta}) $):
\begin{align} \label{rn1_bound}
	\left|\mathbb{E} \int_{[0,1]^d} E_2(s) ds\right| &\le C(\alpha,\kappa,f_0) \left( \frac{1}{\sigma_n^2\sqrt{m}} n^{\frac{(2\alpha-2\beta+d)\gamma}{\beta}} + \frac{1}{\sigma_n^4\sqrt{m}}n^{-\gamma + \frac{(2\alpha-2\beta+d)\gamma}{2\beta}}\right. \\
	&\hspace{1em}+ \left. \frac{1}{\sigma^6_n \sqrt{m}} n^{-2\gamma} + \frac{1}{\sigma_n^4 m} n^{\frac{(2\alpha-2\beta+d)\gamma}{\beta}} + \frac{1}{\sigma_n^6 m} n^{-\gamma + \frac{(2\alpha-2\beta+d)\gamma}{2\beta}}\right)
\end{align}
for some constant $C(\alpha,\kappa,f_0)$ depending only on $\alpha, \kappa, f_0$.

Similarly, the rate can be improved if $f_0 \in \mathcal{H}_K$, where $f_0^\gamma$ can directly be $f_0$, making $r_i = 0$ and $\norm{f_0^\gamma}$ constant. We have:
\begin{align} \label{rn1_bound2}
	\left|\mathbb{E} \int_{[0,1]^d} E_2(s) ds\right| &\le \frac{C(\kappa,f_0)}{\sigma^2_n \sqrt{m}}
\end{align}
given $\sigma^2_n \sqrt{m} \to \infty$. Notice again that $\sigma_n^2 = \sigma^2 / n$.

\subsubsection{Combine Together}\label{sec:combine}
\blue{Combining the three bounds from Sections \ref{sec:v} to \ref{sec:e2}, we have established (\ref{eq:r1rate_in}) 
if $f_0 \in \mathcal{H}_K$ ($\beta \ge \alpha + d/2$) and (\ref{eq:r1rate_out}) if $f_0 \not\in \mathcal{H}_K$.}

\subsection[Bound R1 under fixed design ]{Bound $R_n^1$ \blue{under fixed design  (Theorem \ref{th:fixed})}}
Denote $\hat{\mathcal{Z}}$ as the empirical distributional process supported only on $n$ points of $Z_i$. We have:
\begin{align*}
	R_n^1 &= \mathbb{E}_{f_0} \int \norm{f-f_0}_n^2 (d\Pi_n(f |\mathbb{D}_n) - d\Pi_n(f |\mathbb{Z}_n)) \\
	& = \mathbb{E}_{f_0} \int \mathbb{E}_{z \sim \hat{\mathcal{Z}}} [\mathbb{E}_z(f-f_0) \mathbb{E}_z(f-f_0)] (d\Pi_n(f |\mathbb{D}_n) - d\Pi_n(f |\mathbb{Z}_n)) \\
	& = \mathbb{E}_{f_0} \mathbb{E}_{z \sim \hat{\mathcal{Z}}} \int \left[\int (f-f_0)(s)(f-f_0)(t) dz(s) dz(t)\right] (d\Pi_n(f |\mathbb{D}_n) - d\Pi_n(f |\mathbb{Z}_n))\\
	& = \mathbb{E}_{f_0} \mathbb{E}_{z \sim \hat{\mathcal{Z}}} \int (\mathbb{E}_{f|\mathbb{D}_n} - \mathbb{E}_{f|\mathbb{Z}_n}) [f(s)f(t)] - 2f_0(s) (\mathbb{E}_{f|\mathbb{D}_n} - \mathbb{E}_{f|\mathbb{Z}_n}) [f(t)]\  dz(s) dz(t)
\end{align*}
It is not hard to see one can use exactly the same method as in section \ref{sec:morder} to get the same bound as for the term $|(\mathbb{E}_{f|\mathbb{D}_n} - \mathbb{E}_{f|\mathbb{Z}_n}) [f(s)f(t)] - 2f_0(s) (\mathbb{E}_{f|\mathbb{D}_n} - \mathbb{E}_{f|\mathbb{Z}_n}) [f(t)]|$ with every $s,t$ pair, and therefore we can get the same overall bounds as for $\norm{\cdot}_2$. 

\subsection[ Bound R0]{Bound for $R^0_n$}\label{sec:banach}
We would use the method described in \cite{van2011information} to bound $R^0_n$, by extending it to non-parametric GP regression on the space of distributions. %To do that we need to rewrite the model \ref{eq:gpan2} into a typical Gaussian process regression in metric space. Consider:
%\begin{equation} 
%y_i = F_0(Z_i)+ \varepsilon_i
%\end{equation}
%where $Z_i$ is the individual distribution for subject $i$ and $F_0$ is an element in linear space $\mathcal{B}$. such that there exists an bijection $\pi$ from $\mathcal{B}$ to $\mathcal{C}_0$: the bounded continuous function on $[0, 1]^d$ and $\forall F \in \mathcal{B}, Z \in \mathcal{S}: \ F(Z) = \mathbb{E}_Z \pi(F)$. 
\blue{As described in Section \ref{sec:r0}, we write (\ref{eq:gpan2}) as a direct GP regression on distributional inputs \eqref{eq:gpan2dist} with the GP taking value in a linear space $\calB$.}
Naturally we give $\mathcal{B}$ a norm that $\norm{F}_\mathcal{B} = \norm{\pi(F)}_\infty$ and it is clear that $|F(Z)| \le \norm{F}_\mathcal{B}$ for all $Z$. When it is not misleading, we will also denote $\pi(F)$ simply as $f$. Similarly, $f_0$ for $\pi(F_0)$ and any other super-sub-script. 

It is clear that, if $\mathcal{Z}$ weakly separates $\mathcal{C}_0$, $\pi$ would be an isomorphism between $\mathcal{B}$ and $\mathcal{C}_0$. Because $\pi$ would be a bijection with $\norm{F}_\mathcal{B} = \norm{\pi(F)}_\infty$. Hence, $\mathcal{B}$ is a separable Banach space.

%Now consider a kernel $\mathbb{K}$ on set $\mathcal{S}$ such that $\mathbb{K}(Z_1, Z_2) = \langle \mu_{Z_1}, \mu_{Z_2}\rangle_\mathcal{H}$ where $\mu_Z = \int K(\cdot, s) Z(ds)$ is the kernel mean embedding of $Z$. 
%Clearly, \blue{the induced kernel} $\mathbb{K}$ satisfies kernel property. 
Denote the RKHS of $\mathbb{K}$ to be $\mathcal{H}_\mathbb{K}$. Then $\mathcal{H}_\mathbb{K}$ is a subspace of $\mathcal{B}$ when using Mat\'ern kernel for $K$. From Lemma \ref{lem:hk}, we also have $\pi(\mathcal{H}_\mathbb{K}) = \mathcal{H}$, with $\norm{h}_{\mathcal{H}_\mathbb{K}} = \norm{\pi(h)}_\mathcal{H}$. 

\begin{lemma} \label{lem:hk}
	Using the kernel $\mathbb{K}$ and projection $\pi$ defined above, and $\mathcal{H}_\mathbb{K}$ be the RKHS from $\mathbb{K}$. We have $\pi(\mathcal{H}_\mathbb{K}) = \mathcal{H}$ with $\norm{h}_{\mathcal{H}_\mathbb{K}} = \norm{\pi(h)}_\mathcal{H}$ if $\mathcal{Z}$ weakly separates $\mathcal{C}_0$.
\end{lemma}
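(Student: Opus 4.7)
My plan is to exhibit $\pi$ restricted to $\mathcal{H}_{\mathbb{K}}$ as an explicit isometric isomorphism onto $\mathcal{H}$, by first defining it on the canonical generators of the RKHS, extending it by continuity, and then using weak separability to upgrade denseness of the image into equality with $\mathcal{H}$. First I would identify the images of the generators. The RKHS $\mathcal{H}_{\mathbb{K}}$ is the completion of $\mathcal{H}_{\mathbb{K}}^0 := \mathrm{span}\{\mathbb{K}(\cdot,Z) : Z \in \mathcal{S}\}$ under the inner product determined by $\langle \mathbb{K}(\cdot,Z),\mathbb{K}(\cdot,Z')\rangle_{\mathcal{H}_{\mathbb{K}}} = \mathbb{K}(Z,Z')$. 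Viewing $\mathbb{K}(\cdot,Z)$ as an element $F \in \mathcal{B}$, the computation $F(Z') = \mathbb{K}(Z',Z) = \int \mu_Z(s)\,Z'(ds) = \mathbb{E}_{Z'}\mu_Z$ shows that $\pi(\mathbb{K}(\cdot,Z)) = \mu_Z \in \mathcal{H}$. Combined with $\mathbb{K}(Z,Z') = \langle\mu_Z,\mu_{Z'}\rangle_{\mathcal{H}}$ from \eqref{eq:linearkme}, this already gives inner-product preservation on $\mathcal{H}_{\mathbb{K}}^0$.

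Next, I would extend $\pi$ to all of $\mathcal{H}_{\mathbb{K}}$ by continuity. Being isometric on the dense subspace $\mathcal{H}_{\mathbb{K}}^0$, the map extends uniquely to a linear isometry $\mathcal{H}_{\mathbb{K}} \to \overline{\mathrm{span}\{\mu_Z : Z \in \mathcal{S}\}} \subseteq \mathcal{H}$, and the norm identity $\|h\|_{\mathcal{H}_{\mathbb{K}}} = \|\pi(h)\|_{\mathcal{H}}$ comes for free. A routine consistency check is that this extension coincides with the ambient $\pi$ on $\mathcal{B}$; this follows from uniqueness of continuous extension, once one notes that $\|\cdot\|_{\mathcal{H}_{\mathbb{K}}}$-convergence implies pointwise convergence on $\mathcal{S}$ via the reproducing property $|F(Z)| \le \sqrt{\mathbb{K}(Z,Z)}\,\|F\|_{\mathcal{H}_{\mathbb{K}}}$.

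The main obstacle will be surjectivity, i.e., showing $\overline{\mathrm{span}\{\mu_Z : Z \in \mathcal{S}\}} = \mathcal{H}$, and here weak separability enters crucially. Take any $g \in \mathcal{H}$ orthogonal to every $\mu_Z$, $Z \in \mathcal{S}$. Since for the Mat\'ern kernel $K$ of any positive order $\mathcal{H}$ is continuously embedded in $\mathcal{C}_0$, we have $g \in \mathcal{C}_0$. By Fubini together with the reproducing property in $\mathcal{H}$,
\begin{equation*}
0 \;=\; \langle g, \mu_Z\rangle_{\mathcal{H}} \;=\; \int \langle g, K_t\rangle_{\mathcal{H}}\, Z(dt) \;=\; \int g(t)\, Z(dt) \;=\; \mathbb{E}_Z g
\end{equation*}
for every $Z \in \mathcal{S}$, and in particular $\mathcal{Z}$-almost surely. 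Applying weak separability of $\mathcal{Z}$ on $\mathcal{C}_0$ to the pair $(g,0)$ forces $g=0$, so $\{\mu_Z : Z \in \mathcal{S}\}^{\perp} = \{0\}$ in $\mathcal{H}$ and the span is dense. Combined with the isometry from the previous step (whose image is therefore complete in $\mathcal{H}$ and dense, hence all of $\mathcal{H}$), this yields $\pi(\mathcal{H}_{\mathbb{K}}) = \mathcal{H}$ with matching norms, completing the proof.
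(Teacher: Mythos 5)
Your proof is correct and follows essentially the same route as the paper: identify $\pi(\mathbb{K}(\cdot,Z))=\mu_Z$, verify the isometry on the span of the generators via $\mathbb{K}(Z,Z')=\langle\mu_Z,\mu_{Z'}\rangle_{\mathcal H}$, and use weak separability to kill the orthogonal complement of $\mathrm{span}\{\mu_Z : Z\in\mathcal S\}$ in $\mathcal H$. You are in fact somewhat more careful than the paper, which skips the continuity extension and implicitly assumes $\pi(\mathcal H_{\mathbb K})$ is closed when writing the orthogonal decomposition; your observations that the isometric image of a complete space is closed and that $\mathcal H\hookrightarrow\mathcal C_0$ for the Mat\'ern kernel (so weak separability applies to $g$) fill exactly those gaps.
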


\begin{proof}
	First, $\mathcal{H}_\mathbb{K}$ is the Hilbert space spanned by $\mathbb{K}(Z, \cdot) = \langle \mu_Z, \mu_{(\cdot)} \rangle_\mathcal{H}$ and it is easy to see $\pi(\mathbb{K}(Z, \cdot)) = \mu_Z = \int K(s, \cdot) dZ(s)$. Because $\langle \mu_w, g \rangle_{\mathcal{H}} = \mathbb{E}_w f$ for any distribution $w$ and function $g \in \mathcal{H}$. Also $\norm{\mathbb{K}(Z, \cdot)}_{\mathcal{H}_\mathbb{K}} = \mathbb{K}(Z, Z) = \int K(s,t)dZ \otimes dZ = \norm{\pi(\mathbb{K}(Z, \cdot))}_\mathcal{H}$. Since $\pi$ is an isomorphism between $\mathcal{B}$ and $\mathcal{C}_0$. We know that $\pi(\mathcal{H}_\mathbb{K})$ is a subspace in $\mathcal{H}$ spanned by $\{\mu_Z:\ Z\in \mathcal{S}\}$ with $\norm{\pi(h)}_\mathcal{H} = \norm{h}_{\mathcal{H}_\mathbb{K}}$. 
	
	Now decompose $\mathcal{H} = \pi(\mathcal{H}_\mathbb{K}) \oplus \pi(\mathcal{H}_\mathbb{K})^\perp$, for any $f \in \pi(\mathcal{H}_\mathbb{K})^\perp$ we have:
	$$\forall Z\in \mathcal{S}:\ \langle \mu_Z, f \rangle_\mathcal{H} = \mathbb{E}_Z f = 0$$
	but by our assumption on the richness of $\mathcal{S}$, this can happen only when $f = 0$. Therefore, $\pi(\mathcal{H}_\mathbb{K})^\perp = \bm{0}$ hence $\mathcal{H} = \pi(\mathcal{H}_\mathbb{K})$.
\end{proof}

Our model \ref{eq:gpan2} with known individual distributions is equivalent to Gaussian process regression:
\begin{align}
	&F \sim GP(0, \mathbb{K}) \\
	&y_i \sim \mathcal{N}(F(Z_i), \sigma^2)
\end{align}

%In the sense that from two models $\pi^{-1}(f) | \mathbb{Z}_n$ is the same distribution as $F | \mathbb{Z}_n$. 
It is not hard to see that $f \sim GP(0, K) \Rightarrow \pi^{-1}(f) \sim GP(0, \mathbb{K})$. And by definition $\mathbb{E}_{Z_i} f = \pi^{-1}(f)(Z_i)$ for all $Z_i$. Therefore, from the uniqueness of Gaussian process we know $\pi^{-1}(f) | \mathbb{Z}_n \sim F | \mathbb{Z}_n$.

Now, define square operator $\mathcal{B} \to \mathcal{B}$ as $F^2 = \pi^{-1}(\pi(F)^2)$, where square in $\mathcal{C}_0$ is the typical point-wise square. Since the square of a bounded continuous function is still bounded continuous, this square operator in $\mathcal{B}$ is well-defined. And obviously $F^2(Z) = \mathbb{E}_Z \pi(F)^2$. Now define $L_2$ norm in $\mathcal{B}$ as $\norm{F}^2_2 = F^2(\mu)$, where $\mu$ is the mean measure of $Z_i$ (assumed to be Unif(0, 1)). It is clear that $\norm{F}_2 \le \norm{F}_\mathcal{B}$. Similarly, define empirical norm in $\mathcal{B}$ as $\norm{F}_n^2 = \frac{1}{n}\sum_i^n (F(Z_i))^2$. It is not had to see risk $R_n^0$ agrees for both models under $\norm{\cdot}_2$ and $\norm{\cdot}_n$. For example, for $\norm{\cdot}_2$:
\begin{align*}
	R_n^0 &= \mathbb{E}_{f_0} \int \norm{f-f_0}_2^2 d \Pi_n(f|\mathbb{Z}_n) = \mathbb{E}_{f_0} \int \mathbb{E}_{\mu} (f-f_0)^2 d \Pi_n(f|\mathbb{Z}_n) \\
	&= \mathbb{E}_{f_0} \int \norm{\pi^{-1}(f) - F_0}_2^2 d \Pi_n(f|\mathbb{Z}_n) = \mathbb{E}_{f_0} \int \norm{F - F_0}_2^2 d \Pi_n(F|\mathbb{Z}_n)
\end{align*}
\blue{and similarly for $\norm{\cdot}_n$. We now establish the bounds for fixed design and random design.}

\noindent \blue{\textbf{Fixed design (Theorem \ref{th:fixed}):}} Using Theorem 1 in \cite{van2011information} we can bound the risk term $R_n^0$ in fixed design situation with $\Psi^{-1}_{F_0}(n)^2$, where $F\sim GP(0, \mathbb{K})$ and:
\begin{equation} \label{eq:control}
\varepsilon^2 \Psi_{F_0}(\varepsilon) = \inf_{h \in \mathcal{H}_\mathbb{K}: ||h-F_0||_\mathcal{B} < \varepsilon} ||h||^2_{\mathcal{H}_\mathbb{K}} - \log \mathbb{P}(||F||_\mathcal{B} < \varepsilon)
\end{equation}
Recall the definition of $\mathcal{H}_\mathbb{K}$ and $\mathcal{B}$, and consider $f \sim GP(0, K)$ we have:
\begin{align*}
	\mathbb{P}(\norm{f}_\infty \le \varepsilon) &= \mathbb{P}\left(\norm{\pi^{-1}(f)}_\mathcal{B} \le \varepsilon\right) = \mathbb{P}(||F||_\mathcal{B} < \varepsilon) \\
	\inf_{h \in \mathcal{H}: ||h-f_0||_\infty < \varepsilon} ||h||^2_{\mathcal{H}} &= \inf_{h \in \mathcal{H}: ||\pi^{-1}(h)-F_0||_\mathcal{B} < \varepsilon} ||\pi^{-1}(h)||^2_{\mathcal{H}_\mathbb{K}} = \inf_{h \in \mathcal{H}_\mathbb{K}: ||h-F_0||_\mathcal{B} < \varepsilon} ||h||^2_{\mathcal{H}_\mathbb{K}}
\end{align*}
Therefore $\Psi_{F_0}(\varepsilon) = \Psi_{f_0}(\varepsilon)$ with:
\begin{equation} \label{psi_f0}
\varepsilon^2 \Psi_{f_0}(\varepsilon) = \inf_{h \in \mathcal{H}: ||h-f_0||_\infty < \varepsilon} ||h||^2_{\mathcal{H}} - \log \mathbb{P}(||f||_\infty < \varepsilon)
\end{equation}
which means we can get the same rates as in Theorem 5 of \cite{van2011information}, which is $n^{-2\min(\alpha, \beta)/(2\alpha + d)}$.

\noindent \blue{\textbf{Random design (Theorem \ref{th:random}):}} 
 Strong separation is needed for the estimation bound because empirical norm $\norm{f}^2_n = \frac{1}{n}\sum_i (\mathbb{E}_{Z_i} f)^2$ converges to $\mathbb{E}_{Z\sim \mathcal{Z}}[(\mathbb{E}_Z f)^2]$, not to $\norm{f}^2_2$. Introducing a gap between the empirical bound and estimation bound. But if $\mathcal{Z}$ strongly separates $\pi(\mathcal{B})$ with constant $C$. Then $\norm{f}^2_2$ becomes equivalent norm with \blue{$\|f\|_{\mathcal Z}^2 := \mathbb{E}_{Z\sim \mathcal{Z}}[(\mathbb{E}_Z f)^2]$} and the proof for Theorem 2 in \cite{van2011information} can be continued by observing:
\begin{align*}
	&P(\norm{f-f_\varepsilon}_2 \ge 2C\norm{f-f_\varepsilon}_n) \le P(\norm{f-f_\varepsilon}_\mathcal{Z} \ge 2\norm{f-f_\varepsilon}_n) \\
	&\hspace{3em} \le e^{-(n/5) \norm{f - f_\varepsilon}_\mathcal{Z}^2 / \norm{f - f_\varepsilon}_\infty^2} \le e^{-(n/5C^2) \norm{f - f_\varepsilon}_2^2 / \norm{f - f_\varepsilon}_\infty^2}
\end{align*}
for any $f$ and $f_\varepsilon$. In such case we can get the same rate as $n^{-2\min(\alpha, \beta)/(2\alpha + d)}$, given $\min(\alpha, \beta) > d / 2$.

\subsection{\blue{Proof for dependent samples (Theorem \ref{th:dep})}}\label{sec:pfdep}
\begin{proof}[\blue{Proof of Theorem \ref{th:dep}}] \blue{We note that in the proof of the theorems for the i.i.d. case the samples $\bx_{ij}$ only play a part in the excess risk term $R_n^0$ of (\ref{eq:decomp}), and the error bounds of $R_n^0$ depend on $\bx_{ij}$ exclusively through the result in Lemma \ref{lem:kme} on the distances between the mean embeddings between the true and observed distributions. So it is enough to show that the bound in Lemma \ref{lem:kme} for this dependent setup.

Following the decomposition of $\norm{\mu_{\widehat{Z}_i} - \mu_{Z_i}}_\mathcal{H}^2$ provided in (\ref{eq:cross}), the square terms $\frac 1{m^2} \sum_{j=1}^m \norm{K_{\bx_{ij}} - \mu_{Z_i}}_\mathcal{H}^2$ still has the same distribution as in the i.i.d. case as $\bx_{ij}$ are identically distributed and thus its expectation has the same $O(1/m)$ bound derived in (\ref{eq:kmebound}). 

In the i.i.d case the cross-terms $\frac 1{m^2} \sum_{j \neq j'} <K_{\bx_{ij}} - \mu_{Z_i},K_{\bx_{ij'}} - \mu_{Z_i}>_{\mathcal H}$ had $0$ expectation and this will not be the case  under assumed dependence of the samples. Instead we will show that the expectation of the sum of the cross-terms will also be $O(1/m)$ under the restriction on the $\beta$-mixing coefficients (uniform geometric decay). 

Consider an infinite sequence of sample $\bx_{i1}, \bx_{i2}, \ldots$ that are identically distributed draws from $Z_i$ with geometrically $\beta$-mixing coefficients $\beta_i(k)$. Let $Z_{i,jj'}$ denote the joint distribution of $\bx_{ij}$ and $\bx_{ij'}$.
Then we have:
\begin{equation}\label{eq:crossdep}
\begin{split}
   & \mathbb E_{\bx_{ij},\bx_{ij'}}  |<K_{\bx_{ij}} - \mu_{Z_i},K_{\bx_{ij'}} - \mu_{Z_i}>_{\mathcal H}| \\
    &\quad = | \int_{\bx} \int_{\bx'} K(\bx,\bx') Z_{i,jj'}(d(\bx\bx')) - \int_{\bx} \int_{\bx'} K(\bx,\bx')  Z_i(d\bx) Z_i(d\bx')|.
\end{split}
\end{equation}

The Mat\'ern kernel $K$ is a continuous positive function on $\calX \times \calX$, and thus measurable with respect to the Borel $\sigma$-algebra $\mathbb B(\calX \times \calX)$. It is also bounded by $\kappa$. So it can be approximated by bounded simple (step) functions $K_r \uparrow K$
pointwise. We can write $K_r(\bx,\bx') = \sum_{k=1}^{L_r} \sum_{k'=1}^{L'_r} c_{kk',r} I(\bx \in A_{k,r})I(\bx' \in B_{k',r})$ where $A_{k,r}, B_{k',r} \in \mathbb B(\calX)$ for all $k,k',r$, $\{A_{k,r}\}_{k=1,2,\ldots,L_r}$ denotes a partition of $\calX$, $\{B_{k',r}\}_{k'=1,2,\ldots,L'_r}$ denotes another partition of $\calX$, and $0 \leq c_{kk',r} \leq \kappa$. There are two levels of approximation involved in going from $K$ to $K_r$, first the approximation of $K$ in terms of increasing simple functions on a partition of $\mathbb B(\calX \times \calX)$, then approximating a Borel set in $\mathbb B(\calX \times \calX)$ as union of rectangles $A_{k,r} \times B_{k',r}$ in $\mathbb B(\calX) \times \mathbb B(\calX)$.

Then  we have 
\begin{equation*}
\begin{split}
    & | \int_{\bx} \int_{\bx'} K_r(\bx,\bx') Z_{i,jj'}(d(\bx\bx')) - \int_{\bx} \int_{\bx'} K_r(\bx,\bx')  Z_i(d\bx) Z_i(d\bx')|\\
    & \quad \leq \sum_{k} c_{kk',r} |P(\bx_{ij} \in A_{k,r},\bx_{ij'} \in B_{k',r}) - P(\bx_{ij} \in A_{k,r})P(\bx_{ij'} \in B_{k',r})| \\
    & \quad \leq \kappa \beta_i(|j-j'|)
\end{split}
\end{equation*}
from the definition of $\beta$-mixing coefficients (see, e.g., Eq (1.5) of \cite{bradley}). As $K_r \to K$ using monotone convergence theorem we have $$\int_{\bx} \int_{\bx'} K_r(\bx,\bx') Z_{i,jj'}(d(\bx\bx')) \to \int_{\bx} \int_{\bx'} K(\bx,\bx') Z_{i,jj'}(d(\bx\bx')), \mbox{ and }$$
$$  \int_{\bx} \int_{\bx'} K_r(\bx,\bx')  Z_i(d\bx) Z_i(d\bx') \to  \int_{\bx} \int_{\bx'} K(\bx,\bx')  Z_i(d\bx) Z_i(d\bx').$$
Returning to (\ref{eq:crossdep}), we then have 
\begin{align*}
    \mathbb E_{\bx_{ij},\bx_{ij'}}  |<K_{\bx_{ij}} - \mu_{Z_i},K_{\bx_{ij'}} - \mu_{Z_i}>_{\mathcal H}| \leq \kappa \beta_i(|j - j'|).
\end{align*}
Consequently, letting $\mathbb N$ denote the set of natural numbers and $\psi = \sup_i \sum_{k=1} \beta_i(k)$ we have 
\begin{align*}
    & \frac 1{m^2} \sum_{j \neq j'} \mathbb E_{\bx_{ij},\bx_{ij'}}  |<K_{\bx_{ij}} - \mu_{Z_i},K_{\bx_{ij'}} - \mu_{Z_i}>_{\mathcal H}| \\
    & \quad \leq \frac \kappa {m^2}\sum_{j=1}^m \sum_{j' \in \mathbb N \setminus \{j\}} 
    \beta_i(|j - j'|)\\
    & \quad \leq \frac {\kappa} {m^2}\sum_{j=1}^m \psi \\
    & \quad \leq \frac {\psi \kappa} m.
\end{align*}

% of $\calB(\calX \times \calX)$ the Borel $\sigma$-algebra on $\calX \times \calX$. 
}

% $\leq C_i \psi_i^k$ for $k=1,2,\ldots$ for some $C_ > 0$ and $\psi$

\end{proof}

\end{document}